\newcommand\footnoteref[1]{\protected@xdef\@thefnmark{\ref{#1}}\@footnotemark}
\setlist[itemize]{leftmargin=1em,topsep=0.2em,noitemsep}
\setlist[enumerate]{leftmargin=1.4em,topsep=0.2em,noitemsep,labelsep=0.3em}
\newtheorem{theorem}{Theorem}[section]
\newtheorem{lemma}[theorem]{Lemma}
\newtheorem{corollary}[theorem]{Corollary}
\newtheorem{observation}[theorem]{Observation}
\theoremstyle{definition}
\newtheorem{definition}[theorem]{Definition}
\newtheorem{fact}[theorem]{Fact}
\newcommand{\termasm}[1]{\mathcal{A}_{\Box}[{#1}]}
\newcommand{\pathassembly}[1]{\mathrm{asm}{(#1)}}
\newcommand{\asm}[1]{\pathassembly{#1}}
\newcommand{\Z}{\mathbb{Z}}
\newcommand{\N}{\mathbb{N}}
\newcommand{\resp}{respectively\xspace}
\newcommand{\vect}{\protect\overrightarrow}
\newcommand{\dom}[1]{{\rm dom}(#1)}
\newcommand{\prodasm}[1]{\mathcal{A}[{#1}]}
\newcommand{\calT}{\mathcal{T}}
\newcommand{\prodT}{\prodasm{\mathcal{T}}}
\newcommand{\prodpaths}[1]{{\bf{P}}[{#1}]}
\newcommand{\prodpathsT}{\prodpaths{\mathcal{T}}}
\newcommand{\pos}[1]{\mathrm{pos}(#1)}
\newcommand\type[1]{\mathrm{type}(#1)}
\newcommand{\para}{k}
\newcommand{\parb}{n}
\newcommand{\series}{h}
\newcommand{\step}{i} %étapes de la récurrence
\newcommand{\inds}{i'} %étapes de la récurrence
\newcommand{\settiles}{\mathcal{T}}
\newcommand{\glueset}{\mathcal{T}^{(\para,\parb)}}
\newcommand{\glueg}{\mathsf{g}}
\newcommand{\gluey}{\mathsf{o}}
\newcommand{\glueb}{\mathsf{b}}
\newcommand{\gluep}{\mathsf{r}}
\newcommand{\tiles}{s}
\newcommand{\tileg}{g}
\newcommand{\tiley}{o}
\newcommand{\tileb}{b}
\newcommand{\tilep}{r}
\newcommand{\gluesets}{\mathcal{S}}
\newcommand{\gluesetg}{\mathcal{G}}
\newcommand{\gluesety}{\mathcal{O}}
\newcommand{\gluesetb}{\mathcal{B}}
\newcommand{\gluesetp}{\mathcal{R}}
\newcommand{\tg}{G}
\newcommand{\ty}{O}
\newcommand{\tb}{B}
\newcommand{\tp}{R}
\newcommand{\patha}{P}
\newcommand{\pathaa}{P'}
\newcommand{\dead}[1]{D^{#1}}
\newcommand{\deadd}[2]{D'^{(#1,#2)}}
\newcommand{\pathg}{G}
\newcommand{\pathy}[1]{O^{#1}}
\newcommand{\pathb}[1]{B^{#1}}
\newcommand{\pathpurple}{R}
\newcommand{\pathp}[1]{R^{(#1,#1)}}
\newcommand{\pathpp}[2]{R^{(#2,#1)}}
\newcommand{\paths}[2]{S^{(#1,#2)}}
\newcommand{\pathl}{L}
\newcommand{\vu}[2]{\overrightarrow{v}^{(#1,#2)}}
\newcommand{\tas}{T^{(\para,\parb)}} 
\newcommand{\tasr}[1]{T^{(#1,\parb)}}
\definecolor{xgreen}{RGB}{0,128,0}
\definecolor{xblue}{RGB}{0,204,255}
\definecolor{xorange}{RGB}{255,165,0}
\definecolor{xred}{RGB}{255,0,0}
\newcommand{\draws}[2]{\draw[fill=white] (#1.16,#2.16) rectangle (#1.84,#2.84);}
\newcommand{\drawg}[2]{\draw[fill=xgreen!50!white] (#1.16,#2.16) rectangle (#1.84,#2.84);}
\newcommand{\drawo}[2]{\draw[fill=xorange!50!white] (#1.16,#2.16) rectangle (#1.84,#2.84);}
\newcommand{\drawb}[2]{\draw[fill=xblue!50!white] (#1.16,#2.16) rectangle (#1.84,#2.84);}
\newcommand{\drawr}[2]{\draw[fill=xred!50!white] (#1.16,#2.16) rectangle (#1.84,#2.84);}
\title{Non-cooperatively assembling large structures: \\a $2D$ pumping lemma cannot be as powerful as its $1D$ counterpart.
\thanks{
Research supported by European Research Council (ERC) under the European Union's Horizon 2020 
research and innovation programme (grant agreement No 772766, Active-DNA project), and Science 
Foundation Ireland (SFI) under Grant number 18/ERCS/5746.
}
}
\author{Pierre-\'Etienne Meunier\\
  Inria  \\
\href{mailto:pierre-etienne.meunier@inria.fr}{pierre-etienne.meunier@inria.fr}
\and
Damien Regnault\\
IBISC, Univ Évry, Université Paris-Saclay,\\
91025, Evry, France.\\
\href{mailto:damien.regnault@univ-evry.fr}{damien.regnault@univ-evry.fr}
}
\date{}
\begin{document}
\numberwithin{figure}{section}
\maketitle

\begin{abstract}
We show the first asymptotically efficient constructions in the so-called "noncooperative planar tile assembly" model.

Algorithmic self-assembly is the study of the local, distributed, asynchronous algorithms ran by molecules to self-organise, in particular during crystal growth. The general \emph{cooperative} model, also called \emph{"temperature 2"}, uses \emph{synchronisation} to simulate Turing machines, build shapes using the smallest possible amount of tile types, and other algorithmic tasks.
However, in the non-cooperative ("temperature 1") model, the growth process is entirely asynchronous, and mostly relies on geometry. Even though the model looks like a generalisation of finite automata to two dimensions, its 3D generalisation is capable of performing arbitrary (Turing) computation [SODA 2011], and of universal simulations [SODA 2014], whereby a single 3D non-cooperative tileset can simulate the dynamics of all possible 3D non-cooperative systems, up to a constant scaling factor.

However, we showed in [STOC 2017] that the original 2D non-cooperative model is not capable of universal simulations, and the question of its computational power is still widely open. Here, we show an unexpected result, namely that this model can reliably grow assemblies of size $\Omega(n \log n)$ with only $n$ tile types, which is the first asymptotically efficient positive construction.

\end{abstract}

\clearpage
\section{Introduction}

Our ability to understand and control matter at the scale of molecules conjures a future where we can engineer our own materials, interact with biological networks to cure their malfunctions, and build molecular computers and nanoscale factories.
The field of molecular computing and molecular self-assembly studies the algorithms run by molecules to exchange information, to self-organise in space, to grow and replicate. Our goal is to build a theory of how they compute, and of how we can program them.

One of the most successful models of algorithmic self assembly is the \emph{abstract tile assembly model}, imagined by Winfree~\cite{Winf98}. In that model, we start with a single seed assembly and a finite number of tile types (with an infinite supply of each type), and attach tiles, one at a time, asynchronously and nondeterministically, to the assembly, based on a condition on their borders' colours.

This model has served to bootstrap the field of molecular computing, which has since produced an impressive number of experimental realisations, from DNA motors~\cite{yurke2000dna} to arbitrary hard-coded shapes at the nanoscale~\cite{RothOrigami}, and cargo-sorting robots~\cite{thubagere2017cargo}.

\newcommand\turinginothermodels{\cite{RotWin00,SolWin07,Versus,OneTile,Cook-2011,Roth01,Patitz-2011,Signals,Hendricks-2014,Fekete2014,gilbert2015continuous,Winf98,Winfree98simulationsof}}

On the theoretical side, the abstract tile assembly model has been used to explore different features of self-organisation in space, especially in an asynchronous fashion. In many variants of the model, tile assembly is capable of simulating Turing machines~\turinginothermodels. More surprisingly, in its original form, the model is \emph{intrinsically universal}~\cite{IUSA}, meaning that there is a single ``universal'' tileset capable of simulating the behaviour of any other tileset (that behaviour is encoded only in the seed of the universal tileset).

In the usual form of the model, a part of the assembly can ``wait'' for another to grow long enough to cooperate. In the non-cooperative model, however, any tile can attach to any location, as long as at least one side matches the colour of that location. Therefore, ``synchronising'' different parts of the assembly is impossible, and the main question becomes, \emph{what kind of computation can we do in a completely asynchronous way?} The answer seems to depend crucially on the space in which the assemblies grow: in one dimension, non-cooperative tile assembly is equivalent to finite automata\footnote{Actually, deterministic tile assembly systems map directly to deterministic finite automata.}, and are therefore not too powerful. In three dimensions though, this model is capable of simulating Turing machines~\cite{Cook-2011}, and even of simulating itself \emph{intrinsically}~\cite{Meunier-2014}. If instead of square tiles, we use tiles that do \emph{not} tile the plane, the situation becomes even more puzzling: tiles whose shape are regular polygons can perform arbitrary computation, but only if they have at least seven sides~\cite{DBLP:conf/soda/GilbertHPR16}. In a similar way, polyomino tiles can also simulate Turing machines, provided that at least one of their dimensions is at least two~\cite{DBLP:conf/soda/FeketeHPRS15}.

However, in two dimensions with regular square tiles, the capabilities of this model remain largely mysterious. All we know is that it cannot simulate the general (cooperative) model up to rescaling~\cite{Meunier-2014}, and cannot simulate itself either~\cite{STOC2017}, but we know very little about its actual computational power. A number of related questions and simpler models have been studied to try and approach this model: a probabilistic assembly schedule~\cite{Cook-2011}, negative glues~\cite{SingleNegative}, no mismatches~\cite{Doty-2011,Manuch-2010}, and different tile shapes~\cite{DBLP:conf/soda/GilbertHPR16,DBLP:conf/soda/FeketeHPRS15}.

Due to the proximity with finite automata, a first intuition is that we can try to ``pump'' parts of an assembly between two tiles of equal type, resulting in infinite, periodic paths, as shown in Figure~\ref{fig:pump}. However, this is not always possible, as shown in Figure~\ref{fig:notpump}, where an attempt to pump would result in a glue mismatch, which would block the growth.
% which has led some authors to consider the ``no mismaches'' variant of the model~\cite{Manuch-2010,Doty-2011}, where any assembly using at least twice the same tile type is pumpable.

\begin{figure}[ht]
  \hspace*{\fill}
  \begin{tikzpicture}[scale=0.5]
    \begin{scope}[shift={(0.5,0.5)}]
      \draw[very thick,xgreen](0,0)--(1,0)--(1,1)--(4,1)--(4,0)--(4,-1)--(5,-1);
    \end{scope}
    \draws{0}{0}
    \draw(0.5,0.5)node{$\sigma$};
    \drawg{1}{0}
    \drawg{1}{1}
    \drawg{2}{1}
    \drawg{3}{1}
    \drawg{4}{1}
    \drawg{4}{0}
    \drawg{4}{-1}
    \drawg{5}{-1}
    \draw[ultra thick,red](2,1)rectangle(3,2);
    \draw[ultra thick,red](5,-1)rectangle(6,0);

    \begin{scope}[shift={(10,0)}]
      \begin{scope}[shift={(0.5,0.5)}]
        \draw[very thick,xgreen](0,0)--(1,0)--(1,1)--(4,1)--(4,0)--(4,-1)--(5,-1);
      \end{scope}
      \draws{0}{0}
      \draw(0.5,0.5)node{$\sigma$};
      \drawg{1}{0}
      \drawg{1}{1}
      \drawg{2}{1}
      \drawg{3}{1}
      \drawg{4}{1}
      \drawg{4}{0}
      \drawg{4}{-1}
      \drawg{5}{-1}

      \begin{scope}[shift={(3,-2)}]
        \begin{scope}[shift={(0.5,0.5)}]
          \draw[very thick, xgreen](2,1)--(4,1)--(4,-1)--(7,-1)--(7,-3)--(8,-3);
          \draw[very thick, xgreen, dashed](8,-3)--(9.2,-3);
        \end{scope}
        \drawg{3}{1}
        \drawg{4}{1}
        \drawg{4}{0}
        \drawg{4}{-1}
        \drawg{5}{-1}
        \draw[ultra thick,red](5,-1)rectangle(6,0);
        \begin{scope}[shift={(3,-2)}]
          \drawg{3}{1}
          \drawg{4}{1}
          \drawg{4}{0}
          \drawg{4}{-1}
          \drawg{5}{-1}
          \draw[ultra thick,red](2,1)rectangle(3,2);
        \end{scope}
      \end{scope}
      
      \draw[ultra thick,red](2,1)rectangle(3,2);
      \draw[ultra thick,red](5,-1)rectangle(6,0);
    \end{scope}
  \end{tikzpicture}
  \hspace*{\fill}
  \caption{Example of ``pumping'' a path. The seed is the white tile, and the two tiles highlighted in red are of the same type. We can try to repeat the part between them infinitely many times.}
  \label{fig:pump}
\end{figure}

\begin{figure}[ht]
  \hspace*{\fill}
  \begin{tikzpicture}[scale=0.5]
    \begin{scope}[shift={(0.5,0.5)}]
      \draw[very thick,xgreen](-1,0)--(1,0)--(2,0)--(2,1)--(1,1)--(1,2)--(4,2)--(4,0)--(5,0)--(5,0.5);
    \end{scope}
    \draws{-1}{0}
    \draw(-0.5,0.5)node{$\sigma$};
    \drawg{0}{0}
    \drawg{1}{0}
    \drawg{2}{0}
    \drawg{2}{1}
    \drawg{1}{1}
    \drawg{1}{2}
    \drawg{2}{2}
    \drawg{3}{2}
    \drawg{4}{2}
    \drawg{4}{1}
    \drawg{4}{0}
    \drawg{5}{0}
      
    \draw[ultra thick,red](5,0)rectangle(6,1);
    \draw[ultra thick,red](2,0)rectangle(3,1);
  %\end{scope}
  \end{tikzpicture}
  \hspace*{\fill}
  \caption{An example path which, unlike the one in Figure~\ref{fig:pump}, cannot be repeated even a single time completely. The seed is the white tile, and the two tiles highlighted in red are of the same type.}
  \label{fig:notpump}
\end{figure}

Before this paper, a single \emph{positive} construction was known, in which for all $\varepsilon$, a tileset $T_\varepsilon$ could build multiple assemblies, all of Manhattan diameter $(2-\varepsilon)|T_\varepsilon|$ (this means in particular that $T_\varepsilon$ cannot build any infinite assembly).
Even though that result was the first example of an \emph{algorithmic} construction, the term ``algorithm'' in that case is to be taken in an extremely weak sense of a program whose running time is larger than its size.
Indeed, the resulting assemblies were only a constant factor bigger than the program size, like in a program where we call the same function twice.

Here, we show a way to build an assembly of width $\Omega(n\log n)$ with only $n$ different tile types, using the two dimensions to build a \emph{``controlled loop''}:

\newcommand\positivethm{
  For all $t\geq 0$, there is a tile assembly system $\mathcal T_t=(T_t,\sigma_t,1)$, where $|\sigma_t|=1$, $|T_t|=t$, and all assemblies $\alpha_t \in \termasm{\mathcal T_t}$ are of width $w_t \in ({t\log_3 t})/5 + O(t)$ and height less than $t$, and contain the same path $P_t$ 
  % assembly $\alpha_{P_t}\in\asm{\mathcal T_t}$, 
  % itself 
  of width $w_t$.
}
\begin{theorem}
  \label{thm:positive}
  \positivethm
\end{theorem}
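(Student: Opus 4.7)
The plan is to build, for each $t$, a tileset that implements a \emph{controlled loop}: a horizontal body segment that is executed several times, separated by small gadgets that decide whether the path keeps going. The factor $\log_3 t$ in $w_t \approx (t\log_3 t)/5$ together with the divisor $5$ strongly suggests a small base-$3$ counter that controls the number of iterations, and a body of constant vertical extent (about five rows) whose horizontal width is $\Theta(t/5)$. The counter will be tuned so that exactly $\Theta(\log_3 t)$ successful passes of the body occur before it ``overflows'' and kills further growth, and the successive body copies will be laid end-to-end horizontally to accumulate to the claimed width.

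Concretely, I would partition the $t$ tile types into (i) a small \emph{counter} using $O(\log_3 t)$ tile types that encode a few base-$3$ digits together with carry, entry, and exit glues, and (ii) a \emph{body strip} using the remaining $t - O(\log_3 t) \approx t$ tile types, laid out as a thin horizontal band a few rows tall (with about five tile types per column, so the body has horizontal width $\approx t/5$). The seed $\sigma_t$ sits at the bottom left. Starting from $\sigma_t$, the path grows rightwards through the first body copy; upon reaching its right edge, it enters the counter column, propagates a carry up to the first digit that does not overflow, turns around, descends back to the baseline at a horizontally shifted position, and begins the next body copy. When the counter's top digit overflows, the glue emitted by the carry is \emph{dead} and growth halts.

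To verify the parameters I would count: the counter allows $\Theta(\log_3 t)$ successful body passes before overflow, and arranging these end to end horizontally gives total width $\Theta(\log_3 t) \cdot \Theta(t/5) + O(t) = (t\log_3 t)/5 + O(t)$. The height is at most the body height plus the counter height, both $O(\log_3 t)$ and in particular $< t$. To show that every terminal assembly $\alpha_t \in \termasm{\mathcal T_t}$ contains the same $P_t$, I would design the glues so that (a) every outward-facing glue of a path tile is \emph{dead} (matched by no input glue of any tile in $T_t$), ruling out spurious side-attachments that could block further growth of $P_t$, and (b) at every frontier of the path, exactly one tile of $T_t$ can legally attach. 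Together, (a) and (b) make $P_t$'s growth forced regardless of scheduling.

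The main obstacle is precisely this uniqueness/determinism argument in the non-cooperative regime. Because a tile binds using only a single glue match, I must simultaneously prevent unintended side growth \emph{and} rule out ambiguity in the forward direction of the path — two distinct tiles sharing an input glue could branch the path in different ways, yielding terminal assemblies that omit $P_t$. Handling this requires encoding enough context into every glue (position along the body, current counter digit, direction of travel) and then verifying that the total number of distinct contexts fits within $t$ tile types. The quantitative core of the proof is the resulting accounting, which must simultaneously yield body width $\approx t/5$, a counter depth controlling $\approx \log_3 t$ passes, and the stated $w_t = (t\log_3 t)/5 + O(t)$ with height less than $t$.
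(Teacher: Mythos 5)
Your high-level goal is the same as the paper's --- build a ``controlled loop'' whose horizontal body is traversed $\Theta(\log_3 t)$ times, accumulating width $\Theta(t\log_3 t)$ --- but the mechanism you propose for controlling the loop does not work at temperature~$1$, and the gap is precisely where the real difficulty lies.

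The core problem is your base-$3$ counter with carry propagation. Incrementing a positional counter requires reading the current value of a digit and writing its successor, which is a cooperative (two-input) operation; at temperature~$1$ a tile attaches off a \emph{single} glue, so whatever ``decides'' the next tile is the one incoming glue, not a pair of (digit, carry) signals. To make a counter decision-free at temperature~$1$ you would have to bake the entire (position, digit vector, direction) context into every glue, but then the body tiles can no longer be reused across iterations: you would need roughly $\Theta(t/5)\cdot\Theta(\log_3 t)$ distinct contexts, which blows the budget of $t$ tile types. You identify this tension in your last paragraph, but you never resolve it, and the resolution is the construction.

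What the paper actually does is replace the counter by \emph{geometry}: a self-similar, recursive path governed by the sequence $\series_0=2$, $\series_1=4$, $\series_\step=3\series_{\step-1}-\series_{\step-2}$. One vertical path (orange) is traversed upward, with east branch glues only at heights $\series_1-3,\ldots,\series_\para-3$; a matching vertical path (red) is traversed downward, with east branch glues only at heights $\series_j-\series_{j-1}-3$; between them sits the horizontal body (green and blue), which \emph{is} reused verbatim at each level. ``Which iteration you are on'' is encoded not in a digit but in \emph{which prefix of the vertical paths is being retraversed}, and the recurrence for $\series_\step$ is tuned precisely so that the translated copy of the whole level-$(\step-1)$ structure fits in the strip $[\series_{\step-1}, 2\series_{\step-1}-1]$, strictly inside level $\step$'s strip $[0,\series_\step-1]$. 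Your requirement (a) --- that every side glue of the intended path be dead --- is also not how the paper argues, and would in fact conflict with tile reuse, since a reused tile type cannot have the same side ``inward'' in one occurrence and ``dead'' in another. The paper instead permits spurious side branches and proves by a geometric case analysis (Facts~\ref{lem:prefix1}--\ref{lem:prefix3} and Lemma~\ref{lem:alwayshere}) that every such branch terminates in a dead end that is strictly north or south of the target path $\pathl^\para$, so no competing path can ever place a conflicting tile on $\pathl^\para$. That nesting argument, with the $\series_\step$ recurrence providing the room for it, is the missing idea in your sketch; without it, the quantitative accounting you gesture at in your final paragraph cannot close.
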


However, there are strong reasons to believe that 2D noncooperative tile assembly is not capable of performing Turing computation, since it is in particular not capable of simulating Turing machines inside a rectangle~\cite{STOC2017}, which is the only known form of Turing computation in tile assembly.

This result is therefore not meant as a first step towards ``full-featured algorithm'', but will be useful as a benchmark against which strategies to characterise the computational power of this model can be evaluated.

\section{Definitions and preliminaries}\label{sec:defs}
\label{definitions}

These definitions are for a large part taken from~\cite{STOC2017}. %Let $\Z$ be the integers, $\mathbb{Z}^+ = \{1,2,3,\ldots \}$ and $\N = \{0,1,2,3,\ldots \}$.

%When referring to the relative placements of positions in the grid graph of $\mathbb{Z}^2$, we say that a position $(x, y)$ is \emph{east of} (\resp, \emph{west of}, \emph{north}, \emph{south}) of another position $(x', y')$ if $x \geq x'$ (\resp $x \leq x'$, $y\geq y'$, $y\leq y'$).

%Moreover, unless stated otherwise, vectors of $\Z^2$ and $\R^2$ are \emph{row vectors}, i.e. $\vect{u} = (x_u,y_u)$. %\left(\begin{array}{c}x_u\\y_u\end{array}\right)$.

\subsection{Abstract tile assembly model}\label{sec:atam}

The abstract tile assembly model was introduced by Winfree~\cite{Winf98}. In this paper we study a restriction of this model called the temperature 1 abstract tile assembly model, or noncooperative abstract tile assembly model. For a mode detailed definition of the full model, as well as intuitive explanations, see for example~\cite{RotWin00,Roth01}.

A \emph{tile type} is a unit square with four sides, each consisting of a glue \emph{type} and a nonnegative integer \emph{strength}. Let  $T$  be a a finite set of tile types.
The sides of a tile type are respectively called  north, east, south, and west, as shown  in the following picture:
\begin{center}
\vspace{-1ex}
\begin{tikzpicture}[scale=0.8]
\draw(0,0)rectangle(1,1);
\draw(0,0.5)node[anchor=east]{West};
\draw(1,0.5)node[anchor=west]{East};
\draw(0.5,0)node[anchor=north]{South};
\draw(0.5,1)node[anchor=south]{North};
\end{tikzpicture}
\vspace{-1ex}\end{center}

An \emph{assembly} is a partial function $\alpha:\mathbb{Z}^2\dashrightarrow T$ where $T$ is a set of tile types and the domain of $\alpha$ (denoted $\dom{\alpha}$) is connected.%\footnote{Intuitively, an assembly is a positioning of unit-sized tiles, each from some set of tile types $T$, so that their centers are placed on (some of) the elements of the discrete plane $\mathbb{Z}^2$ and such that those elements of $\mathbb{Z}^2$ form a connected set of points.} 
%A {\em tile} is a  pair $((x,y),t)  \in \mathbb{Z}^2 \times T$ where $(x,y)$ is a {\em position} and $t$ is a tile type. 
%Hence the elements of an assembly are tiles. 
We let $\mathcal{A}^T$ denote the set of all assemblies over the set of tile types $T$. %utile
In this paper, two tile types in an assembly are said to  {\em bind} (or \emph{interact}, or are
\emph{stably attached}), if the glue types on their abutting sides are
equal, and have strength $\geq 1$.  An assembly $\alpha$ induces a
weighted \emph{binding graph} $G_\alpha=(V,E)$, where $V=\dom{\alpha}$, and
there is an edge $\{ a,b \} \in E$ if and only if the tiles at positions $a$ and $b$ interact, and
this edge is weighted by the glue strength of that interaction.  The
assembly is said to be $\tau$-stable if every cut of $G_\alpha$ has weight at
least $\tau$. A \emph{tile assembly system} is a triple $\mathcal{T}=(T,\sigma,\tau)$,
where $T$ is a finite set of tile types, $\sigma$ is a $\tau$-stable assembly called the \emph{seed}, and
$\tau \in \mathbb{N}$ is the \emph{temperature}.
%Throughout this paper,  $\tau=1$. Mettre à la fin ce que cela implique

Given two $\tau$-stable assemblies $\alpha$ and $\beta$, we say that $\alpha$ is a
\emph{subassembly} of $\beta$, and write $\alpha\sqsubseteq\beta$, if
$\dom{\alpha}\subseteq \dom{\beta}$ and for all $p\in \dom{\alpha}$,
$\alpha(p)=\beta(p)$.
We also write
$\alpha\rightarrow_1^{\mathcal{T}}\beta$ if we can obtain $\beta$ from
$\alpha$ by the binding of a single tile type, that is:  $\alpha\sqsubseteq \beta$, $|\dom{\beta}\setminus\dom{\alpha}|=1$ and the tile type at the position $\dom{\beta}\setminus\dom{\alpha}$ stably binds to $\alpha$ at that position.  We say that $\gamma$ is
\emph{producible} from $\alpha$, and write
$\alpha\rightarrow^{\mathcal{T}}\gamma$ if there is a (possibly empty)
sequence $\alpha_1,\alpha_2,\ldots,\alpha_n$ where $n \in \N \cup \{ \infty \} $, $\alpha= \alpha_1$ and $\alpha_n =\gamma$, such that
$\alpha_1\rightarrow_1^{\mathcal{T}}\alpha_2\rightarrow_1^{\mathcal{T}}\ldots\rightarrow_1^{\mathcal{T}}\alpha_n$. 
A sequence of $n\in\mathbb{Z}^+ \cup \{\infty\}$ assemblies
$\alpha_0,\alpha_1,\ldots$ over $\mathcal{A}^T$ is a
\emph{$\mathcal{T}$-assembly sequence} if, for all $1 \leq i < n$,
$\alpha_{i-1} \to_1^\mathcal{T} \alpha_{i}$.

The set of \emph{productions}, or \emph{producible assemblies}, of a tile assembly system $\mathcal{T}=(T,\sigma,\tau)$ is the set of all assemblies producible
from the seed assembly $\sigma$ and is written~$\prodasm{\mathcal{T}}$. An assembly $\alpha$ is called \emph{terminal} if there is no $\beta$ such that $\alpha\rightarrow_1^{\mathcal{T}}\beta$. The set of all terminal assemblies of $\mathcal{T}$ is denoted~$\termasm{\mathcal{T}}$.  

In this paper, we consider that $\tau=1$.  Thus, we make the simplifying assumption that all glue types have strength 0 or 1: it is not difficult to see that this assumption does not change the behavior of the model (if a glue type $g$ has strength $s_g \geq 1$, in the $\tau =1$ model then a tile with glue type $g$ binds to a matching glue type on an assembly border irrespective of the exact value of $s_g$). Consider a assembly $\alpha$ which is producible by a tile assembly system at temperature $1$, since only one glue of strenght $1$ is needed to stably bind a tile type to an assembly then any path of the binding graph of $\alpha$ can grow if it is bind to the seed. Thus at temperature $1$, it is more pertinent to consider path instead of assembly. Now, we introduce definitions which are useful to study temperature $1$.

\subsection{Paths and non-cooperative self-assembly}\label{sec:defs-paths}

 %This definition section introduces quite a number of key definitions and concepts that will be used extensively throughout the paper.
Let $T$ be a set of tile types. A {\em tile} is a  pair $((x,y),t)  \in \mathbb{Z}^2 \times T$ where $(x,y)$ is a {\em position} and $t$ is a tile type. Intuitively, a path is a finite or one-way-infinite simple (non-self-intersecting) sequence of tiles placed on points of $\mathbb{Z}^2$ so that each tile in the sequence interacts with the previous one, or more precisely:

\begin{definition}[Path]\label{def:path}
  A {\em path} is a (finite or infinite) sequence  $P = P_0 P_1 P_2 \ldots$  of tiles   $P_i = ((x_i,y_i),t_i) \in \mathbb{Z}^2 \times T$, such that:
\begin{itemize}
\item for all $P_j$ and $P_{j+1}$ defined on $P$ it is the case that~$t_{j}$ and~$t_{j+1}$ interact, and
\item for all $P_j,P_k$ such that $j\neq k$ it is the case that $ (x_j,y_j) \neq (x_k,y_k)$.
\end{itemize}
\end{definition}

Whenever $P$ is finite, i.e. $P = P_0P_1P_2\ldots P_{n-1}$ for some $n$, $n$ is termed the {\em length} of $P$.
Note that by definition, paths are simple (or self-avoiding). %A \emph{position} of $P$ is an element of $\mathbb{Z}^2$ that appears in $P$ (and therefore appears exactly once), and an \emph{index} $i$ of $P$ is simply an integer in $\{0,1,\ldots,n-1\}$.
%For a path $P = P_0 \ldots  P_i P_{i+1} \ldots P_j  \ldots $, we define the notation $P_{i,i+1,\ldots,j} = P_i P_{i+1} \ldots P_j$, i.e.\  ``the subpath of $P$ between indices $i$ and $j$, inclusive''.

%remplacer par une séquence d'assemblage pour asm(P)
%définir collision
%définir productible comme pas de collision entre seed et P et premier colle peut être mis

Although a path is not an assembly, we know that each adjacent pair of tiles in the path sequence interact implying that the set of path positions forms a connected set in $\Z^2$ and hence every path uniquely represents an assembly containing exactly the tiles of the path. More formally, since an assembly is a function of $TODO$ then it can also interpreted as a subset of tiles and then for a path $P = P_0 P_1 P_2 \ldots$ we define the assembly  $\pathassembly{P} = \{ P_0, P_1, P_2, \ldots\}$ which we observe is  an assembly\footnote{or vice-versa, $\pathassembly{P}$ can be interpreted as a partial function from $\Z^2$ to tile types that is defined on a connected set.} and  we call $\pathassembly{P}$ a {\em path assembly}. 
A {\em path $P$ is said to be producible} by some tile assembly system $\calT = (T,\sigma,1)$ if the assembly $(\asm{P} \cup \sigma ) \in \prodT$ is producible, and we call such a $P$ a {\em producible path}. 
We define $$\prodpathsT = \{ P  \mid P \textrm{ is a path and } (\asm{P}\cup\sigma) \in \prodT \} $$  to be the set of producible paths of $\calT$.\footnote{Intuitively, although  producible paths  are not assemblies, any  producible path $P$ has the nice property that it encodes an unambiguous description of how to grow $\asm{P}$ from the seed $\sigma$, in ($P$) path  order, to produce  the assembly $\sigma \cup \asm{P}$.} 

For any path $P = P_0 P_1 P_2, \ldots$ and integer $i\geq 0$, we write $\pos{P_i} \in \mathbb{Z}^2$, or  $(x_{P_i},y_{P_i}) \in \mathbb{Z}^2$, for the position of $P_i$ and $\type{P_i}$ for the tile type of $P_i$. Hence if  $P_i = ((x_i,y_i),t_i) $ then $\pos{P_i} =  (x_{P_i},y_{P_i}) = (x_i,y_i) $ and $\type{P_i} = t_i$.

Note that, since the domain of a producible assembly is a connected set in $\Z^2$, and since in an assembly sequence of some tile assembly system $\calT = (T,\sigma,1)$ each tile binding event $\beta_i \rightarrow_1^\mathcal{T} \beta_{i+1} $ adds a single node $v$ to the binding graph $G_{\beta_{i}}$ of $\beta_i$  to give a new binding graph $G_{\beta_{i+1}}$, and adds at least one  weight-1 edge joining $v$ to the subgraph $G_{\beta_i} \in G_{\beta_{i+1}}$, then for any tile $((x,y),t) \in \alpha$  in a producible assembly $\alpha \in \prodT$, there is a edge-path (sequence of edges) in the binding graph of $\alpha$ from $\sigma$ to $((x,y),t)$. From there, the following important fact about temperature 1 tile  assembly is straightforward to see.

\begin{observation}
Let $\calT = (T,\sigma,1)$ be a tile assembly system and let $\alpha \in \prodT$. 
For any tile $((x,y),t) \in \alpha$ there is a producible path $P \in \prodpathsT$ that for some $i \in \N$ contains $P_i = ((x,y),t)$.
\end{observation}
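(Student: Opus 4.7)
The plan is to find a simple lattice path in the binding graph of $\alpha$ from the seed to the position $(x,y)$, truncate its portion inside $\sigma$, and read off the resulting tile sequence as a producible path. The case $(x,y) \in \dom{\sigma}$ is trivial: the singleton sequence $P = ((x,y),t)$ is a path (there are no consecutive pairs to check) and $\asm{P} \cup \sigma = \sigma \in \prodT$. So I focus on the main case $(x,y) \notin \dom{\sigma}$.

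First, since $\alpha$ is $1$-stable its binding graph $G_\alpha$ is connected (any disconnection would yield a cut of weight $0$). Hence there is an edge-path from some tile of $\sigma$ to position $(x,y)$ in $G_\alpha$, and by the standard graph-theoretic reduction, a simple such edge-path $v_0, v_1, \ldots, v_\ell = (x,y)$ with $v_0 \in \dom{\sigma}$ and $v_\ell \notin \dom{\sigma}$. Let $j^\ast$ be the largest index for which $v_{j^\ast} \in \dom{\sigma}$; by maximality $j^\ast < \ell$ and the suffix $v_{j^\ast+1}, \ldots, v_\ell$ lies entirely outside $\dom{\sigma}$. Reading off the corresponding tiles of $\alpha$ yields a candidate sequence $P = P_0 P_1 \ldots P_i$ with $P_k = (v_{j^\ast+1+k},\, \alpha(v_{j^\ast+1+k}))$, in which consecutive tiles interact (endpoints of a weight-$1$ edge in $G_\alpha$), all positions are pairwise distinct, and $P_i = ((x,y),t)$. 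This satisfies Definition~\ref{def:path}.

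To show that $P$ is producible, I construct $\asm{P} \cup \sigma$ via the assembly sequence that starts from $\sigma$ and places $P_0, P_1, \ldots$ in order. The first tile $P_0$ attaches stably to $\sigma$ at position $v_{j^\ast}$ via the weight-$1$ edge $\{v_{j^\ast}, v_{j^\ast+1}\}$; for $k \geq 1$, once $P_{k-1}$ is placed, the tile $P_k$ shares a weight-$1$ edge with $P_{k-1}$ and therefore binds stably at $\tau = 1$ (any mismatched glues with other already-placed neighbors contribute no weight but do not obstruct attachment). Each step is a valid binding event, so $\asm{P} \cup \sigma \in \prodT$ and $P \in \prodpathsT$. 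There is no deep obstacle here; the only step needing care is the loop-shortcutting that ensures the extracted sequence is simple in the sense of Definition~\ref{def:path}, which is why I take $j^\ast$ to be the \emph{last} visit of the simple edge-path to $\dom{\sigma}$ rather than the first.
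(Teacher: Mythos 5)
Your proof is correct and follows essentially the same route as the paper, which only sketches this: the paper notes that the binding graph of a producible assembly contains an edge-path from $\sigma$ to any tile (because each attachment adds a weight-$1$ edge) and declares the observation straightforward, exactly the simple-path-extraction-and-reattachment argument you spell out. Your added care (deriving connectivity from $1$-stability, truncating after the \emph{last} visit to $\dom{\sigma}$, and rebuilding the assembly tile by tile at temperature $1$) just fills in the details the paper leaves implicit.
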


%partie conflict
%a mettre dans la partie conflict

When referring to the relative placements of positions in the grid graph of $\mathbb{Z}^2$, we say that a position $(x, y)$ is \emph{east of} (\resp, \emph{west of}, \emph{north}, \emph{south}) of another position $(x', y')$ if $x \geq x'$ (\resp $x \leq x'$, $y\geq y'$, $y\leq y'$).

If two paths, or two assemblies, or a path and an assembly, share a common position we say they {\em intersect} at that position. Furthermore, we say that two paths, or two assemblies, or a path and an assembly,  {\em agree} on a position if they both place the same tile type at that position and {\em conflict} if they place a different tile type at that position. We sometimes say that a path $P$ is {\em blockable} to mean that there is another path $P'$ (producible  by the same tile assembly system as produced $P$) that conflicts with $P$.

The translation of a tile $((x,y),t)$ by a vector $\vect{v}=(x',y')$ of $\mathbb{Z}^2$ is $((x+x',y+y'),t)$ (the type of the tile is not modified while its position is translated by $\vect{u})$. The translation of a path $P$ by $\vect{v}$, written $P+\vect{v}$, is the path $Q$ where
and for all indices $i$ of $P$,  $Q_i=P_i+\vect{v }$. As a convenient notation, for a path $PQ$ composed of subpaths $P$ and $Q$,  when we write $PQ +\vect{v} $ we mean $(PQ)+\vect{v}$ (i.e.\ the translation of all of $PQ$ by $+\vect{v}$).
The translation of a path $P$ by a vector $\vect{v} \in \mathbb{Z}^2$, written $P+\vect{v}$, is  the path $Q$ where
and for all indices $i$ of $P$,  

%The translation of an assembly $\alpha$ by a vector $\vect{v}$, written $\alpha+\vect{v}$, is the assembly $\beta$ defined on the set  $\dom\alpha+\vect{v}$ as $\beta(x,y)=\alpha((x,y)-\vect{v})$ where $(x,y) \in \Z^2$.
%A \emph{column} $x\in \mathbb{Z}$ is the set of all points of $\mathbb{Z}^2$ with x-coordinate $x$, and a \emph{row} $y\in\mathbb{Z}$ is the set of all points of $\mathbb{Z}^2$ with y-coordinate~$y$.

The \emph{width} of an assembly $\alpha$ is the number of columns on which $\alpha$ has at least one tile, and the \emph{height} of $\alpha$ is the number of rows on which $\alpha$ has at least one tile.

%\section{Definitions}
%
%width height
%dead-end.
%segment

%In section \ref{long:sec:def}, we start by defining a set of tile types $\glueset$ which depends on two parameters $\para$ and~$\parb$. At the same time, we use these tile types to define some tiles, some hardcoded paths and a tile assembly system $\tasr{\para}$. In section \ref{long:sec:exist}, we present the basic properties of these paths. In section \ref{}, we build a long path. The assembly of this large path is made in several steps: the parameter $\para$ is the number of steps and the parameter $\parb$ is the distance we move to the east at each iterations. Then, we show that any terminal assembly is bounded and finally we show that this large path is present in any terminal assembly. We conclude in section \ref{sec:conc} by showing that the long path grows at distance $\Theta(|\glueset|\log(|\glueset|))$ of the seed if $\para$ and $\parb$ are chosen wisely.

%TODO
%remettre le tout au propre (ref, image, conclusion, biblio)
%4 faire l'abstract Blabla sur lemme de la pompe, expliquer pompage et modèle d'autoassemblage pour abstract faire dans l'autre sens pour l'intro.
%5 faire l'introduction dire que l'on espere un lemma de la pompe en fonction du nombre de tuile sur lemme de la pompe, expliquer pompage et modèle d'autoassemblage pour abstract faire dans l'autre sens pour l'intro.

\section{The tile assembly system.}

\subsection{Definition of the tile assembly system.}
\label{long:sec:def}

Our construction relies on two parameters $\para, \parb \in \mathbb{N}$. Also, we define the series $(\series_i)_{i\geq 0}$ as: $$\series_0 = 2, \series_1=4 \text{ and for } i\geq 2, \series_i = 3\series_{i-1}-\series_{i-2}.$$ 
In this paper, we work on a zone of the $2D$ plane delimited as follow: we will only consider positions $(x,y) \in \mathbb{N}$ such that $0\leq x \leq (\step+1)\para-1$ and $0 \leq y \leq \series_\para-1$. The seed will be made of only one tile at position $(0,0)$ and any assembly will have a height bounded by $\series_\para-1$ and a width bounded by $(\para+1)\parb-1$.

The aim of this section is to define a the tile assembly system $\tasr{\para}=(\glueset,\sigma,1)$. The path of Figure \ref{fig:def} illustrates the definition of the set of tile types $\glueset$: each tile type is used exactly one time in this assembly. This path is made of five parts: one is the seed $\sigma$ and the four others are represented in green, orange, blue and red. The set of tile types $\glueset$ is defined as the union of these five kinds of tile types $\gluesets,\gluesetg,\gluesety,\gluesetb$ and $\gluesetp$. 

\vspace{+0.5em}
\noindent \emph{The seed.}
\vspace{+0.5em}

Consider the tile type $\tiles$ with only one glue called $\glueg_0$ on its east side. We define $\gluesets=\{\tiles\}$ and the seed $\sigma$ is defined as the assembly made of only the tile $((0,0), \tiles)$. From now on, $\sigma$ will always be the seed of our tile assembly system. 

\newcommand\gluesize{0.3}
\begin{center}
  \begin{tikzpicture}[scale=0.8]
    \draw(0,0)rectangle(2,2);
    \draw(1,1)node{$\tiles$};
    \draw[very thick](2,1)--(2+\gluesize,1);
    \draw(2, 1)node[anchor=south,rotate=90]{$\glueg_0$};
  \end{tikzpicture}
\end{center}

\vspace{+0.5em}
\noindent \emph{The green tile types.}
\vspace{+0.5em}

The second kind of tile types $\gluesetg$ is made of $\parb+2$ tile types called $\tileg_0,\tileg_1, \ldots, \tileg_{\parb+1}$ defined as follow:

\begin{itemize}
\item for all $0\leq i \leq \parb-2$ the tile type $\tileg_i$ is made of the glue $\glueg_i$ on its west side and the glue $\glueg_{i+1}$ on its east side; moreover the tile $\tg_i$ is defined as $((1+i,0),\tileg_i)$.
\item the tile type $\tileg_{\parb-1}$ is made of the glue $\glueg_{\parb-1}$ on its west side and the glue $\glueg_{\parb}$ on its north side; moreover the tile $\tg_{\parb-1}$ is defined as $((\parb,0),\tileg_{\parb-1})$.
\item the tile type $\tileg_{\parb}$ is made of the glue $\glueg_{\parb}$ on its south side and the glue $\glueg_{\parb+1}$ on its west side; moreover the tile $\tg_{\parb}$ is defined as $((\parb,1),\tileg_{\parb})$.
\item the tile type $\tileg_{\parb+1}$ is made of the glue $\glueg_{\parb+1}$ on its east side and the glue $\gluey_{0}$ on its north side; moreover the tile $\tg_{\parb+1}$ is defined as $((\parb-1,1),\tileg_{\parb+1})$.
\end{itemize}
This set of tile types is used to hardcode the path $\pathg=\tg_0\tg_1 \ldots \tg_{\parb+1}$. 

\begin{center}
  \begin{tikzpicture}[scale=1]
    \draw[fill=xgreen!50!white](0,0)rectangle(2,2);
    \draw[very thick,xgreen](0,1)--(-\gluesize,1);
    \draw[very thick,xgreen](2,1)--(2+\gluesize,1);
    \draw(1,1)node{$\tileg_0$};
    \draw(0,1)node[anchor=north,rotate=90]{$\glueg_0$};
    \draw(2,1)node[anchor=south,rotate=90]{$\glueg_1$};
    \begin{scope}[shift={(3,0)}]
      \draw[fill=xgreen!50!white](0,0)rectangle(2,2);
      \draw[very thick,xgreen](0,1)--(-\gluesize,1);
      \draw[very thick,xgreen](2,1)--(2+\gluesize,1);
      \draw(1,1)node{$\tileg_1$};
      \draw(0,1)node[anchor=north,rotate=90]{$\glueg_1$};
      \draw(2,1)node[anchor=south,rotate=90]{$\glueg_2$};
      \draw(3,1)node{$\ldots$};
      \begin{scope}[shift={(4,0)}]
        \draw[fill=xgreen!50!white](0,0)rectangle(2,2);
        \draw(1,1)node{$\tileg_{n-2}$};
        \draw(0,1)node[anchor=north,rotate=90]{$\glueg_{n-2}$};
        \draw(2,1)node[anchor=south,rotate=90]{$\glueg_{n-1}$};
        \draw[very thick,xgreen](0,1)--(-\gluesize,1);
        \draw[very thick,xgreen](2,1)--(2+\gluesize,1);
        \begin{scope}[shift={(3,0)}]
          \draw[fill=xgreen!50!white](0,0)rectangle(2,2);
          \draw(1,1)node{$\tileg_{n-1}$};
          \draw(0,1)node[anchor=north,rotate=90]{$\glueg_{n-1}$};
          \draw(1,2)node[anchor=north]{$\glueg_{n}$};
          \draw[very thick,xgreen](1,2)--(1, 2+\gluesize);
          \draw[very thick,xgreen](0,1)--(-\gluesize,1);
        \end{scope}
        \begin{scope}[shift={(3,3)}]
          \draw[fill=xgreen!50!white](0,0)rectangle(2,2);
          \draw(1,1)node{$\tileg_{n}$};
          \draw(1,0)node[anchor=south]{$\glueg_n$};
          \draw(0,1)node[anchor=north,rotate=90]{$\glueg_{n+1}$};
          \draw[very thick,xgreen](1,0)--(1, -\gluesize);
          \draw[very thick,xgreen](0,1)--(-\gluesize,1);
        \end{scope}
        \begin{scope}[shift={(0,3)}]
          \draw[fill=xgreen!50!white](0,0)rectangle(2,2);
          \draw(1,1)node{$\tileg_{n+1}$};
          \draw(2,1)node[anchor=south,rotate=90]{$\glueg_{n+1}$};
          \draw(1,2)node[anchor=north]{$\gluey_0$};
          \draw[very thick,xgreen](1,2)--(1, 2+\gluesize);
          \draw[very thick,xgreen](2,1)--(2+\gluesize,1);
        \end{scope}
      \end{scope}
    \end{scope}
  \end{tikzpicture}
 \end{center}

\vspace{+0.5em}
\noindent \emph{The orange tile types.}
\vspace{+0.5em}

The third kind of tile types $\gluesety$ is made of $\series_\para-2$ tile types called $\tiley_0,\tiley_1, \ldots, \tiley_{\series_\para-3}$ defined as follow. For all $0\leq i \leq \series_\para-3$ the tile type $\tiley_i$ is made of:
\begin{itemize}
\item the glue $\gluey_i$ on its south side;
\item the glue $\gluey_{i+1}$ on its north side if and only if $i < \series_\para-3$;
\item the glue $\glueb_0$ on its east side if and only if there exists $1\leq \step \leq \para$ such that $i = \series_\step-3$.
\end{itemize}
For all $0\leq i \leq \series_\para-3$, the tile $\ty_i$ is defined as $((\parb-1,2+i),\tiley_i)$. This set of tile types is used to hardcode the paths $\pathy{\step} = \ty_0 \ty_1 \ldots \ty_{\series_\step-3}$ for $1 \leq \step \leq \para$. Note that, for all $1\leq \step \leq \para-1$, the path $\pathy{\step}$ is a prefix of the path $\pathy{{\step+1}}$.  To summarize, for all $i\in\{0,1,\ldots,\series_\para-3\}$:

\begin{itemize}
  \item if $i=\series_{\para-3}$ then $\tiley_{i}$ is the following tile type:
    \begin{center}
      \begin{tikzpicture}[scale=1]
        \draw[fill=xorange!50!white](0,0)rectangle(2,2);
        \draw(1,1)node{$\tiley_{i}$};
        \draw(1,0)node[anchor=south]{$\gluey_{i}$};
        \draw(2,1)node[anchor=south, rotate=90]{$\glueb_0$};
        \draw[very thick,xorange](1,0)--(1, -\gluesize);
        \draw[very thick,xorange](2,1)--(2+\gluesize,1);
      \end{tikzpicture}
    \end{center}

  \item if $i = \series_j-3$ for some integer $1\leq j < \para$, then $\tiley_i$ is the following tile type:
    \begin{center}
      \begin{tikzpicture}[scale=1]
        \draw[fill=xorange!50!white](0,0)rectangle(2,2);
        \draw(1,1)node{$\tiley_{i}$};
        \draw(1,0)node[anchor=south]{$\gluey_{i}$};
        \draw(1,2)node[anchor=north]{$\gluey_{i+1}$};
        \draw(2,1)node[anchor=south, rotate=90]{$\glueb_0$};
        \draw[very thick,xorange](1,2)--(1, 2+\gluesize);
        \draw[very thick,xorange](1,0)--(1, -\gluesize);
        \draw[very thick,xorange](2,1)--(2+\gluesize,1);
      \end{tikzpicture}
    \end{center}

  \item otherwise, $\tiley_i$ is the following tile type:
    \begin{center}
      \begin{tikzpicture}[scale=1]
        \draw[fill=xorange!50!white](0,0)rectangle(2,2);
        \draw(1,1)node{$\tiley_{i}$};
        \draw(1,0)node[anchor=south]{$\gluey_i$};
        \draw(1,2)node[anchor=north]{$\gluey_{i+1}$};
        \draw[very thick,xorange](1,2)--(1, 2+\gluesize);
        \draw[very thick,xorange](1,0)--(1, -\gluesize);
      \end{tikzpicture}
    \end{center}
\end{itemize}

\vspace{+0.5em}
\noindent \emph{The blue tile types.}
\vspace{+0.5em}

The fourth kind of tile types $\gluesetb$ is made of $2\parb$ tile types called $\tileb_0,\tileb_1, \ldots, \tileb_{2\parb-1}$ defined as follow:

\begin{itemize}
\item for all $0\leq i \leq \parb-2$ the tile type $\tileb_i$ is made of the glue $\glueb_i$ on its west side and the glue $\glueb_{i+1}$ on its east side; moreover the tile $\tb_i$ is defined as $((\parb+i,3),\tileb_i)$;
\item the tile type $\tileb_{\parb-1}$ is made of the glue $\glueb_{\parb-1}$ on its west side and the glue $\glueb_{\parb}$ on its south side; moreover the tile $\tb_{\parb-1}$ is defined as $((2\parb-1,3),\tileb_{\parb-1})$;
\item the tile type $\tileb_{\parb}$ is made of the glue $\glueb_{\parb}$ on its north side and the glue $\glueb_{\parb+1}$ on its west side; moreover the tile $\tb_{\parb}$ is defined as $((2\parb-1,2),\tileb_{\parb})$;
\item for all $\parb+1\leq i \leq 2\parb-2$, the tile type $\tileb_{i}$ is made of the glue $\glueb_{i}$ on its east side and the glue $\glueb_{i+1}$ on its west side; moreover the tile $\tb_{i}$ is defined as $((3\parb-1-i,2),\tileb_{i})$;
\item the tile type $\tileb_{2\parb-1}$ is made of the glue $\glueb_{2\parb-1}$ on its east side and the glue $\gluep_{0}$ on its south side; moreover the tile $\tb_{2\parb-1}$ is defined as $((\parb,2),\tileb_{2\parb-1})$.
\end{itemize}
This set of tile types is used to hardcode the path $\pathb{1}=\tb_0 \tb_1 \ldots \tb_{2n-1}$. For all $2\leq \step \leq \para$, we define the path $\pathb{\step}$ as $\pathb{1}$ translated by $(0,\series_{\step}-4)$. 

  \begin{center}
    \begin{tikzpicture}[scale=1]
      \draw[fill=xblue!50!white](0,0)rectangle(2,2);
      \draw(1,1)node{$\tileb_{0}$};
      \draw(0,1)node[anchor=north,rotate=90]{$\glueb_0$};
      \draw(2,1)node[anchor=south,rotate=90]{$\glueb_1$};
      \draw[very thick,xblue](0,1)--(-\gluesize,1);
      \draw[very thick,xblue](2,1)--(2+\gluesize,1);
      \begin{scope}[shift={(3,0)}]
        \draw[fill=xblue!50!white](0,0)rectangle(2,2);
        \draw(1,1)node{$\tileb_{1}$};
        \draw(0,1)node[anchor=north,rotate=90]{$\glueb_1$};
        \draw(2,1)node[anchor=south,rotate=90]{$\glueb_2$};
        \draw[very thick,xblue](0,1)--(-\gluesize,1);
        \draw[very thick,xblue](2,1)--(2+\gluesize,1);
      \end{scope}
      \draw(6,1)node{$\ldots$};
      \begin{scope}[shift={(7,0)}]
        \draw[fill=xblue!50!white](0,0)rectangle(2,2);
        \draw(1,1)node{$\tileb_{n-2}$};
        \draw(0,1)node[anchor=north,rotate=90]{$\glueb_{\parb-2}$};
        \draw(2,1)node[anchor=south,rotate=90]{$\glueb_{\parb-1}$};
        \draw[very thick,xblue](0,1)--(-\gluesize,1);
        \draw[very thick,xblue](2,1)--(2+\gluesize,1);
      \end{scope}
      \begin{scope}[shift={(10,0)}]
        \draw[fill=xblue!50!white](0,0)rectangle(2,2);
        \draw(1,1)node{$\tileb_{n-1}$};
        \draw(0,1)node[anchor=north,rotate=90]{$\glueb_{\parb-1}$};
        \draw(1,0)node[anchor=south]{$\glueb_{n}$};
        \draw[very thick,xblue](0,1)--(-\gluesize,1);
        \draw[very thick,xblue](1,0)--(1, -\gluesize);
      \end{scope}
      \begin{scope}[shift={(10,-3)}]
        \draw[fill=xblue!50!white](0,0)rectangle(2,2);
        \draw(1,1)node{$\tileb_{\parb}$};
        \draw(1,2)node[anchor=north]{$\glueb_n$};
        \draw(0,1)node[anchor=north,rotate=90]{$\glueb_{\parb+1}$};
        \draw[very thick,xblue](0,1)--(-\gluesize,1);
        \draw[very thick,xblue](1,2)--(1,2+\gluesize);
      \end{scope}

      \begin{scope}[shift={(0,-3)}]
        \draw[fill=xblue!50!white](0,0)rectangle(2,2);
        \draw(1,0)node[anchor=south]{$\gluep_0$};
        \draw(1,1)node{$\tileb_{2\parb-1}$};
        \draw(2,1)node[anchor=south,rotate=90]{$\glueb_{2\parb-1}$};
        \draw[very thick,xblue](1,0)--(1,-\gluesize);
        \draw[very thick,xblue](2,1)--(2+\gluesize,1);
        \begin{scope}[shift={(3,0)}]
          \draw[fill=xblue!50!white](0,0)rectangle(2,2);
          \draw(1,1)node{$\tileb_{2\parb-2}$};
          \draw(0,1)node[anchor=north,rotate=90]{$\glueb_{2\parb-2}$};
          \draw(2,1)node[anchor=south,rotate=90]{$\glueb_{2\parb-3}$};
          \draw[very thick,xblue](0,1)--(-\gluesize,1);
          \draw[very thick,xblue](2,1)--(2+\gluesize,1);
        \end{scope}
        \draw(6,1)node{$\ldots$};
        \begin{scope}[shift={(7,0)}]
          \draw[fill=xblue!50!white](0,0)rectangle(2,2);
          \draw(1,1)node{$\tileb_{\parb+1}$};
          \draw(0,1)node[anchor=north,rotate=90]{$\glueb_{\parb+2}$};
          \draw(2,1)node[anchor=south,rotate=90]{$\glueb_{\parb+1}$};
          \draw[very thick,xblue](0,1)--(-\gluesize,1);
          \draw[very thick,xblue](2,1)--(2+\gluesize,1);
        \end{scope}
      \end{scope}
    \end{tikzpicture}
  \end{center}

\vspace{+0.5em}
\noindent \emph{The red tile types.}
\vspace{+0.5em}

The fifth kind of tile types $\gluesetp$ is made of $\series_\para-\series_{\para-1}-2$ tile types called $\tilep_0,\tilep_1, \ldots, \tilep_{\series_\para-\series_{\para-1}-3}$ defined as follow. For all $0\leq i \leq \series_\para-\series_{\para-1}-3$, the tile type $\tilep_i$ is made of:
\begin{itemize}
\item the glue $\gluep_i$ on its north side;
\item the glue $\gluep_{i+1}$ on its south side if and only if $i < \series_\para-\series_{\para-1}-3$;
\item the glue $\glueg_0$ on its east side if and only if there exists $2\leq \step \leq \para$ such that $i=\series_{\step}-\series_{\step-1}-3$.
\end{itemize}
For all $0\leq i \leq \series_\para-\series_{\para-1}-3$, the tile $\tp_i$ is defined as $((\parb,1-i),\tilep_i)$. This set of tile types is used to hardcode the path $\pathpurple$ defined as $\tp_0 \tp_1 \ldots \tp_{\series_\para-\series_{\para-1}-3}$. For all $2\leq i \leq \para$ and  $1\leq j \leq \para$, we define the path $\pathpp{j}{i}$ as the prefix of $\pathpurple$ of length $\series_{j}-\series_{j-1}-2$ translated by $(0,\series_{i}-4)$. Note that, for all $1\leq j' \leq j$, the path $\pathpp{j'}{i}$ is a prefix of $\pathpp{j}{i}$ and that for all $1\leq i \leq \para$, $\pathpp{1}{i}$ is $\epsilon$: the empty path of length $0$. To summarize, for all $i\in\{0,1,\ldots,\series_\para-\series_{\para-1}-3\}$:

\begin{itemize}
\item if $i={\series_\para-\series_{\para-1}-3}$ then $\tilep_i$ is the following tile type:

  \begin{center}
    \begin{tikzpicture}[scale=1]
      \draw[fill=xred!50!white](0,0)rectangle(2,2);
      \draw(1,1)node{$\tilep_{i}$};
      \draw(1,2)node[anchor=north]{$\gluep_{i}$};
      \draw(2,1)node[anchor=south, rotate=90]{$\glueg_0$};
      \draw[very thick,xred](1,2)--(1, 2+\gluesize);
      \draw[very thick,xred](2,1)--(2+\gluesize,1);
    \end{tikzpicture}
  \end{center}

\item if $i = \series_j - \series_{j-1}-3$ for some integer $1\leq j< \para$, then $\tilep_i$ is the following tile type:

  \begin{center}
    \begin{tikzpicture}[scale=1]
      \draw[fill=xred!50!white](0,0)rectangle(2,2);
      \draw(1,1)node{$\tilep_{i}$};
      \draw(1,0)node[anchor=south]{$\gluep_{i+1}$};
      \draw(1,2)node[anchor=north]{$\gluep_{i}$};
      \draw(2,1)node[anchor=south, rotate=90]{$\glueg_0$};
      \draw[very thick,xred](1,2)--(1, 2+\gluesize);
      \draw[very thick,xred](1,0)--(1, -\gluesize);
      \draw[very thick,xred](2,1)--(2+\gluesize,1);
    \end{tikzpicture}
  \end{center}
\item  otherwise, $\tilep_i$ is the following tile type:

  \begin{center}
    \begin{tikzpicture}[scale=1]
      \draw[fill=xred!50!white](0,0)rectangle(2,2);
      \draw(1,1)node{$\tilep_{i}$};
      \draw(1,0)node[anchor=south]{$\gluep_{i+1}$};
      \draw(1,2)node[anchor=north]{$\gluep_{i}$};
      \draw[very thick,xred](1,2)--(1, 2+\gluesize);
      \draw[very thick,xred](1,0)--(1, -\gluesize);
    \end{tikzpicture}
  \end{center}
\end{itemize}
\begin{figure}
\centering

\begin{tikzpicture}[x=0.25cm,y=0.25cm]

%grid 
\path [draw, gray, very thin] (-4,-2) grid[step=0.25cm] (52,70);

%axes
 \draw[->] (-4,0.5) -- (52,0.5);
 \draw[->] (0.5,-2) -- (0.5,70);
 \draw[dotted,thick] (-4,68) -- (52,68);
 \draw[dotted,thick] (20,-2) -- (20,70);

%coordonnées clées
\draw (-1.8,3.5) node {$\series_{1}-1$};
 \draw(0.33,3.5) -- (0.66,3.5);
\draw (-1.8,9.5) node {$\series_{2}-1$};
 \draw(0.33,9.5) -- (0.66,9.5);
\draw (-1.8,25.5) node {$\series_{3}-1$};
 \draw(0.33,25.5) -- (0.66,25.5);
\draw (-1.8,67.5) node {$\series_{4}-1$};
 \draw(0.33,67.5) -- (0.66,67.5);
\draw (10.5,-1) node {$\parb$};
\draw (19.5,-1) node {$2\parb-1$};
 \draw(19.5,0.33) -- (19.5,0.66);
\draw (29.5,-1) node {$3\parb-1$};
 \draw(29.5,0.33) -- (29.5,0.66);
\draw (39.5,-1) node {$4\parb-1$};
 \draw(39.5,0.33) -- (39.5,0.66);
\draw (49.5,-1) node {$5\parb-1$};
 \draw(49.5,0.33) -- (49.5,0.66);

%glue
 \draw[very thick,xgreen] (0.5,0.5) -| (10.5,1.5) -| (9.5,1.5);
 \draw[very thick,xorange] (9.5,1.5) |- (9.5,67.5);
 \draw[very thick,xblue] (9.5,3.5) |- (10,3.5);
 \draw[very thick,xblue] (9.5,9.5) |- (10,9.5);
 \draw[very thick,xblue] (9.5,25.5) -| (10,25.5);
 \draw[very thick,xblue] (9.5,67.5) -| (19.5,66.5) -| (10.5,66.5);
 \draw[very thick,xred] (10.5,66.5) -| (10.5,26.5);
 \draw[very thick,xgreen] (10.5,26.5) -| (11,26.5);
 \draw[very thick,xgreen] (10.5,52.5) -| (11,52.5);
 \draw[very thick,xgreen] (10.5,62.5) -| (11,62.5);

%tiles
\draws{0}{0}
\drawg{1}{0}
\drawg{2}{0}
\drawg{3}{0}
\drawg{4}{0}
\drawg{5}{0}
\drawg{6}{0}
\drawg{7}{0}
\drawg{8}{0}
\drawg{9}{0}
\drawg{10}{0}
\drawg{10}{1}
\drawg{9}{1}

\drawo{9}{2}
\drawo{9}{3}
\drawo{9}{4}
\drawo{9}{5}
\drawo{9}{6}
\drawo{9}{7}
\drawo{9}{8}
\drawo{9}{9}
\drawo{9}{10}
\drawo{9}{11}
\drawo{9}{12}
\drawo{9}{13}
\drawo{9}{14}
\drawo{9}{15}
\drawo{9}{16}
\drawo{9}{17}
\drawo{9}{18}
\drawo{9}{19}
\drawo{9}{20}
\drawo{9}{21}
\drawo{9}{22}
\drawo{9}{23}
\drawo{9}{24}
\drawo{9}{25}
\drawo{9}{26}
\drawo{9}{27}
\drawo{9}{28}
\drawo{9}{29}
\drawo{9}{30}
\drawo{9}{31}
\drawo{9}{32}
\drawo{9}{33}
\drawo{9}{34}
\drawo{9}{35}
\drawo{9}{36}
\drawo{9}{37}
\drawo{9}{38}
\drawo{9}{39}
\drawo{9}{40}
\drawo{9}{41}
\drawo{9}{42}
\drawo{9}{43}
\drawo{9}{44}
\drawo{9}{45}
\drawo{9}{46}
\drawo{9}{47}
\drawo{9}{48}
\drawo{9}{49}
\drawo{9}{50}
\drawo{9}{51}
\drawo{9}{52}
\drawo{9}{53}
\drawo{9}{54}
\drawo{9}{55}
\drawo{9}{56}
\drawo{9}{57}
\drawo{9}{58}
\drawo{9}{59}
\drawo{9}{60}
\drawo{9}{61}
\drawo{9}{62}
\drawo{9}{63}
\drawo{9}{64}
\drawo{9}{65}
\drawo{9}{66}
\drawo{9}{67}

\drawb{10}{67}
\drawb{11}{67}
\drawb{12}{67}
\drawb{13}{67}
\drawb{14}{67}
\drawb{15}{67}
\drawb{16}{67}
\drawb{17}{67}
\drawb{18}{67}
\drawb{19}{67}
\drawb{19}{66}
\drawb{18}{66}
\drawb{17}{66}
\drawb{16}{66}
\drawb{15}{66}
\drawb{14}{66}
\drawb{13}{66}
\drawb{12}{66}
\drawb{11}{66}
\drawb{10}{66}

\drawr{10}{65}
\drawr{10}{64}
\drawr{10}{63}
\drawr{10}{62}
\drawr{10}{61}
\drawr{10}{60}
\drawr{10}{59}
\drawr{10}{58}
\drawr{10}{57}
\drawr{10}{56}
\drawr{10}{55}
\drawr{10}{54}
\drawr{10}{53}
\drawr{10}{52}
\drawr{10}{51}
\drawr{10}{50}
\drawr{10}{49}
\drawr{10}{48}
\drawr{10}{47}
\drawr{10}{46}
\drawr{10}{45}
\drawr{10}{44}
\drawr{10}{43}
\drawr{10}{42}
\drawr{10}{41}
\drawr{10}{40}
\drawr{10}{39}
\drawr{10}{38}
\drawr{10}{37}
\drawr{10}{36}
\drawr{10}{35}
\drawr{10}{34}
\drawr{10}{33}
\drawr{10}{32}
\drawr{10}{31}
\drawr{10}{30}
\drawr{10}{29}
\drawr{10}{28}
\drawr{10}{27}
\drawr{10}{26}

%tuiles avec colles libre
\draw[fill=black] (9.5,3.5) circle (0.12);
\draw[fill=black] (9.5,9.5) circle (0.12);
\draw[fill=black] (9.5,25.5) circle (0.12);
\draw[fill=black] (9.5,67.5) circle (0.12);
\draw[fill=black] (10.5,62.5) circle (0.12);
\draw[fill=black] (10.5,52.5) circle (0.12);
\draw[fill=black] (10.5,26.5) circle (0.12);
\end{tikzpicture}

\caption{In our examples, we consider $\para=4$ and $\parb=10$. The seed (in white) is at position $(0,0)$ and we represent the path $\paths{4}{4}=\pathg\pathy{4}\pathb{4}\pathpp{4}{4}$. This path is producible by $\tas$ and this figure contains exactly one occurrence of each tile type of $\glueset$. The height of $\paths{4}{4}$ is $\series_{\para}-1$ and its width is $2\parb-1$. The tiles of $\pathy{4}$ and $\pathpp{4}{4}$ with a glue on their east side are marked by a black dot.}
\label{fig:def}
\end{figure}

\subsection{Basic properties.}
\label{long:sec:basic}

The aim of this section is to define a set of paths which characterized all the possible prefixes of a path producible by $\tas$. These paths are obtained by gluing together the different paths defined in section \ref{long:sec:def}. When two of these paths are glued together, we have to verify that the result of this operation is also a path. To achieve this goal, we have to check two properties. The first one is that the last tile of the first path can be glued to the first tile of the second path. The second one is that that the two paths do not intersect. We start by a first lemma which gives the positions occupied by the paths defined in section \ref{long:sec:def}. This lemma is useful to show that a path is west or north of another one and thus that these two paths do not intersect. 

\begin{lemma}
\label{lem:position}
For any $\para,\parb \in \mathbb{N}$, for any $1 \leq j\leq i \leq \para$ and for any position $(x,y)$ occupied by:
\begin{itemize}
\item a tile of $\pathg$, we have $1\leq x \leq \parb$ and $0\leq y\leq 1$;
\item a tile of $\pathy{i}$, we have $x=\parb-1$ and $2 \leq y \leq \series_i-1$;
\item a tile of $\pathb{i}$, we have $\parb \leq x \leq 2\parb-1$ and $\series_{i}-2 \leq y \leq \series_{i}-1$;
\item a tile of $\pathpp{j}{i}$, we have $x=\parb$ and $\series_{i}-\series_{j}+\series_{j-1} \leq y \leq \series_{i}-3$ 
(for $j=\step$, we have $\series_{i-1} \leq y \leq \series_{i}-3$).
\end{itemize}
\end{lemma}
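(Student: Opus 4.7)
The plan is to verify each of the four bullets by direct inspection of the tile positions given in Section \ref{long:sec:def}, taking the claim for the translated paths ($\pathb{i}$ for $i\geq 2$ and $\pathpp{j}{i}$) as consequences of the base case plus the translation vectors.

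First I would handle $\pathg=\tg_0\tg_1\ldots\tg_{\parb+1}$: from the definition, the positions are $(1,0),(2,0),\ldots,(\parb-1,0)$ (from $\tg_0,\ldots,\tg_{\parb-2}$), then $(\parb,0)$ (from $\tg_{\parb-1}$), then $(\parb,1)$ (from $\tg_{\parb}$), and finally $(\parb-1,1)$ (from $\tg_{\parb+1}$). Taking the min/max of the $x$- and $y$-coordinates gives $1\leq x\leq \parb$ and $0\leq y\leq 1$. Next, for $\pathy{i}=\ty_0\ldots\ty_{\series_i-3}$, each $\ty_j$ sits at $(\parb-1,2+j)$ by definition, so $x=\parb-1$ is constant and $y$ ranges over $2,3,\ldots,2+(\series_i-3)=\series_i-1$.

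For $\pathb{1}=\tb_0\tb_1\ldots\tb_{2\parb-1}$, I would split into the four position-families in the definition and enumerate: $\tb_0,\ldots,\tb_{\parb-2}$ occupy $(\parb,3),(\parb+1,3),\ldots,(2\parb-2,3)$; $\tb_{\parb-1}$ adds $(2\parb-1,3)$; $\tb_{\parb}$ adds $(2\parb-1,2)$; $\tb_{\parb+1},\ldots,\tb_{2\parb-2}$ occupy $(2\parb-2,2),\ldots,(\parb+1,2)$; and $\tb_{2\parb-1}$ is at $(\parb,2)$. The $x$-coordinates exhaust the interval $[\parb,2\parb-1]$ and the $y$-coordinates are exactly $\{2,3\}=\{\series_1-2,\series_1-1\}$. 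For $i\geq 2$, $\pathb{i}$ is $\pathb{1}+(0,\series_i-4)$, so the $x$-range is unchanged and the $y$-range becomes $[\series_i-2,\series_i-1]$, as claimed.

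For $\pathpp{j}{i}$ I would first compute the positions of the length-$(\series_j-\series_{j-1}-2)$ prefix of $\pathpurple$, namely the tiles $\tp_0,\ldots,\tp_{\series_j-\series_{j-1}-3}$, which by definition sit at $(\parb,1),(\parb,0),\ldots,(\parb,1-(\series_j-\series_{j-1}-3))=(\parb,\series_{j-1}-\series_j+4)$. Translating by $(0,\series_i-4)$ gives $x=\parb$ and a $y$-range $[\series_i-\series_j+\series_{j-1},\,\series_i-3]$, which is precisely the stated bound; specialising to $j=i$ and using $\series_i-\series_i+\series_{i-1}=\series_{i-1}$ recovers the parenthetical lower bound.

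The proof is essentially a bookkeeping exercise, with the only mild subtlety being that the prefix length $\series_j-\series_{j-1}-2$ and the translation $(0,\series_i-4)$ combine to yield the lower bound $\series_i-\series_j+\series_{j-1}$ for the $y$-coordinate of $\pathpp{j}{i}$; I would double-check the indexing here (in particular that the final tile of the prefix is $\tp_{\series_j-\series_{j-1}-3}$ and not an off-by-one neighbour) so that the translated $y$-range matches exactly. No use of the definition of the series $(\series_i)$ beyond its appearance in the counts is required; the lemma is purely geometric.
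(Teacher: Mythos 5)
Your proposal is correct and follows essentially the same route as the paper: the paper's proof declares the green, orange and blue cases straightforward and only details the computation for the translated red prefix ($\tp_0$ translated to $(\parb,\series_i-3)$, descending for $\series_j-\series_{j-1}-2$ tiles, ending at $(\parb,\series_i-\series_j+\series_{j-1})$), which is exactly the bookkeeping you carry out, just spelled out in full for every path.
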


\begin{proof}
Straightforward for the green, orange and blue paths. For any position $(x,y)$ occupied by $\pathpp{j}{i}$, we have $x=\parb$. Moreover, the position of $\tp_0+(0,\series_{\step}-4)$ is $(\parb,\series_{\step}+3)$ and the following tiles are all below the previous one and the length of $\pathpp{j}{i}$ is $\series_{j}-\series_{j-1}-2$ then the tile $\tp_{\series_{j}-\series_{j-1}-3}$ occupies the position $(\parb,\series_{i}-\series_{j}+\series_{j-1})$. For the special case where $j=i$, we have $\series_{i}-\series_{i}+\series_{i-1}=\series_{i-1}$.
\end{proof}

Now, for all $1\leq j\leq i \leq \para$, we defined $\paths{i}{j}$ as $\pathg\pathy{i}\pathb{i}\pathpp{j}{i}$ and we show that these sequences of tiles are paths producible by $\tas$ (see Figure \ref{fig:def} for an example of such a path).

\begin{lemma}
\label{lem:producible}
For any $\para,\parb \in \mathbb{N}$ and for any $1 \leq i \leq \para$, $\paths{i}{i}$ is a path producible by $\tas$.
\end{lemma}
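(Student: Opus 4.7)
The plan is straightforward verification: I would check the two defining conditions of a path (adjacent tiles interact, positions do not repeat) on the concatenation $\paths{i}{i}=\pathg\pathy{i}\pathb{i}\pathpp{i}{i}$, and then observe that the seed's east glue $\glueg_0$ matches the west side of $\tg_0$, giving producibility by attaching the tiles one at a time in path order.

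For the interaction property, each of the four subpaths is internally a valid path by construction of its tile types in Section~\ref{long:sec:def}, so I would focus on the three junctions. At the green--orange junction, $\tg_{\parb+1}$ at $(\parb-1,1)$ exposes glue $\gluey_0$ on its north side, matching the south side of $\ty_0$ at $(\parb-1,2)$. At the orange--blue junction, the last tile of $\pathy{i}$ is $\ty_{\series_i-3}$, and because $i$ itself is an integer in $\{1,\ldots,\para\}$, the defining predicate for the east glue of an orange tile is satisfied at this index, so $\ty_{\series_i-3}$ carries glue $\glueb_0$ eastwards; this matches the west side of the first tile of $\pathb{i}$, which by translation sits at $(\parb,\series_i-1)$. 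At the blue--red junction, $\tb_{2\parb-1}$ translated to $(\parb,\series_i-2)$ exposes glue $\gluep_0$ southwards, matching the north side of $\tp_0$ translated to $(\parb,\series_i-3)$; this junction is vacuous when $i=1$, since $\pathpp{1}{1}$ has length $\series_1-\series_0-2=0$ and $\paths{1}{1}$ simply ends with $\pathb{1}$.

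For non-intersection, I would invoke Lemma~\ref{lem:position} to confine the four subpaths to pairwise disjoint regions: the green subpath sits in rows $y\in\{0,1\}$ while the three others have $y\geq 2$; the orange subpath occupies column $\parb-1$, the blue subpath occupies columns in $[\parb,2\parb-1]$, and the red subpath occupies column $\parb$, so orange is column-disjoint from both blue and red; finally blue occupies rows $\{\series_i-2,\series_i-1\}$ whereas red lies on rows in $[\series_{i-1},\series_i-3]$, which are disjoint whenever $\series_i-3\geq \series_{i-1}$, a bound easily derived from the recurrence since $\series_i-\series_{i-1}=2\series_{i-1}-\series_{i-2}\geq \series_{i-1}\geq 4$ for $i\geq 2$. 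Self-avoidance within each individual subpath is immediate from the explicit position formulas. Producibility then follows at once: $\sigma$ exposes $\glueg_0$ eastwards, which matches the west side of $\tg_0$ at $(1,0)$, and since $\paths{i}{i}$ is a path, each successive tile can be stably attached in order.

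The main obstacle here is essentially bookkeeping rather than any combinatorial difficulty: one must carefully track the translation offsets $(0,\series_i-4)$ defining $\pathb{i}$ and $\pathpp{i}{i}$, as well as the range conditions built into the orange and red tile types, and then verify that the last index of $\pathy{i}$ and the indices along $\pathb{i}$ land precisely on the tile types equipped with the glues needed to continue the path.
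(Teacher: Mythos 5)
Your proposal is correct and takes essentially the same route as the paper's proof: pairwise non-intersection of the seed and the four coloured subpaths via the position bounds of Lemma~\ref{lem:position}, glue matching at the three junctions (including the observation that the blue--red junction is vacuous for $i=1$ since $\pathpp{1}{1}$ is empty), and producibility starting from the seed's east glue $\glueg_0$. The only point the paper states explicitly that you leave implicit is that the path also avoids the seed's position $(0,0)$, which follows immediately from the same position bounds.
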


\begin{proof}
Consider $1\leq i \leq \patha$ and the paths $\pathg$, $\pathy{i}$, $\pathb{i}$ and $\pathp{i}$. Firstly, we show that these different paths do no intersect. For the special where $\step=1$, we remind that $\pathg\pathy{1}\pathb{1}\pathp{1}=\pathg\pathy{1}\pathb{1}$. By lemma \ref{lem:position}:
\begin{itemize}
\item the seed $\sigma$ is west of $\pathg\pathy{i}\pathb{i}\pathp{i}$;
\item the path $\pathg$ is south of $\pathy{i}\pathb{i}\pathp{i}$;
\item the path $\pathy{i}$ is west of $\pathb{i}\pathp{i}$;
\item the path $\pathb{i}$ is north of $\pathp{i}$.
\end{itemize}
Secondly, we show that these different paths can be glued together:
\begin{itemize}
\item the position of the seed is $(0,0)$ and there is a glue $\tileg_0$ on its east side and the position of $\pathg_0$ is $(1,0)$ and there is a glue $\tileg_0$ on its west side;
\item the position of $\pathg_{\parb+1}$ is $(\parb-1,1)$ and there is a glue $\tiley_0$ on its north side and the position of $\pathy{i}_0$ is $(\parb-1,2)$ and there is a glue $\tiley_0$ on its south side;
\item the position of $\pathy{i}_{\series_{i}-3}$ is $(\parb-1,\series_{i}-1)$ and there is a glue $\tileb_0$ on its east side and the position of $\pathb{i}_0$ is $(\parb,\series_{i}-1)$ and there is a glue $\tileb_0$ on its west side;
\item the position of $\pathb{i}_{2\parb-1}$ is $(\parb,\series_{i}-2)$ and there is a glue $\tilep_0$ on its south side and the position of $\pathp{i}_0$ is $(\parb,\series_{i}-3)$ and there is a glue $\tilep_0$ on its north side.
\end{itemize}
Then, the path $\pathg\pathy{i}\pathb{i}\pathp{i}$ is producible by $\tas$. For all $1\leq i \leq \step$, $\paths{i}{i}$ is a prefix of $\paths{i}{i}$ and thus it is a path producible by $\tas$.
\end{proof}

Note that for any $1\leq i \leq \para$, all prefixes of $\paths{i}{i}$ are also producible by $\tas$. The paths $(\paths{\step}{i})_{1\leq i \leq \step \leq \para}$ will be used to characterize all the path producible by $\tas$. To achieve this goal, we need to know the positions of the free glues on these paths (see Figure \ref{fig:def}).

\begin{lemma}
\label{lem:freeglue}
For any $\para,\parb \in \mathbb{N}$ and for any $1 \leq i \leq \para$, the free glues of $\paths{i}{i}$ are:
\begin{itemize}
\item the north glue $\gluey_{\series_i-2}$ of the tile $\ty_{\series_i-3}$ whose position is $(\parb-1,\series_i-1)$ if $i<\para$;
\item for all $1\leq \inds \leq \step-1$ the east glue $\glueb_0$ of the tile $\ty_{\series_\inds-3}$ whose position is $(\parb-1,\series_\inds-1)$;
\item for all $2\leq j \leq \step$ the east glue $\glueg_0$ of the tile $\pathp{i}_{\series_{j}-\series_{j-1}-3}$ whose position is $(\parb,\series_{i}-\series_{j}+\series_{j-1})$;
\item the south glue $\gluep_{\series_{i}-\series_{i-1}-2}$ of the tile $\pathp{i}_{\series_{i}-\series_{i-1}-3}$ whose position is $(\parb,\series_{i-1})$ if $1<i<\para$.
\end{itemize}
\end{lemma}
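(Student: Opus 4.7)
The plan is to walk tile-by-tile through $\paths{i}{i} = \pathg\pathy{i}\pathb{i}\pathpp{i}{i}$ and, for each side of each tile, classify the glue (if any) as (a) bonded to its immediate predecessor or successor along the path, (b) adjacent to some other (non-neighbor) path tile, or (c) adjacent to an empty cell. Only case~(c) yields a free glue, and Lemma~\ref{lem:position} suffices to decide between (b) and (c) by checking whether the candidate neighbor falls inside any of the four rectangular ranges it gives for the green, orange, blue, and red subpaths.

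First I would dispose of the seed, the green tiles $\tg_0, \ldots, \tg_{\parb+1}$, and the blue tiles $\tb_0, \ldots, \tb_{2\parb-1}$: by the definitions in Section~\ref{long:sec:def}, every glue they carry is either a hardcoded glue that bonds them to an adjacent tile inside $\pathg$ or $\pathb{i}$, or else bonds $\tg_{\parb+1}$ to $\ty_0$, $\ty_{\series_i - 3}$ to $\tb_0$, or $\tb_{2\parb-1}$ to $\tp_0$. The only subtle side is the south glue $\gluep_0$ of $\tb_{2\parb-1}$ when $i = 1$ and $\pathpp{i}{i}$ is empty: here Lemma~\ref{lem:position} places $\tg_\parb$ at $(\parb,1)$ just below $\tb_{2\parb-1}$, so the glue is blocked rather than free. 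Hence none of these tiles contributes a free glue.

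Next I would handle the orange tiles $\ty_0, \ldots, \ty_{\series_i - 3}$. North and south glues chain internally, so two families of glues remain unbonded. The north glue $\gluey_{\series_i - 2}$ of the last orange tile is present in its tile type iff $\series_i - 3 < \series_\para - 3$, i.e.\ iff $i < \para$; its cell above, $(\parb-1, \series_i)$, is above every subpath range in Lemma~\ref{lem:position}, giving bullet~1. East glues $\glueb_0$ appear on $\ty_{\series_\inds - 3}$ for those $\inds$ with $\series_\inds - 3 \leq \series_i - 3$, i.e.\ $\inds \leq i$; the case $\inds = i$ bonds to $\tb_0$, while for $1 \leq \inds \leq i-1$ the east neighbor $(\parb, \series_\inds - 1)$ lies strictly below both the $y$-range of $\pathb{i}$ and of $\pathpp{i}{i}$, hence is empty, giving bullet~2. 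A symmetric analysis handles the red tiles $\tp_0, \ldots, \tp_{\series_i - \series_{i-1} - 3}$: using strict monotonicity of $\series_j - \series_{j-1}$, only the south glue of the last red tile (present iff $i < \para$, bullet~4) and the east glues $\glueg_0$ on $\tp_{\series_j - \series_{j-1} - 3}$ for $2 \leq j \leq i$ (bullet~3) survive, and Lemma~\ref{lem:position} again places each candidate neighbor outside all four subpath ranges.

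The main obstacle is not a single clever step but the bookkeeping: I must track which tile types in the orange and red families have three versus four glues (the definitions use ``if and only if'' clauses on the indices $\series_\inds - 3$ and $\series_j - \series_{j-1} - 3$), handle the corner cases $i = 1$ (empty $\pathpp{i}{i}$, so the south of $\tb_{2\parb-1}$ is blocked by green) and $i = \para$ (the last orange and last red tiles lack their successor glue entirely), and in each subcase invoke Lemma~\ref{lem:position} to confirm that the candidate positions really are empty.
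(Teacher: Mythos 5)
Your proposal is correct and follows essentially the same route as the paper: a case analysis over the tile types of $\pathg$, $\pathy{i}$, $\pathb{i}$ and $\pathpp{i}{i}$, identifying which types carry a third glue, treating the $i=1$ blocked south glue of $\tb_{2\parb-1}$ and the $i=\para$ missing-glue cases, exactly as the paper's proof does. The only difference is presentational: you make explicit the check (via Lemma~\ref{lem:position}) that each remaining glue faces an empty cell, which the paper leaves implicit except for the one blocked case.
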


\begin{proof}
Except for the last tile, for any tile of $\paths{\step}{\step}$ two of its glues are used to assemble the path $\paths{\step}{\step}$. Since the tile types have two or three glues on their side, we have to check the remaining glue on the tiles with three glues on their sides. First all the tile types of $\gluesetg$ and $\gluesetb$ have only two glues on their sides, thus there are no free glues on the tiles of $\pathg$ and $\pathb{\step}$. Note that for the special case where $\step=1$ (see Figure \ref{fig:path11}) the last tile of $\pathg\pathy{1}\pathb{1}\pathp{1}$ is the last tile of $\pathb{1}$. Nevertheless, the south glue of the tile $\pathb{1}_{2\parb-1}$ whose position is $(\parb,2)$ is not free because of a mismatch with the tile $\pathg_{\parb}$ whose position is $(\parb,1)$ and which has no glue on its north side. The tile types of $\gluesety$ with three glues on their side are $\tiley_{\series_\inds-3}$ for all $1\leq \inds < \para$. For all $1\leq \inds<i$, the glue on the east side of $\pathy{i}_{\series_\inds-3}$ are free whereas if $i<\para$  the north glue of tile $\pathy{i}_{\series_i-3}$ is free. Finally, the only tile types of $\gluesetp$ with three glues on their sides are $\tilep_{\series_{j}-\series_{j-1}-3}$ for all $2 \leq j \leq \para$. For $2\leq j \leq \step$, the glue on the east side of $\pathp{\step}_{\series_{j}-\series_{j-1}-3}$ is free. Moreover since $\pathp{i}_{\series_{i}-\series_{i-1}-3}$ is the last tile of the path then its south glue is also free (except when $\step=\para$ because $\tilep_{\series_{\para}-\series_{\para-1}-3}$ has no glue on its south side).
\end{proof}

\begin{figure}
\centering

\begin{tikzpicture}[x=0.25cm,y=0.25cm]

%grid 
\path [draw, gray, very thin] (-4,-2) grid[step=0.25cm] (22,6);

%axes
 \draw[->] (-4,0.5) -- (22,0.5);
 \draw[->] (0.5,-2) -- (0.5,6);
 \draw[dotted,thick] (-4,4) -- (22,4);
 \draw[dotted,thick] (20,-2) -- (20,6);

%coordonnées clées
\draw (-1.8,3.5) node {$\series_{1}-1$};
 \draw(0.33,3.5) -- (0.66,3.5);
\draw (10.5,-1) node {$\parb$};
\draw (19.5,-1) node {$2\parb-1$};
 \draw(19.5,0.33) -- (19.5,0.66);

%glue
 \draw[very thick,xgreen] (0.5,0.5) -| (10.5,1.5) -| (9.5,1.5);
 \draw[very thick,xorange] (9.5,1.5) |- (9.5,4);
 \draw[very thick,xblue] (9.5,3.5) -| (19.5,3.5) |- (10.5,2.5);
 \draw[very thick,xred] (10.5,2.5) -| (10.5,2);

%tiles
\draws{0}{0}
\drawg{1}{0}
\drawg{2}{0}
\drawg{3}{0}
\drawg{4}{0}
\drawg{5}{0}
\drawg{6}{0}
\drawg{7}{0}
\drawg{8}{0}
\drawg{9}{0}
\drawg{10}{0}
\drawg{10}{1}
\drawg{9}{1}

\drawo{9}{2}
\drawo{9}{3}

\drawb{10}{3}
\drawb{11}{3}
\drawb{12}{3}
\drawb{13}{3}
\drawb{14}{3}
\drawb{15}{3}
\drawb{16}{3}
\drawb{17}{3}
\drawb{18}{3}
\drawb{19}{3}
\drawb{19}{2}
\drawb{18}{2}
\drawb{17}{2}
\drawb{16}{2}
\drawb{15}{2}
\drawb{14}{2}
\drawb{13}{2}
\drawb{12}{2}
\drawb{11}{2}
\drawb{10}{2}

%tuiles avec colles libre
\draw[fill=black] (9.5,3.5) circle (0.12);
\end{tikzpicture}

\caption{The path $\paths{1}{1}=\pathg\pathy{1}\pathb{1}=\dead{1}$ (for $\para=4$ and $\parb=10$). The south glue of its last tile does a mismatch. Thus, this path is a dead-end .}
\label{fig:path11}
\end{figure}

A corollary of this lemma is that $\paths{1}{1}$ is a dead-end (see Figure \ref{fig:path11}). Also, for all $2\leq j \leq i \leq \para$, let $\vu{i}{j}$ be the vector $(\parb,\series_{i}-\series_{j}+\series_{j-1})$ and since the last tile of $\paths{i}{j}$ is $(\parb,\series_{i}-\series_{j}+\series_{j-1})$ and since the type of its east east glue is $\glueg_0$ then for all $\patha$ producible by $\tas$, if $\paths{i}{j}(\pathg_0+\vu{i}{j})$ is a prefix of $\patha$ then $\patha$ can be written $\paths{i}{j}\pathaa$ where $\pathaa-\vu{i}{j}$ is a path producible by $\tas$.

\subsection{Analyzes of the prefixes.}
\label{long:sec:prefix}

Now, remark that the only tile with a glue $\glueg_0$ on its west side is $\tg_0$ then any path producible by $\tas$ begins by a tile $\tg_0$. In fact, this reasoning can be done for any glue and direction. Thus for any path $\patha$ and any $0\leq i <|\patha|-1 $, if we know the position and the tile type of $\patha_i$ and the position of $\patha_{i+1}$ then we can deduce the tile type of $\patha_{i+1}$. Thus, consider two paths $\patha$ and $\pathaa$ which are producible by $\tasr{\para}$, then they share a common prefix until they split away. They can split away only at a tile with at least three glues on its sides. From lemma \ref{lem:freeglue}, we have the following fact:

\begin{fact}
\label{lem:prefix1}
For any $\para,\parb \in \mathbb{N}$ and for any path $\patha$ producible by $\tas$, either $\patha$ is a prefix of $\pathg\pathy{\para}$ or there exists $1 \leq i \leq \para$ such that $\pathg\pathy{i}\pathb{i}_0$ is a prefix of $\patha$.
\end{fact}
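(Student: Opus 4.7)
The plan is to walk along $\patha$ from its first tile and show that the sequence of placements is forced at every step except at a small set of branching tiles on the orange segment. I rely on the following glue-uniqueness observation: in $\glueset$ the four families $\{\glueg_i\}$, $\{\gluey_i\}$, $\{\glueb_i\}$, $\{\gluep_i\}$ are pairwise disjoint, and within each family every glue label appears on at most one side of at most one tile type. Consequently, once the position and type of $\patha_i$ are fixed, each side on which $\patha$ might extend determines $\patha_{i+1}$ uniquely.

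I would first handle the green prefix. Since $\patha$ is producible by $\tas$ and the seed exposes only $\glueg_0$ on its east, $\patha_0$ must be $\tg_0$. Each $\tileg_i$ with $0 \le i \le \parb-2$ has glues only on west and east; the west glue bonds to the previous tile (or the seed, when $i=0$), so the only possible extension is east, forcing $\patha_{i+1}=\tg_{i+1}$. The successive turns at $\tg_{\parb-1}$, $\tg_\parb$, and $\tg_{\parb+1}$ are forced identically, since each of these tile types carries exactly two glues and the outgoing glue admits a unique matching partner. Hence either $\patha$ is a prefix of $\pathg$, which is itself a prefix of $\pathg\pathy{\para}$, or $\pathg$ is a prefix of $\patha$ and the analysis continues past $\tg_{\parb+1}$.

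Next I would handle the orange portion. By the definition of $\gluesety$, the tile $\tiley_i$ has south glue $\gluey_i$, has north glue $\gluey_{i+1}$ iff $i < \series_\para-3$, and has east glue $\glueb_0$ iff $i = \series_j-3$ for some $1\le j \le \para$. So, walking upward along the orange segment, the extension is forced north to $\ty_{i+1}$ whenever $i \notin \{\series_j - 3 : 1\le j \le \para\}$; at $i=\series_j-3$ with $j<\para$ the path may either continue north or step east; and at $i=\series_\para - 3$ only the east step is possible. An east step from $\ty_{\series_j-3}$ lands at position $(\parb,\series_j-1)$ seeking $\glueb_0$ on its west side, and the unique tile type matching this is $\tileb_0$, which by definition is exactly $\pathb{j}_0$. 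Therefore, if $\patha$ never takes an east step, it remains a prefix of $\pathg\pathy{\para}$, giving case~(a); otherwise, letting $j$ be the first branching index at which it steps east yields $\pathg\pathy{j}\pathb{j}_0$ as a prefix of $\patha$, giving case~(b).

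The main technical care lies in ruling out spurious extensions: a legal extension can only occur on a side carrying an exposed free glue, and no stray tile type in $\glueset$ may accidentally match a glue on the ``wrong'' side or be placed at a position already occupied by the forced prefix. Both are handled by the glue-disjointness observation above together with the simplicity of $\patha$ and Lemma~\ref{lem:position}, which pins down the ambient positions of the forced green and orange segments and so prevents self-intersection.
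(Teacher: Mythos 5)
Your argument is correct and is essentially the paper's own: the paper obtains this fact from the observation that growth from the seed is forced tile-by-tile (a producible path can only branch at a tile carrying at least three glues) together with Lemma~\ref{lem:freeglue}, which is exactly your explicit walk along the green and orange segments with the possible east branches onto $\tileb_0=\pathb{j}_0$. One caveat worth fixing: your blanket uniqueness claim is false as literally stated, since $\glueg_0$ appears on the east side of the seed type and of several red tile types, and $\glueb_0$ on the east side of several orange tile types; what your walk actually needs (and what holds) is that each glue exposed along the forced prefix is matched on the opposite side by a unique tile type, the only ambiguous cases being westward extensions from $\tileg_0$ or $\tileb_0$, which never arise because that side is precisely where those tiles are already bound.
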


There are only $\para$ different prefixes for a producible path with is large enough. Now, let's look at how these different prefixes can grow. For all $1\leq i \leq \para$, by attaching tiles to the end of $\pathg\pathy{i}\pathb{i}_0$, this pall will always grow into $\pathg\pathy{i}\pathb{i}$. For all $1\leq i \leq \para$, we define $\dead{i}$ as the path obtained by attaching red tiles to $\pathg\pathy{i}\pathb{i}$ until it is no more possible (see Figure \ref{fig:dead3}). For the special case $i=\para$, we have $\dead{\para}=\paths{\para}{\para}$ since we run out of red tiles. For the other special case $i=1$, the path $\dead{1}$ is $\paths{1}{1}$ since this path is a dead-end then it is not possible to add any red tiles (see Figure \ref{fig:path11}). For the general case $1<i<\para$, $\dead{i}$ is a dead-end since its last tile is $((\parb,1),\tilep_{\series_{i}-4})$ and its only free glue is the south one which does a mismatch with the tile $\pathg_{\parb}=((\parb,1),\tileg_{\parb})$. In all cases, there are only $i-1$ red tiles with a free glue on their east side (see Lemma \ref{lem:freeglue}). Thus if a path producible by $\tas$ is not a prefix of $\dead{i}$ then it has to use on these free glues. The previous remarks are summarized in the following fact.

\begin{figure}
\centering

\begin{tikzpicture}[x=0.25cm,y=0.25cm]

%grid 
\path [draw, gray, very thin] (-4,-2) grid[step=0.25cm] (22,28);

%axes
 \draw[->] (-4,0.5) -- (22,0.5);
 \draw[->] (0.5,-2) -- (0.5,28);
 \draw[dotted,thick] (-4,26) -- (22,26);
 \draw[dotted,thick] (20,-2) -- (20,28);

%coordonnées clées
\draw (-1.8,3.5) node {$\series_{1}-1$};
 \draw(0.33,3.5) -- (0.66,3.5);
\draw (-1.8,9.5) node {$\series_{2}-1$};
 \draw(0.33,9.5) -- (0.66,9.5);
\draw (-1.8,25.5) node {$\series_{3}-1$};
 \draw(0.33,25.5) -- (0.66,25.5);
\draw (10.5,-1) node {$\parb$};
\draw (19.5,-1) node {$2\parb-1$};
 \draw(19.5,0.33) -- (19.5,0.66);

%glue
 \draw[very thick,xgreen] (0.5,0.5) -| (10.5,1.5) -| (9.5,1.5);
 \draw[very thick,xorange] (9.5,1.5) |- (9.5,26);
 \draw[very thick,xblue] (9.5,25.5) -| (19.5,24.5) -| (10.5,24.5);

 \draw[very thick,xblue] (9.5,3.5) |- (10,3.5);
 \draw[very thick,xblue] (9.5,9.5) |- (10,9.5);

 \draw[very thick,xred] (10.5,24.5) -| (10.5,2);

%tiles
\draws{0}{0}
\drawg{1}{0}
\drawg{2}{0}
\drawg{3}{0}
\drawg{4}{0}
\drawg{5}{0}
\drawg{6}{0}
\drawg{7}{0}
\drawg{8}{0}
\drawg{9}{0}
\drawg{10}{0}
\drawg{10}{1}
\drawg{9}{1}

\drawo{9}{2}
\drawo{9}{3}
\drawo{9}{4}
\drawo{9}{5}
\drawo{9}{6}
\drawo{9}{7}
\drawo{9}{8}
\drawo{9}{9}
\drawo{9}{10}
\drawo{9}{11}
\drawo{9}{12}
\drawo{9}{13}
\drawo{9}{14}
\drawo{9}{15}
\drawo{9}{16}
\drawo{9}{17}
\drawo{9}{18}
\drawo{9}{19}
\drawo{9}{20}
\drawo{9}{21}
\drawo{9}{22}
\drawo{9}{23}
\drawo{9}{24}
\drawo{9}{25}

\drawb{10}{25}
\drawb{11}{25}
\drawb{12}{25}
\drawb{13}{25}
\drawb{14}{25}
\drawb{15}{25}
\drawb{16}{25}
\drawb{17}{25}
\drawb{18}{25}
\drawb{19}{25}
\drawb{19}{24}
\drawb{18}{24}
\drawb{17}{24}
\drawb{16}{24}
\drawb{15}{24}
\drawb{14}{24}
\drawb{13}{24}
\drawb{12}{24}
\drawb{11}{24}
\drawb{10}{24}

\drawr{10}{23}
\drawr{10}{22}
\drawr{10}{21}
\drawr{10}{20}
\drawr{10}{19}
\drawr{10}{18}
\drawr{10}{17}
\drawr{10}{16}
\drawr{10}{15}
\drawr{10}{14}
\drawr{10}{13}
\drawr{10}{12}
\drawr{10}{11}
\drawr{10}{10}
\drawr{10}{9}
\drawr{10}{8}
\drawr{10}{7}
\drawr{10}{6}
\drawr{10}{5}
\drawr{10}{4}
\drawr{10}{3}
\drawr{10}{2}

%tuiles avec colles libre
\draw[fill=black] (9.5,3.5) circle (0.12);
\draw[fill=black] (9.5,9.5) circle (0.12);
\draw[fill=black] (9.5,25.5) circle (0.12);
\draw[fill=black] (10.5,20.5) circle (0.12);
\draw[fill=black] (10.5,10.5) circle (0.12);
\end{tikzpicture}

\caption{The path $\dead{3}$ for $\para=4$ and $\parb=10$. This path is obtained by attaching red tiles to $\pathg\pathy{3}\pathb{3}$ until a conflict occurs with $\pathg$. For all $1\leq i < \para$, the path $\dead{i}$ is a dead-end. The path $\dead{\para}$ is a special case since it is equal to $\paths{\para}{\para}$ (see Figure \ref{fig:def}) whose last tile has a free glue on its east side.}
\label{fig:dead3}
\end{figure}

\begin{fact}
\label{lem:prefix2}
For any $\para,\parb \in \mathbb{N}$ and for any path $\patha$ producible by $\tas$, there exists $1\leq i \leq \para$ such that either:
\begin{itemize}
\item $\patha$ is a prefix of $\dead{i}$;
\item there exists $2 \leq j \leq i$ such that $\paths{i}{j}(\pathg_0+\vu{i}{j})$ is a prefix of $\patha$.
\end{itemize}
Moreover, for all $1\leq i <\para$, $\dead{i}$ is a dead-end.
\end{fact}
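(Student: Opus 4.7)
The plan is to combine Fact~\ref{lem:prefix1} with a forced-growth argument built on the key structural observation made in the paragraph preceding the statement: each glue type of $\glueset$ appears on exactly one side of exactly one tile type, so whenever a tile along $\patha$ has a single free glue remaining, the next tile is uniquely determined. The only locations at which two producible extensions of a common prefix may diverge are therefore the tiles carrying three glues, which are enumerated by Lemma~\ref{lem:freeglue}.

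The argument is then a case analysis. First, if $\patha$ is a prefix of $\pathg\pathy{\para}$ (the first alternative of Fact~\ref{lem:prefix1}), then since $\pathg\pathy{\para}$ is itself a prefix of $\dead{\para}=\paths{\para}{\para}$, we are in the first bullet with $i=\para$. Otherwise, I would fix the unique $1\leq i\leq\para$ given by Fact~\ref{lem:prefix1} for which $\pathg\pathy{i}\pathb{i}_0$ is a prefix of $\patha$. Since the tile types in $\gluesetb$ all carry exactly two glues, the continuation through $\pathb{i}$ is completely forced: either $\patha$ halts inside $\pathb{i}$ and is a prefix of $\pathg\pathy{i}\pathb{i}\subseteq\dead{i}$, or $\pathg\pathy{i}\pathb{i}$ is a prefix of $\patha$. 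In the latter case the south glue $\gluep_0$ of $\tb_{2\parb-1}$ is matched only by $\tilep_0$, launching a downward red segment starting at position $(\parb,\series_i-3)$; by Lemma~\ref{lem:freeglue}, each red tile $\tilep_k$ of this segment has only north and south glues except at the indices $k=\series_j-\series_{j-1}-3$ with $2\leq j\leq\para$, where an additional east glue $\glueg_0$ is present. Hence, upon entering each red tile from above, $\patha$ is forced either to stop or to continue downward, with an extra option at the listed indices to turn east via a $\tileg_0$. Three outcomes are then possible: (a) $\patha$ stops inside the red segment and is a prefix of $\dead{i}$; (b) $\patha$ turns east at some $\tilep_k$ with $k=\series_j-\series_{j-1}-3$, in which case the red prefix placed before the turn is exactly $\pathpp{j}{i}$ and $\paths{i}{j}(\pathg_0+\vu{i}{j})$ is a prefix of $\patha$; or (c) $\patha$ continues downward until the would-be next position is $(\parb,1)$, which is already occupied by $\tg_\parb$ and so blocks any extension, giving $\patha=\dead{i}$. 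The restriction $j\leq i$ in~(b) is geometric: for $j>i$, the would-be branching position has $y$-coordinate $\series_i-\series_j+\series_{j-1}\leq\series_{i-1}-\series_i<0$, hence lies below the green blockage at $y=1$ and is unreachable.

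Finally, for the dead-end claim with $1\leq i<\para$, the last tile of $\dead{i}$ sits at $(\parb,2)$ (it is $\tb_{2\parb-1}$ when $i=1$ and the red tile $\tilep_{\series_i-5}$ otherwise). Its south glue requires a north glue on $\tg_\parb$, which has none; its north glue is matched by the previous tile of $\dead{i}$; and it has no west glue. The only remaining concern is whether this tile carries an east glue $\glueg_0$, which would require $\series_i-5=\series_j-\series_{j-1}-3$, i.e.\ $\series_i=\series_j-\series_{j-1}+2$, for some $2\leq j\leq\para$. Using the recurrence $\series_j=3\series_{j-1}-\series_{j-2}$ one checks that $\series_{j-1}<\series_j-\series_{j-1}+2<\series_j$, so this value cannot coincide with any $\series_i$, ruling out the east glue. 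Hence $\dead{i}$ admits no extension. The main obstacle I expect is bookkeeping rather than conceptual depth: at every step where we invoke ``the next tile is forced,'' one must check that the target position is neither already in $\patha$ nor occupied by $\sigma$, which ultimately rests on the geometric separation of the coloured paths provided by Lemma~\ref{lem:position}.
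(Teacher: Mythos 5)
Your proof is correct and follows essentially the same route as the paper's own justification of this fact: Fact~\ref{lem:prefix1}, then deterministic forced growth through the blue and red segments with branching possible only at the red tiles carrying an east glue $\glueg_0$, blockage at $(\parb,2)$ by the green tile $\tg_{\parb}$, and the verification that the bottom red tile of $\dead{i}$ has no east glue (where you in fact correct the paper's off-by-one, which places the last tile of $\dead{i}$ at $(\parb,1)$ with type $\tilep_{\series_i-4}$ instead of at $(\parb,2)$ with type $\tilep_{\series_i-5}$). One small caveat: your opening claim that ``each glue type of $\glueset$ appears on exactly one side of exactly one tile type'' is literally false (e.g.\ $\glueg_0$ appears on the east side of the seed and of several red tile types); the property you actually use --- for each glue type and each fixed side there is at most one tile type carrying that glue on that side, so the successor tile is determined once the growth direction is fixed --- is the correct statement and is exactly what the paper relies on.
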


Now, we have obtained $\para$ different prefixes which are dead-end and $(\para-1)^2$ prefixes which are not. Now, let's look at how these different prefixes can grow. Consider $2\leq j\leq i \leq \para$ and by attaching tiles to the end of $\paths{i}{j}(\pathg_0+\vu{i}{j})$, this path will always grow into $\paths{i}{j}(\pathg+\vu{i}{j})$ (this assembly is producible by $\tas$ since $\pathg+\vu{i}{j}$ is east of $\pathy{i}$ and $\pathpp{j}{i}$, north of $\pathg$ and south of $\pathb{i}$ (see Lemma \ref{lem:position})). The last tile of this path is $((2\parb-1,\series_{i}-\series_{j}+\series_{j-1}+1),\tileg_{\parb+1})$ and this path can keep growing to the north by attaching orange tiles at its end. We define the path $\deadd{i}{j}$ as the path obtained by attaching orange tiles to $\paths{i}{j}(\pathg+\vu{i}{j})$ until it is no more possible (see Figure \ref{fig:deadd}). The last tile of this path is $((2\parb-1,\series_{i}-3),\tiley_{\series_{j}-\series_{j-1}-5})$ since it not possible to add a orange tile because of a mismatch with the tile $\pathb{i}_{\parb}=((2\parb-1,\series_{i}-2),\tileb_{\parb})$. Moreover if $j>2$, this path is a dead-end because the last tile of this path has no east glue. Also, $\pathg\pathy{j}+\vu{i}{j}$ cannot be a prefix of $\pathaa$ and by applying lemma \ref{lem:prefix1} to $\pathaa-\vu{i}{j}$ we obtain that either $\patha$ is a prefix of $\deadd{i}{j}$ or there exists $1\leq i' <j$ such that $\paths{i}{j}(\pathg\pathy{i'}\pathb{i'}_0)+\vu{i}{j}$ is a prefix of $\patha$. These observations are summarized in the following fact.

\begin{figure}
\centering

\begin{minipage}[c]{.46\linewidth}

\begin{tikzpicture}[x=0.25cm,y=0.25cm]

%grid 
\path [draw, gray, very thin] (-4,-2) grid[step=0.25cm] (23,28);

%axes
 \draw[->] (-4,0.5) -- (23,0.5);
 \draw[->] (0.5,-2) -- (0.5,28);
 \draw[dotted,thick] (-4,26) -- (23,26);
 \draw[dotted,thick] (21,-2) -- (21,28);

%coordonnées clées
\draw (-1.8,3.5) node {$\series_{1}-1$};
 \draw(0.33,3.5) -- (0.66,3.5);
\draw (-1.8,9.5) node {$\series_{2}-1$};
 \draw(0.33,9.5) -- (0.66,9.5);
\draw (-1.8,25.5) node {$\series_{3}-1$};
 \draw(0.33,25.5) -- (0.66,25.5);
\draw (10.5,-1) node {$\parb$};
\draw (19.5,-1) node {$2\parb-1$};
 \draw(19.5,0.33) -- (19.5,0.66);

%glue
 \draw[very thick,xgreen] (0.5,0.5) -| (10.5,1.5) -| (9.5,1.5);
 \draw[very thick,xorange] (9.5,1.5) |- (9.5,26);

 \draw[very thick,xblue] (9.5,3.5) |- (10,3.5);
 \draw[very thick,xblue] (9.5,9.5) |- (10,9.5);

 \draw[very thick,xblue] (9.5,25.5) -| (19.5,24.5) -| (10.5,24.5);
 \draw[very thick,xred] (10.5,24.5) -| (10.5,10);

 \draw[very thick,xgreen] (10.5,20.5) -| (11,20.5);

 \draw[very thick,xgreen] (10.5,10.5) -| (20.5,11.5) -| (19.5,11.5);

 \draw[very thick,xorange] (19.5,11.5) |- (19.5,24);
 \draw[very thick,xblue] (19.5,13.5) |- (20,13.5);
 \draw[very thick,xblue] (19.5,19.5) |- (20,19.5);

%tiles
\draws{0}{0}
\drawg{1}{0}
\drawg{2}{0}
\drawg{3}{0}
\drawg{4}{0}
\drawg{5}{0}
\drawg{6}{0}
\drawg{7}{0}
\drawg{8}{0}
\drawg{9}{0}
\drawg{10}{0}
\drawg{10}{1}
\drawg{9}{1}

\drawo{9}{2}
\drawo{9}{3}
\drawo{9}{4}
\drawo{9}{5}
\drawo{9}{6}
\drawo{9}{7}
\drawo{9}{8}
\drawo{9}{9}
\drawo{9}{10}
\drawo{9}{11}
\drawo{9}{12}
\drawo{9}{13}
\drawo{9}{14}
\drawo{9}{15}
\drawo{9}{16}
\drawo{9}{17}
\drawo{9}{18}
\drawo{9}{19}
\drawo{9}{20}
\drawo{9}{21}
\drawo{9}{22}
\drawo{9}{23}
\drawo{9}{24}
\drawo{9}{25}

\drawb{10}{25}
\drawb{11}{25}
\drawb{12}{25}
\drawb{13}{25}
\drawb{14}{25}
\drawb{15}{25}
\drawb{16}{25}
\drawb{17}{25}
\drawb{18}{25}
\drawb{19}{25}
\drawb{19}{24}
\drawb{18}{24}
\drawb{17}{24}
\drawb{16}{24}
\drawb{15}{24}
\drawb{14}{24}
\drawb{13}{24}
\drawb{12}{24}
\drawb{11}{24}
\drawb{10}{24}

\drawr{10}{23}
\drawr{10}{22}
\drawr{10}{21}
\drawr{10}{20}
\drawr{10}{19}
\drawr{10}{18}
\drawr{10}{17}
\drawr{10}{16}
\drawr{10}{15}
\drawr{10}{14}
\drawr{10}{13}
\drawr{10}{12}
\drawr{10}{11}
\drawr{10}{10}

\drawg{11}{10}
\drawg{12}{10}
\drawg{13}{10}
\drawg{14}{10}
\drawg{15}{10}
\drawg{16}{10}
\drawg{17}{10}
\drawg{18}{10}
\drawg{19}{10}
\drawg{20}{10}
\drawg{20}{11}
\drawg{19}{11}

\drawo{19}{12}
\drawo{19}{13}
\drawo{19}{14}
\drawo{19}{15}
\drawo{19}{16}
\drawo{19}{17}
\drawo{19}{18}
\drawo{19}{19}
\drawo{19}{20}
\drawo{19}{21}
\drawo{19}{22}
\drawo{19}{23}

%tuiles avec colles libre
\draw[fill=black] (9.5,3.5) circle (0.12);
\draw[fill=black] (9.5,9.5) circle (0.12);
\draw[fill=black] (9.5,25.5) circle (0.12);
\draw[fill=black] (10.5,20.5) circle (0.12);
\draw[fill=black] (10.5,10.5) circle (0.12);
\draw[fill=black] (19.5,13.5) circle (0.12);
\draw[fill=black] (19.5,19.5) circle (0.12);
\end{tikzpicture}

\end{minipage} \hfill
\begin{minipage}[c]{.46\linewidth}

\begin{tikzpicture}[x=0.25cm,y=0.25cm]

%grid 
\path [draw, gray, very thin] (-4,-2) grid[step=0.25cm] (23,28);

%axes
 \draw[->] (-4,0.5) -- (23,0.5);
 \draw[->] (0.5,-2) -- (0.5,28);
 \draw[dotted,thick] (-4,26) -- (23,26);
 \draw[dotted,thick] (21,-2) -- (21,28);

%coordonnées clées
\draw (-1.8,3.5) node {$\series_{1}-1$};
 \draw(0.33,3.5) -- (0.66,3.5);
\draw (-1.8,9.5) node {$\series_{2}-1$};
 \draw(0.33,9.5) -- (0.66,9.5);
\draw (-1.8,25.5) node {$\series_{3}-1$};
 \draw(0.33,25.5) -- (0.66,25.5);
\draw (10.5,-1) node {$\parb$};
\draw (19.5,-1) node {$2\parb-1$};
 \draw(19.5,0.33) -- (19.5,0.66);

%glue
 \draw[very thick,xgreen] (0.5,0.5) -| (10.5,1.5) -| (9.5,1.5);
 \draw[very thick,xorange] (9.5,1.5) |- (9.5,26);

 \draw[very thick,xblue] (9.5,3.5) |- (10,3.5);
 \draw[very thick,xblue] (9.5,9.5) |- (10,9.5);

 \draw[very thick,xblue] (9.5,25.5) -| (19.5,24.5) -| (10.5,24.5);
 \draw[very thick,xred] (10.5,24.5) -| (10.5,20);

 \draw[very thick,xgreen] (10.5,20.5) -| (20.5,21.5) -| (19.5,21.5);

 \draw[very thick,xorange] (19.5,21.5) |- (19.5,24);
 \draw[very thick,xblue] (19.5,23.5) |- (20,23.5);

%tiles
\draws{0}{0}
\drawg{1}{0}
\drawg{2}{0}
\drawg{3}{0}
\drawg{4}{0}
\drawg{5}{0}
\drawg{6}{0}
\drawg{7}{0}
\drawg{8}{0}
\drawg{9}{0}
\drawg{10}{0}
\drawg{10}{1}
\drawg{9}{1}

\drawo{9}{2}
\drawo{9}{3}
\drawo{9}{4}
\drawo{9}{5}
\drawo{9}{6}
\drawo{9}{7}
\drawo{9}{8}
\drawo{9}{9}
\drawo{9}{10}
\drawo{9}{11}
\drawo{9}{12}
\drawo{9}{13}
\drawo{9}{14}
\drawo{9}{15}
\drawo{9}{16}
\drawo{9}{17}
\drawo{9}{18}
\drawo{9}{19}
\drawo{9}{20}
\drawo{9}{21}
\drawo{9}{22}
\drawo{9}{23}
\drawo{9}{24}
\drawo{9}{25}

\drawb{10}{25}
\drawb{11}{25}
\drawb{12}{25}
\drawb{13}{25}
\drawb{14}{25}
\drawb{15}{25}
\drawb{16}{25}
\drawb{17}{25}
\drawb{18}{25}
\drawb{19}{25}
\drawb{19}{24}
\drawb{18}{24}
\drawb{17}{24}
\drawb{16}{24}
\drawb{15}{24}
\drawb{14}{24}
\drawb{13}{24}
\drawb{12}{24}
\drawb{11}{24}
\drawb{10}{24}

\drawr{10}{23}
\drawr{10}{22}
\drawr{10}{21}
\drawr{10}{20}

\drawg{11}{20}
\drawg{12}{20}
\drawg{13}{20}
\drawg{14}{20}
\drawg{15}{20}
\drawg{16}{20}
\drawg{17}{20}
\drawg{18}{20}
\drawg{19}{20}
\drawg{20}{20}
\drawg{20}{21}
\drawg{19}{21}

\drawo{19}{22}
\drawo{19}{23}

%tuiles avec colles libre
\draw[fill=black] (9.5,3.5) circle (0.12);
\draw[fill=black] (9.5,9.5) circle (0.12);
\draw[fill=black] (9.5,25.5) circle (0.12);
\draw[fill=black] (10.5,20.5) circle (0.12);
\draw[fill=black] (19.5,23.5) circle (0.12);
\end{tikzpicture}

\end{minipage} \hfill

\caption{The path $\deadd{3}{3}$ (on the left) and $\deadd{3}{2}$ (on the right) for $\para=4$ and $\parb=10$. These paths are obtained by attaching blue and orange tiles to $\paths{3}{3}$ and $\paths{3}{2}$ until a conflict occurs with $\pathb{3}$. For all $2\leq j\leq  i \leq \para$, the path $\deadd{i}{j}$ is a dead-end if $j \neq 2$.}
\label{fig:deadd}
\end{figure}

\begin{fact}
\label{lem:prefix3}
For any $\para,\parb \in \mathbb{N}$ and for any path $\patha$ producible by $\tas$, there exists $1 \leq i \leq \para$ such that either:
\begin{itemize}
\item $\patha$ is a prefix of $\dead{i}$;
\item there exists $2\leq j \leq i$ such that either:
\begin{itemize}
\item $\patha$ is prefix of $\deadd{i}{j}$;
\item there exists $1\leq i' <j$ such that $\paths{i}{j}(\pathg\pathy{i'}\pathb{i'}_0)+\vu{i}{j}$ is a prefix of $\patha$.
\end{itemize}
\end{itemize}
Moreover, for all $2< j \leq i \leq \para$, $\deadd{i}{j}$ is a dead-end.
\end{fact}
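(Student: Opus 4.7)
The plan is to chain Fact~\ref{lem:prefix2} with Fact~\ref{lem:prefix1} applied to a translated tail of $\patha$. Starting from Fact~\ref{lem:prefix2}, either $\patha$ is a prefix of some $\dead{i}$ (the first bullet, immediately done) or there are indices $2\leq j\leq i\leq \para$ with $\paths{i}{j}(\pathg_0+\vu{i}{j})$ a prefix of $\patha$. In the latter case I use the remark after Lemma~\ref{lem:freeglue} to factor $\patha = \paths{i}{j}\pathaa$ where $\pathaa - \vu{i}{j}$ is a producible path whose first tile has type $\tileg_0$, and I feed this translated tail back into Fact~\ref{lem:prefix1}.

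Fact~\ref{lem:prefix1} then gives two subcases. If $\pathaa - \vu{i}{j}$ is a prefix of $\pathg\pathy{\para}$, I claim $\patha$ is a prefix of $\deadd{i}{j}$. A short positional check shows that the tile $\pathy{\para}_{k}$ with $k=\series_j-\series_{j-1}-4$ is translated by $\vu{i}{j}$ to $(2\parb-1,\series_i-2)$, the position of $\pathb{i}_{\parb}$ in $\paths{i}{j}$, whose type $\tileb_\parb$ differs from the orange $\tiley_k$; all translated tiles at smaller indices fit without conflict with $\paths{i}{j}$ (by Lemma~\ref{lem:position}), which is exactly the construction of $\deadd{i}{j}$. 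If instead $\pathg\pathy{i'}\pathb{i'}_0$ is a prefix of $\pathaa - \vu{i}{j}$ for some $1\leq i'\leq \para$, the required inequality $i'<j$ follows from the same collision: if $i'\geq j$ then $\pathy{j}$ is a prefix of $\pathy{i'}$ (see Section~\ref{long:sec:def}), so $(\pathg\pathy{j})+\vu{i}{j}$ would be a prefix of $\pathaa$, and the tile $\pathy{j}_{\series_j-\series_{j-1}-4}+\vu{i}{j}$ would collide with $\pathb{i}_{\parb}$ inside $\patha$, contradicting the injectivity of a path.

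The main obstacle is the ``moreover'' clause, that $\deadd{i}{j}$ is a dead-end whenever $2<j\leq i\leq \para$. Its last tile $((2\parb-1,\series_i-3),\tiley_{\series_j-\series_{j-1}-5})$ has three easily blocked sides: the south glue is consumed by the preceding tile, the west side carries no glue by the definition of orange tile types, and the northern neighbor $(2\parb-1,\series_i-2)$ is already occupied by $\pathb{i}_{\parb}$ of type $\tileb_\parb$ whose south side also carries no glue. The delicate side is east: by definition $\tiley_{\series_j-\series_{j-1}-5}$ has an east glue $\glueb_0$ iff $\series_j-\series_{j-1}-2$ appears in the sequence $(\series_\ell)_\ell$. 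Using the recurrence $\series_j=3\series_{j-1}-\series_{j-2}$, one checks $\series_{j-1}<\series_j-\series_{j-1}-2<\series_j$ for $j>2$ (the lower bound unfolds to $\series_{j-1}-\series_{j-2}>2$, which is clear because the gaps of $\series$ are strictly increasing from $\series_2-\series_1=6$), so the value lies strictly between two consecutive terms of $\series$ and no such $\step$ exists. The excluded case $j=2$ is precisely where $\series_j-\series_{j-1}-2=\series_1$ would produce an east glue, consistent with the fact that $\deadd{i}{2}$ need not be a dead-end.
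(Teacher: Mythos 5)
Your proposal is correct and follows essentially the same route as the paper: reduce via Fact~\ref{lem:prefix2}, factor $\patha=\paths{i}{j}\pathaa$ using the remark after Lemma~\ref{lem:freeglue}, apply Fact~\ref{lem:prefix1} to $\pathaa-\vu{i}{j}$, and use the collision of the translated orange column with $\pathb{i}_{\parb}$ at $(2\parb-1,\series_i-2)$ both to cap the growth at $\deadd{i}{j}$ and to force $i'<j$. Your explicit verification that $\series_j-\series_{j-1}-2$ is never a term of $(\series_\ell)$ for $j>2$ (and equals $\series_1$ exactly when $j=2$) is a welcome spelling-out of the east-glue claim the paper only asserts, but it is the same argument in substance.
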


\begin{figure}
\centering

\begin{tikzpicture}[x=0.25cm,y=0.25cm]

%grid 
\path [draw, gray, very thin] (-4,-2) grid[step=0.25cm] (52,70);

%axes
 \draw[->] (-4,0.5) -- (52,0.5);
 \draw[->] (0.5,-2) -- (0.5,70);
 \draw[dotted,thick] (-4,68) -- (52,68);
 \draw[dotted,thick] (30,-2) -- (30,70);

%coordonnées clées
\draw (-1.8,3.5) node {$\series_{1}-1$};
 \draw(0.33,3.5) -- (0.66,3.5);
\draw (-1.8,9.5) node {$\series_{2}-1$};
 \draw(0.33,9.5) -- (0.66,9.5);
\draw (-1.8,25.5) node {$\series_{3}-1$};
 \draw(0.33,25.5) -- (0.66,25.5);
\draw (-1.8,67.5) node {$\series_{4}-1$};
 \draw(0.33,67.5) -- (0.66,67.5);
\draw (10.5,-1) node {$\parb$};
\draw (19.5,-1) node {$2\parb-1$};
 \draw(19.5,0.33) -- (19.5,0.66);
\draw (29.5,-1) node {$3\parb-1$};
 \draw(29.5,0.33) -- (29.5,0.66);
\draw (39.5,-1) node {$4\parb-1$};
 \draw(39.5,0.33) -- (39.5,0.66);
\draw (49.5,-1) node {$5\parb-1$};
 \draw(49.5,0.33) -- (49.5,0.66);

%glue
 \draw[very thick,xgreen] (0.5,0.5) -| (10.5,1.5) -| (9.5,1.5);
 \draw[very thick,xorange] (9.5,1.5) |- (9.5,67.5);
 \draw[very thick,xblue] (9.5,3.5) |- (10,3.5);
 \draw[very thick,xblue] (9.5,9.5) |- (10,9.5);
 \draw[very thick,xblue] (9.5,25.5) -| (10,25.5);
 \draw[very thick,xblue] (9.5,67.5) -| (19.5,66.5) -| (10.5,66.5);
 \draw[very thick,xred] (10.5,66.5) -| (10.5,52);
 \draw[very thick,xgreen] (10.5,62.5) -| (11,62.5);
 \draw[very thick,xgreen] (10.5,52.5) -| (20.5,53.5) -| (19.5,53.5);
 \draw[very thick,xorange] (19.5,53.5) |- (19.5,62);
 \draw[very thick,xblue] (19.5,55.5) |- (20,55.5);
 \draw[very thick,xblue] (19.5,61.5) -| (29.5,60.5) -| (20.5,60.5);
 \draw[very thick,xred] (20.5,60.5) -| (20.5,54);
 \draw[very thick,xgreen] (20.5,56.5) -| (21,56.5);

%tiles
\draws{0}{0}
\drawg{1}{0}
\drawg{2}{0}
\drawg{3}{0}
\drawg{4}{0}
\drawg{5}{0}
\drawg{6}{0}
\drawg{7}{0}
\drawg{8}{0}
\drawg{9}{0}
\drawg{10}{0}
\drawg{10}{1}
\drawg{9}{1}

\drawo{9}{2}
\drawo{9}{3}
\drawo{9}{4}
\drawo{9}{5}
\drawo{9}{6}
\drawo{9}{7}
\drawo{9}{8}
\drawo{9}{9}
\drawo{9}{10}
\drawo{9}{11}
\drawo{9}{12}
\drawo{9}{13}
\drawo{9}{14}
\drawo{9}{15}
\drawo{9}{16}
\drawo{9}{17}
\drawo{9}{18}
\drawo{9}{19}
\drawo{9}{20}
\drawo{9}{21}
\drawo{9}{22}
\drawo{9}{23}
\drawo{9}{24}
\drawo{9}{25}
\drawo{9}{26}
\drawo{9}{27}
\drawo{9}{28}
\drawo{9}{29}
\drawo{9}{30}
\drawo{9}{31}
\drawo{9}{32}
\drawo{9}{33}
\drawo{9}{34}
\drawo{9}{35}
\drawo{9}{36}
\drawo{9}{37}
\drawo{9}{38}
\drawo{9}{39}
\drawo{9}{40}
\drawo{9}{41}
\drawo{9}{42}
\drawo{9}{43}
\drawo{9}{44}
\drawo{9}{45}
\drawo{9}{46}
\drawo{9}{47}
\drawo{9}{48}
\drawo{9}{49}
\drawo{9}{50}
\drawo{9}{51}
\drawo{9}{52}
\drawo{9}{53}
\drawo{9}{54}
\drawo{9}{55}
\drawo{9}{56}
\drawo{9}{57}
\drawo{9}{58}
\drawo{9}{59}
\drawo{9}{60}
\drawo{9}{61}
\drawo{9}{62}
\drawo{9}{63}
\drawo{9}{64}
\drawo{9}{65}
\drawo{9}{66}
\drawo{9}{67}

\drawb{10}{67}
\drawb{11}{67}
\drawb{12}{67}
\drawb{13}{67}
\drawb{14}{67}
\drawb{15}{67}
\drawb{16}{67}
\drawb{17}{67}
\drawb{18}{67}
\drawb{19}{67}
\drawb{19}{66}
\drawb{18}{66}
\drawb{17}{66}
\drawb{16}{66}
\drawb{15}{66}
\drawb{14}{66}
\drawb{13}{66}
\drawb{12}{66}
\drawb{11}{66}
\drawb{10}{66}

\drawr{10}{65}
\drawr{10}{64}
\drawr{10}{63}
\drawr{10}{62}
\drawr{10}{61}
\drawr{10}{60}
\drawr{10}{59}
\drawr{10}{58}
\drawr{10}{57}
\drawr{10}{56}
\drawr{10}{55}
\drawr{10}{54}
\drawr{10}{53}
\drawr{10}{52}

%path D2
\drawg{11}{52}
\drawg{12}{52}
\drawg{13}{52}
\drawg{14}{52}
\drawg{15}{52}
\drawg{16}{52}
\drawg{17}{52}
\drawg{18}{52}
\drawg{19}{52}
\drawg{20}{52}
\drawg{20}{53}
\drawg{19}{53}

\drawo{19}{54}
\drawo{19}{55}
\drawo{19}{56}
\drawo{19}{57}
\drawo{19}{58}
\drawo{19}{59}
\drawo{19}{60}
\drawo{19}{61}

\drawb{20}{61}
\drawb{21}{61}
\drawb{22}{61}
\drawb{23}{61}
\drawb{24}{61}
\drawb{25}{61}
\drawb{26}{61}
\drawb{27}{61}
\drawb{28}{61}
\drawb{29}{61}
\drawb{29}{60}
\drawb{28}{60}
\drawb{27}{60}
\drawb{26}{60}
\drawb{25}{60}
\drawb{24}{60}
\drawb{23}{60}
\drawb{22}{60}
\drawb{21}{60}
\drawb{20}{60}

\drawr{20}{59}
\drawr{20}{58}
\drawr{20}{57}
\drawr{20}{56}
\drawr{20}{55}
\drawr{20}{54}

%tuiles avec colles libre
\draw[fill=black] (9.5,3.5) circle (0.12);
\draw[fill=black] (9.5,9.5) circle (0.12);
\draw[fill=black] (9.5,25.5) circle (0.12);
\draw[fill=black] (9.5,67.5) circle (0.12);
\draw[fill=black] (10.5,62.5) circle (0.12);
\draw[fill=black] (10.5,52.5) circle (0.12);
\draw[fill=black] (19.5,55.5) circle (0.12);
\draw[fill=black] (19.5,61.5) circle (0.12);
\draw[fill=black] (20.5,56.5) circle (0.12);

\draw[->,>=stealth,thick](1.5,0.5)--(11.5, 52.5) node [midway,above,sloped]{$\vu{4}{3}$};

\end{tikzpicture}

\caption{This path is the concatenation of $\paths{4}{3}$ and $\dead{2}$ up to some translation. It is producible by $\tas$ (for $\para=4$ and $\parb=10$) and it is a dead-end this its suffix is $\dead{2}$ up to some translation.}
\label{fig:def}
\end{figure}

\subsection{Analysis of the tile assembly system.}
\label{long:sec:analyzes}

Consider a path $\patha$ producible by $\tas$ then Fact \ref{lem:prefix3} gives us an hint to the structure of $\patha$. Indeed, this path is made of several paths $\paths{i_1}{j_1}$, $\paths{i_2}{j_2}$, $\paths{i_3}{j_3}$, $\ldots$ which are glued together up to some translations (see Figure \ref{}). The index $i$ (resp. $j$) represents which east glue of an orange (resp. red) tile is used. Moreover, Fact \ref{lem:prefix3} also implies that $i_1<i_2<i_3 < \ldots$ and thus eventually one the three dead-end $\paths{1}{1}$, $\dead{i}$, $\deadd{i}{j}$ (for some $2\leq j \leq i \leq \para$) will grow up to some translation and the path will become a dead-end too. We use this sketch of proof to show that the width and height of any producible path is bounded.

\begin{lemma}
\label{lem:prefixbound}
For any $\para,\parb \in \mathbb{N}$ and for any path $\patha$ producible by $\tas$, if there exists $1\leq i \leq \para$ such that $\pathg\pathy{i}\pathb{i}_0$ is a prefix of $\patha$ then $\patha_0$ is its southernmost and westernmost tile and its height is $\series_{i}-1$ and its width is bounded by $(i+1)\parb-1$.
\end{lemma}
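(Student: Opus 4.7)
The plan is to prove the claim by strong induction on $i$, driven by the recursive structure of producible paths given in Fact~\ref{lem:prefix3}. Throughout, I will use Lemma~\ref{lem:position} to locate the constituent sub-paths, and the remark following Lemma~\ref{lem:freeglue} in order to recurse on a suitably translated suffix of $\patha$.

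For the base case $i=1$, Fact~\ref{lem:prefix3} forces $\patha$ to be a prefix of $\dead{1}$, because the second option requires some $j\geq 2$, impossible when $i=1$. Moreover, $\dead{1}=\paths{1}{1}=\pathg\pathy{1}\pathb{1}$ (no red tiles can be attached), so Lemma~\ref{lem:position} immediately yields that the tile positions lie in $\{1,\ldots,2\parb-1\}\times\{0,\ldots,\series_1-1\}$. Since $\patha_0=\pathg_0=((1,0),\tileg_0)$ attains the minimum of both coordinates, the westernmost and southernmost claim follows, the width is $\leq 2\parb-1=(1+1)\parb-1$, and the height is $\series_1-1$.

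For the inductive step, assume the claim for all $i'<i$ and apply Fact~\ref{lem:prefix3} to $\patha$. In the two sub-cases where $\patha$ is a prefix of $\dead{i}$ or of $\deadd{i}{j}$ for some $2\leq j\leq i$, the tile positions are explicitly determined by the construction and Lemma~\ref{lem:position}: the added red tiles of $\dead{i}$ occupy column $\parb$ between rows $1$ and $\series_i-3$, and the added translated green/orange tiles of $\deadd{i}{j}$ occupy columns at most $2\parb$ and rows between $\series_i-\series_j+\series_{j-1}$ and $\series_i-3$. In each case the path fits inside $\{1,\ldots,2\parb\}\times\{0,\ldots,\series_i-1\}$, with $2\parb\leq (i+1)\parb-1$ since $i\geq 2$ in case $\deadd{i}{j}$. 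The remaining sub-case is $\paths{i}{j}(\pathg\pathy{i'}\pathb{i'}_0)+\vu{i}{j}$ being a prefix of $\patha$ for some $1\leq i'<j\leq i$. Then the suffix $\pathaa$ of $\patha$ starting at the translated copy of $\pathg_0$ is such that $\pathaa-\vu{i}{j}$ is a producible path of $\tas$ with $\pathg\pathy{i'}\pathb{i'}_0$ as prefix; the induction hypothesis (valid because $i'<j\leq i$) gives that its first tile is the southwestmost, its height is $\series_{i'}-1$, and its width is at most $(i'+1)\parb-1$. Translating back by $\vu{i}{j}=(\parb,\series_i-\series_j+\series_{j-1})$, every tile of $\pathaa$ has $x\geq 1+\parb>1$ and $y\geq\series_i-\series_j+\series_{j-1}\geq 2>0$, so $\patha_0$ remains the southwestmost tile of all of $\patha$; the column range of $\pathaa$ is contained in $[1+\parb,(i'+2)\parb-1]$, and since $i'+2\leq j+1\leq i+1$, the total width is at most $(i+1)\parb-1$.

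The main obstacle is verifying the height bound in the recursive sub-case: the northernmost row of $\pathaa$ is $\series_i-\series_j+\series_{j-1}+(\series_{i'}-1)$, and this must be at most $\series_i-1$, equivalently $\series_{i'}\leq \series_j-\series_{j-1}$. Since $i'\leq j-1$, I have $\series_{i'}\leq\series_{j-1}$, so it suffices to show $2\series_{j-1}\leq\series_j$. Plugging in the recurrence $\series_j=3\series_{j-1}-\series_{j-2}$, this reduces to $\series_{j-2}\leq\series_{j-1}$, which follows from the monotonicity of $(\series_k)$ (a trivial sub-lemma: $\series_0=2\leq\series_1=4$, and if $\series_{k-2}\leq\series_{k-1}$ then $\series_k=3\series_{k-1}-\series_{k-2}\geq 2\series_{k-1}\geq\series_{k-1}$). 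Combining the three sub-cases closes the induction.
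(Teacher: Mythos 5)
Your proof is correct and follows essentially the same route as the paper's: induction on $i$, with the base case and the dead-end prefixes $\dead{i}$, $\deadd{i}{j}$ handled directly via Lemma~\ref{lem:position} and Facts~\ref{lem:prefix2}--\ref{lem:prefix3}, and the remaining case handled by applying the induction hypothesis to the translated suffix $\pathaa-\vu{i}{j}$. Your version is in fact somewhat more careful than the paper's, since you explicitly verify the southernmost/westernmost claim and the height arithmetic $\series_{i'}\leq\series_j-\series_{j-1}$ via monotonicity of $(\series_k)$, which the paper leaves implicit.
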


\begin{proof}
This proof is done by recurrence on $i$. For $i=1$, if $\pathg\pathy{1}\pathb{1}_0$ is a prefix of $\patha$ then by Fact~\ref{lem:prefix2}, $\patha$ is a prefix of $\paths{1}{1}$ and thus its height is $\series_1-1=3$ and its width is bounded by $2\parb-1$ (see Lemma \ref{lem:position}). Then the initialization of the recurrence is true. Now, consider $1\leq i \leq \para-1$ such that the hypothesis of recurrence is true for all $1\leq i' \leq i$ and consider a path $\patha$ producible by $\tas$ whose prefix is $\pathg\pathy{i+1}\pathb{i+1}_0$. If $\patha$ is a prefix of $\dead{i+1}$ or $\deadd{i+1}{j}$ (for some $1\leq j \leq i+1$) then its height is $\series_{i+1}-1$ and its width is bounded by $2\parb$ (see lemma \ref{lem:position}). In this case, the hypothesis of recurrence is true for $i+1$. Otherwise by fact \ref{lem:prefix3}, there exists $1\leq i'< j \leq i+1$ such that $\patha$ can be written as $\paths{i+1}{j}\pathaa$ and where $(\pathg\pathy{i'}\pathb{i'}_0)+\vu{i+1}{j}$ is a prefix of $\pathaa$ then by recurrence the height of $\pathaa-\vu{i+1}{j}$ is $\series_{i'}-1$ and its width is bounded by $(i'+1)\parb-1$. Thus the height of $\patha$ is bounded by $\series_{i+1}$ and its width is bounded by $(i+2)\parb-1$. The recurrence is true for $i+1$.
\end{proof}

This result and fact \ref{lem:prefix1} imply that we are working on an area of the plane delimited by $0\leq x \leq (\para+1)\parb-1$ and $0\leq y \leq \series_{\parb+1}$ and that for any path $\patha$ producible by $\tas$, its first tile is the westernmost and southernmost one.

\begin{corollary}
\label{cor:bounded}
For any $\para,\parb \in \mathbb{N}$ and for any terminal assembly $\alpha \in \termasm{\tas}$, the height of $\alpha$ is bounded by $\series_{\parb+1}$ and its width is bounded by $(\para+1)\parb-1$.
\end{corollary}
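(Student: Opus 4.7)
The plan is to reduce the corollary to a union bound over the producible paths passing through each tile of $\alpha$, combining the Observation of Section~\ref{sec:defs-paths} with Fact~\ref{lem:prefix1} and Lemma~\ref{lem:prefixbound}. Concretely, I want to show that every tile of $\alpha$ lies in the rectangle $[0,(\para+1)\parb-1]\times[0,\series_\para-1]$; the height and width bounds then follow immediately from the definitions of width and height as column/row counts.

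First, I would fix an arbitrary tile $((x,y),t)\in\alpha$. If $(x,y)=(0,0)$ this is the seed and the bound holds trivially, so assume otherwise. By the Observation in Section~\ref{sec:defs-paths}, there is a producible path $\patha\in\prodpathsT$ that passes through $((x,y),t)$. Since the seed $\sigma$ has only the glue $\glueg_0$ on its east side and the unique tile type carrying $\glueg_0$ on its west side is $\tileg_0$ (placed at $(1,0)$), any producible path begins at position $(1,0)$; in particular $\patha_0 = ((1,0),\tileg_0)$.

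Next, I would apply Fact~\ref{lem:prefix1} to dichotomize $\patha$. In the first branch, $\patha$ is a prefix of $\pathg\pathy{\para}$, and Lemma~\ref{lem:position} directly gives $1\leq x\leq \parb$ and $0\leq y\leq \series_\para-1$. In the second branch, $\pathg\pathy{i}\pathb{i}_0$ is a prefix of $\patha$ for some $1\leq i\leq \para$, and Lemma~\ref{lem:prefixbound} states both that $\patha_0$ is its southernmost and westernmost tile, and that $\patha$ has height $\series_i-1\leq \series_\para-1$ and width at most $(i+1)\parb-1\leq (\para+1)\parb-1$. Since $\patha_0=(1,0)$, this forces $1\leq x\leq (\para+1)\parb-1$ and $0\leq y\leq \series_\para-1$. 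Either way, $(x,y)$ lies in the target rectangle.

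Taking the union over all tiles of $\alpha$ (together with the seed at $(0,0)$), $\dom{\alpha}\subseteq [0,(\para+1)\parb-1]\times[0,\series_\para-1]$, which gives the claimed bounds. Essentially no new ideas are required: the result is a packaging of Lemma~\ref{lem:prefixbound} for all tiles at once, and the only subtle step is making sure the quantification is over every tile of $\alpha$, which is what the Observation supplies. The one bit of care worth highlighting is that the same universal starting position $(1,0)$ applies to every producible path, so the per-path bounds (which are stated relative to the path's first tile) uniformly yield bounds in the fixed coordinate system of $\alpha$.
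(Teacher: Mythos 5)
Your proof is correct and is essentially the paper's own argument: the paper obtains this corollary directly by combining Fact~\ref{lem:prefix1} with Lemma~\ref{lem:prefixbound} (every producible path starts at the tile just east of the seed, whose first tile is westernmost and southernmost), exactly the packaging you describe via the Observation of Section~\ref{sec:defs-paths}. The only blemishes (the off-by-one between ``number of columns'' and the coordinate bound $(\para+1)\parb-1$, and the odd index $\series_{\parb+1}$ in the height bound) are already present in the paper's statement and are not introduced by your argument.
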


Now, we aim to assemble the largest path possible. Note that, for all $1\leq i \leq \para$, the last tile of the path $\paths{i}{i}$ is at distance $\parb$ to the east of its first tile (this property is due to the definition of the green tiles). Thus, if we manage to glue $\para$ paths together, we will obtain a path whose width is $(\para+1)\parb-1$. This is the path we are looking for. Formally, we define the path $\pathl^1$ as $\paths{1}{1}$ and for all $2\leq i \leq \para$, we define $\pathl^{i}$ as $\paths{i}{i}(\pathl^{i-1}+\vu{i}{i})$. We remind that $\vu{i}{i}=(\parb,\series_{i-1})$. Now, we show that this sequence of tiles is a path producible by $\tas$.

\begin{figure}
\centering

\begin{tikzpicture}[x=0.25cm,y=0.25cm]

%grid 
\path [draw, gray, very thin] (-4,-2) grid[step=0.25cm] (52,70);

%axes
 \draw[->] (-4,0.5) -- (52,0.5);
 \draw[->] (0.5,-2) -- (0.5,70);
 \draw[dotted,thick] (-4,68) -- (52,68);
 \draw[dotted,thick] (20,-2) -- (20,70);

%coordonnées clées
\draw (-1.8,3.5) node {$\series_{1}-1$};
 \draw(0.33,3.5) -- (0.66,3.5);
\draw (-1.8,9.5) node {$\series_{2}-1$};
 \draw(0.33,9.5) -- (0.66,9.5);
\draw (-1.8,25.5) node {$\series_{3}-1$};
 \draw(0.33,25.5) -- (0.66,25.5);
\draw (-1.8,67.5) node {$\series_{4}-1$};
 \draw(0.33,67.5) -- (0.66,67.5);
\draw (10.5,-1) node {$\parb$};
\draw (19.5,-1) node {$2\parb-1$};
 \draw(19.5,0.33) -- (19.5,0.66);
\draw (29.5,-1) node {$3\parb-1$};
 \draw(29.5,0.33) -- (29.5,0.66);
\draw (39.5,-1) node {$4\parb-1$};
 \draw(39.5,0.33) -- (39.5,0.66);
\draw (49.5,-1) node {$5\parb-1$};
 \draw(49.5,0.33) -- (49.5,0.66);

%glue
 \draw[very thick,xgreen] (0.5,0.5) -| (10.5,1.5) -| (9.5,1.5);
 \draw[very thick,xorange] (9.5,1.5) |- (9.5,67.5);
 \draw[very thick,xblue] (9.5,3.5) |- (10,3.5);
 \draw[very thick,xblue] (9.5,9.5) |- (10,9.5);
 \draw[very thick,xblue] (9.5,25.5) -| (10,25.5);
 \draw[very thick,xblue] (9.5,67.5) -| (19.5,66.5) -| (10.5,66.5);
 \draw[very thick,xred] (10.5,66.5) -| (10.5,26.5);
 \draw[very thick,xgreen] (10.5,26.5) -| (11,26.5);
 \draw[very thick,xgreen] (10.5,52.5) -| (11,52.5);
 \draw[very thick,xgreen] (10.5,62.5) -| (11,62.5);

%tiles
\draws{0}{0}
\drawg{1}{0}
\drawg{2}{0}
\drawg{3}{0}
\drawg{4}{0}
\drawg{5}{0}
\drawg{6}{0}
\drawg{7}{0}
\drawg{8}{0}
\drawg{9}{0}
\drawg{10}{0}
\drawg{10}{1}
\drawg{9}{1}

\drawo{9}{2}
\drawo{9}{3}
\drawo{9}{4}
\drawo{9}{5}
\drawo{9}{6}
\drawo{9}{7}
\drawo{9}{8}
\drawo{9}{9}
\drawo{9}{10}
\drawo{9}{11}
\drawo{9}{12}
\drawo{9}{13}
\drawo{9}{14}
\drawo{9}{15}
\drawo{9}{16}
\drawo{9}{17}
\drawo{9}{18}
\drawo{9}{19}
\drawo{9}{20}
\drawo{9}{21}
\drawo{9}{22}
\drawo{9}{23}
\drawo{9}{24}
\drawo{9}{25}
\drawo{9}{26}
\drawo{9}{27}
\drawo{9}{28}
\drawo{9}{29}
\drawo{9}{30}
\drawo{9}{31}
\drawo{9}{32}
\drawo{9}{33}
\drawo{9}{34}
\drawo{9}{35}
\drawo{9}{36}
\drawo{9}{37}
\drawo{9}{38}
\drawo{9}{39}
\drawo{9}{40}
\drawo{9}{41}
\drawo{9}{42}
\drawo{9}{43}
\drawo{9}{44}
\drawo{9}{45}
\drawo{9}{46}
\drawo{9}{47}
\drawo{9}{48}
\drawo{9}{49}
\drawo{9}{50}
\drawo{9}{51}
\drawo{9}{52}
\drawo{9}{53}
\drawo{9}{54}
\drawo{9}{55}
\drawo{9}{56}
\drawo{9}{57}
\drawo{9}{58}
\drawo{9}{59}
\drawo{9}{60}
\drawo{9}{61}
\drawo{9}{62}
\drawo{9}{63}
\drawo{9}{64}
\drawo{9}{65}
\drawo{9}{66}
\drawo{9}{67}

\drawb{10}{67}
\drawb{11}{67}
\drawb{12}{67}
\drawb{13}{67}
\drawb{14}{67}
\drawb{15}{67}
\drawb{16}{67}
\drawb{17}{67}
\drawb{18}{67}
\drawb{19}{67}
\drawb{19}{66}
\drawb{18}{66}
\drawb{17}{66}
\drawb{16}{66}
\drawb{15}{66}
\drawb{14}{66}
\drawb{13}{66}
\drawb{12}{66}
\drawb{11}{66}
\drawb{10}{66}

\drawr{10}{65}
\drawr{10}{64}
\drawr{10}{63}
\drawr{10}{62}
\drawr{10}{61}
\drawr{10}{60}
\drawr{10}{59}
\drawr{10}{58}
\drawr{10}{57}
\drawr{10}{56}
\drawr{10}{55}
\drawr{10}{54}
\drawr{10}{53}
\drawr{10}{52}
\drawr{10}{51}
\drawr{10}{50}
\drawr{10}{49}
\drawr{10}{48}
\drawr{10}{47}
\drawr{10}{46}
\drawr{10}{45}
\drawr{10}{44}
\drawr{10}{43}
\drawr{10}{42}
\drawr{10}{41}
\drawr{10}{40}
\drawr{10}{39}
\drawr{10}{38}
\drawr{10}{37}
\drawr{10}{36}
\drawr{10}{35}
\drawr{10}{34}
\drawr{10}{33}
\drawr{10}{32}
\drawr{10}{31}
\drawr{10}{30}
\drawr{10}{29}
\drawr{10}{28}
\drawr{10}{27}
\drawr{10}{26}

%tuiles avec colles libre
\draw[fill=black] (9.5,3.5) circle (0.12);
\draw[fill=black] (9.5,9.5) circle (0.12);
\draw[fill=black] (9.5,25.5) circle (0.12);
\draw[fill=black] (9.5,67.5) circle (0.12);
\draw[fill=black] (10.5,62.5) circle (0.12);
\draw[fill=black] (10.5,52.5) circle (0.12);
\draw[fill=black] (10.5,26.5) circle (0.12);
\end{tikzpicture}

\caption{In our examples, we consider $\para=4$ and $\parb=10$. The seed (in white) is at position $(0,0)$ and we represent the path $\paths{4}{4}=\pathg\pathy{4}\pathb{4}\pathpp{4}{4}$. This path is producible by $\tas$ and this figure contains exactly one occurrence of each tile type of $\glueset$. The height of $\paths{4}{4}$ is $\series_{\para}-1$ and its width is $2\parb-1$. The tiles of $\pathy{4}$ and $\pathpp{4}{4}$ with a glue on their east side are marked by a black dot.}
\label{fig:def}
\end{figure}

\begin{lemma}
\label{lem:longpath}
For all $1 \leq i \leq \para$, $\pathl^i$ is a path producible by $\tas$ of width $(i+1)\parb-1$.
\end{lemma}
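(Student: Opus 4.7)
The plan is to prove the lemma by induction on $i$. For the base case $i=1$, we have $\pathl^1 = \paths{1}{1}$, which is producible by Lemma~\ref{lem:producible}. Lemma~\ref{lem:position} tells us that the tiles of $\paths{1}{1}$ occupy the columns $x\in\{1,2,\ldots,2\parb-1\}$ (the green tiles reach $x=\parb$, and the blue ones go out to $x=2\parb-1$), so the width is $2\parb-1=(1+1)\parb-1$ as required.

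For the inductive step, assume $\pathl^{i-1}$ is a producible path of width $i\parb-1$, and consider $\pathl^i=\paths{i}{i}(\pathl^{i-1}+\vu{i}{i})$ with $\vu{i}{i}=(\parb,\series_{i-1})$. First, I would verify that the concatenation is syntactically a path: by Lemma~\ref{lem:freeglue}, the last tile of $\paths{i}{i}$ sits at position $(\parb,\series_{i-1})$ and carries the east glue $\glueg_0$, while the first tile of $\pathl^{i-1}+\vu{i}{i}$, namely $\tg_0+\vu{i}{i}$, sits at the adjacent position $(\parb+1,\series_{i-1})$ and carries the west glue $\glueg_0$, so the two tiles interact as required.

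The main obstacle is to check that $\paths{i}{i}$ and $\pathl^{i-1}+\vu{i}{i}$ do not intersect, which is also what is needed to conclude that $\pathl^i$ is simple and producible (everything else being inherited from the inductive hypothesis and Lemma~\ref{lem:producible}). By Lemma~\ref{lem:position}, the positions of $\paths{i}{i}$ satisfy $x\leq 2\parb-1$, while $\pathl^{i-1}$ has all its tiles at $x\geq 1$ by the base-case positional analysis and the shape of each added block, so $\pathl^{i-1}+\vu{i}{i}$ has $x\geq \parb+1$. Therefore the two sub-paths can only meet in the strip $\parb+1\leq x\leq 2\parb-1$, and in that strip the only tiles of $\paths{i}{i}$ are the blue tiles of $\pathb{i}$, located at $y\in\{\series_i-2,\series_i-1\}$. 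On the other hand, $\pathl^{i-1}$ has all tiles with $y\leq \series_{i-1}-1$, so $\pathl^{i-1}+\vu{i}{i}$ has $y\leq 2\series_{i-1}-1$. The disjointness then reduces to the inequality $2\series_{i-1}-1<\series_i-2$, which follows immediately from the recurrence $\series_i=3\series_{i-1}-\series_{i-2}$ and the monotonicity of $(\series_j)$: indeed, $\series_i-2\series_{i-1}-1=\series_{i-1}-\series_{i-2}-1\geq 1$ for all $i\geq 2$ (checking the case $i=2$ by hand and using $\series_{i-1}-\series_{i-2}=2\series_{i-2}-\series_{i-3}\geq \series_{i-2}\geq 4$ for $i\geq 3$).

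Once disjointness is established, producibility of $\pathl^i$ follows: we grow $\paths{i}{i}$ from the seed (Lemma~\ref{lem:producible}), then attach the tiles of $\pathl^{i-1}+\vu{i}{i}$ one by one in path order, each binding to its predecessor, with the first attachment being the one just verified. Finally, the width computation is immediate: the $x$-coordinates of $\pathl^i$ form $\{1,\ldots,2\parb-1\}\cup\{\parb+1,\ldots,(i+1)\parb-1\}=\{1,\ldots,(i+1)\parb-1\}$ by the inductive hypothesis applied to $\pathl^{i-1}$, giving width $(i+1)\parb-1$.
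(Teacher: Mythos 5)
Your overall strategy is the same as the paper's: induction on $i$, base case from Lemma~\ref{lem:producible}, then in the inductive step check that the last tile of $\paths{i}{i}$ (at $(\parb,\series_{i-1})$, east glue $\glueg_0$, per Lemma~\ref{lem:freeglue}) binds the translated $\tg_0$, check non-intersection of $\paths{i}{i}$ with $\pathl^{i-1}+\vu{i}{i}$, and conclude producibility and the width count. Your strip analysis and the key inequality $\series_i-2\series_{i-1}=\series_{i-1}-\series_{i-2}\geq 2$ (so $2\series_{i-1}-1<\series_i-2$) are correct, and in fact spelled out more explicitly than in the paper.

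The one genuine gap is that the bounding-box facts about $\pathl^{i-1}$ that carry the whole non-intersection argument are asserted rather than proved: you claim that all tiles of $\pathl^{i-1}$ satisfy $x\geq 1$ (``by the base-case positional analysis and the shape of each added block'') and, crucially, that they satisfy $y\leq \series_{i-1}-1$, with no justification at all for the latter; the same facts (westernmost column is $1$) are also what your final column count $\{1,\ldots,i\parb-1\}+\parb$ relies on. Your inductive hypothesis, as stated, is only ``producible of width $i\parb-1$'', which does not yield any of this: $\pathl^{i-1}$ is itself a stack of translated copies, and bounding its height by $\series_{i-1}-1$ needs the same telescoping inequality $2\series_{j}\leq\series_{j+1}$ applied at every level of the recursion. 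The paper closes exactly this hole by invoking Lemma~\ref{lem:prefixbound}: since $\pathl^{i-1}$ is producible (your inductive hypothesis) and has $\pathg\pathy{i-1}\pathb{i-1}_0$ as a prefix, that lemma gives that its first tile is its westernmost and southernmost tile, its height is $\series_{i-1}-1$ and its width is at most $i\parb-1$, which is precisely the box you need before translating by $\vu{i}{i}=(\parb,\series_{i-1})$. So either cite Lemma~\ref{lem:prefixbound} at that point, or strengthen your induction hypothesis to include the bounding box of $\pathl^{i-1}$; with that addition your proof goes through and matches the paper's.
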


\begin{proof}
This proof is done by recurrence on $i$. By lemma \ref{lem:producible}, $\paths{1}{1}$ is producible by $\tas$ and the position of the tile $\pathb{1}_{\parb-1}$ is $(2\parb-1,3)$ Thus the initialization of the recurrence is true. Now, consider $1 \leq i < \para$ such that the recurrence is true for $i$. By definition, $\pathg\pathy{i}\pathb{i}_0$ is a prefix of~$\pathl^{i}$. Then by lemma \ref{lem:prefixbound}, for any position $(x,y) \in \mathbb{N}$ occupied by a tile of $\pathl^{\step}+\vu{i+1}{i+1}$, we have $\parb+1\leq x \leq (\step+2)\parb-1$ and $\series_\step \leq y \leq 2\series_{\step}-1$. By lemma \ref{lem:position}, $\pathl^{\step}+\vu{i+1}{i+1}$ is west of $\pathy{\step+1}$ and $\pathp{\step+1}$, north of $\pathg$ and south of $\pathb{i+1}$ and thus $\pathl^{i}+\vu{i}{i}$ neither intersects $\paths{\step+1}{\step+1}$ nor the seed. By recurrence $\pathl^{i}$ is a path producible by $\tas$ and thus $\paths{i+1}{i+1}(\pathl^{i}+\vu{i}{i})$ is also a path producible by $\tas$. Finally, the width of $\paths{i}{i}$ is $(i+1)\parb-1$ by recurrence and then the width of $\pathl^{i+1}$ is $(i+2)\parb-1$.
\end{proof}

\begin{lemma}
\label{lem:alwayshere}
For any $\para,\parb \in \mathbb{N}$ and for any terminal assembly $\alpha \in \termasm{\tas}$, $\pathl^{\para}$ is a subassembly of $\alpha$.
\end{lemma}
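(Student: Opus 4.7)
The plan is to prove by downward induction on $\step\in\{\para,\para-1,\ldots,1\}$ that every terminal assembly $\alpha$ contains $\paths{\step}{\step}+\vect{w}_\step$ as a subassembly, where $\vect{w}_\step:=\sum_{m=\step+1}^{\para}\vu{m}{m}$ (in particular $\vect{w}_\para=(0,0)$). Unrolling the recursive definition $\pathl^i=\paths{i}{i}(\pathl^{i-1}+\vu{i}{i})$ shows that $\pathl^\para$ is exactly the concatenation $\paths{\para}{\para}\cdot(\paths{\para-1}{\para-1}+\vect{w}_{\para-1})\cdots(\paths{1}{1}+\vect{w}_1)$, so the lemma will follow. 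Lemma~\ref{lem:longpath} already guarantees that this concatenation is a producible path; what remains is to show that each piece necessarily appears in every terminal $\alpha$.

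For the base case $\step=\para$, we trace the forced growth of $\paths{\para}{\para}$ starting from the seed. By inspection of the tileset, every glue type on every given side is exposed by a unique tile type (for instance, the only tile with west glue $\glueg_i$ is $\tileg_i$, the only tile with south glue $\gluey_i$ is $\tiley_i$, and so on), hence from each placed tile the next tile of $\paths{\para}{\para}$ is the unique tile that can attach at its position using the relevant neighbor glue. By Lemma~\ref{lem:freeglue}, branching along $\paths{\para}{\para}$ only happens at the orange tiles $\ty_{\series_\inds-3}$ for $1\leq\inds<\para$ and at the red tiles $\pathp{\para}_{\series_j-\series_{j-1}-3}$ for $2\leq j\leq\para$. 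At each such point both the intended continuation of $\paths{\para}{\para}$ and the sibling east-branch must appear in $\alpha$ (by terminality), but by Lemma~\ref{lem:position} the sibling branches (the blue-and-red tails described by Facts~\ref{lem:prefix2} and~\ref{lem:prefix3}) occupy row-column regions disjoint from those of $\paths{\para}{\para}$, so the forced growth of $\paths{\para}{\para}$ is never blocked. The inductive step runs identically in the translated frame: by the inductive hypothesis, $\paths{\step+1}{\step+1}+\vect{w}_{\step+1}$ is in $\alpha$; its last tile sits at position $(\parb,\series_\step)+\vect{w}_{\step+1}=(0,0)+\vect{w}_\step$ with a free east glue $\glueg_0$, forcing $\tileg_0$ at $(1,0)+\vect{w}_\step$, after which the same deterministic forcing produces the entire translated $\paths{\step}{\step}+\vect{w}_\step$ inside $\alpha$.

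The main obstacle is the conflict-avoidance check at each level: we must verify that the forced growth of $\paths{\step}{\step}+\vect{w}_\step$ together with its local east-branches never collides with anything grown at an outer level. This reduces to a disjointness argument via Lemma~\ref{lem:position}, which pins each path into a specific column-row rectangle. By Lemma~\ref{lem:prefixbound}, the bounding box of $\paths{\step}{\step}+\vect{w}_\step$ together with all branches emanating from it has height at most $\series_\step-1$ and width at most $(\step+1)\parb-1$, and the recurrence $\series_{\step+1}=3\series_\step-\series_{\step-1}$ is tuned so that the vertical room left between $\pathg+\vect{w}_\step$ and $\pathb{\step+1}+\vect{w}_{\step+1}$ is exactly large enough to accommodate this bounding box. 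Once disjointness is established, the forcing propagates with no obstruction, and the union of the translated $\paths{\step}{\step}$'s over $\step=1,\ldots,\para$ is precisely $\pathl^\para$, completing the induction.
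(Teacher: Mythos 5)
Your overall strategy — peel $\pathl^\para$ into nested copies $\paths{\step}{\step}+\vect{w}_\step$ and argue each one is forced to appear in every terminal assembly — captures the right intuition, and your decomposition of $\pathl^\para$ is correct. But the proposal leaves the hard part essentially unproved, and your induction hypothesis is in fact too weak to carry the inductive step.

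The crux of the paper's argument is not ``forced growth'' but a non-conflict statement: no producible path of $\tas$ can place a tile type different from the one $\pathl^\para$ places at a shared position. Your forced-growth step tacitly assumes this: when you say that the free east glue $\glueg_0$ ``forces $\tileg_0$'' at the next position, you need to know that no other producible path reaches that position first with a different tile type attached from another direction (and that the sibling branches that \emph{must} appear by terminality in fact \emph{can} appear, i.e., are themselves unblocked). You summarize all of this as ``reduces to a disjointness argument via Lemma~\ref{lem:position}.'' That is precisely where the work is. The paper's proof establishes it by an explicit five-case analysis in the inductive step (branches with $i'<i$ lie strictly south; the suffix of $\dead{i+1}$ lies south; branches from red tiles $\paths{i+1}{j}$ with $j<i+1$ lie strictly north; the suffix of $\deadd{i+1}{i+1}$ lies north; and the remaining case is handled by the inductive hypothesis inside the translated frame), using both Lemma~\ref{lem:position} and, crucially, Lemma~\ref{lem:prefixbound}. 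None of these cases is actually verified in your write-up, and at least one of them (branches emanating from the red tiles at higher $j$, which head \emph{north}, wedged between the inner translated $\pathl^{\step-1}$ and the outer blue ceiling) cannot be dismissed as obviously disjoint without the inequality $\series_{\step+1}-2\series_{\step}=\series_{\step}-\series_{\step-1}\ge\series_j-\series_{j-1}$ that the paper's recurrence is designed to deliver.

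There is also a structural issue with the downward induction. Your inductive hypothesis (``$\paths{\step+1}{\step+1}+\vect{w}_{\step+1}\sqsubseteq\alpha$'') is a statement about what must be present, but the inductive step needs a statement about what \emph{cannot} be present, namely that no producible path places a conflicting tile anywhere in the pocket where $\paths{\step}{\step}+\vect{w}_\step$ is supposed to grow. ``Runs identically in the translated frame'' does not follow, because the translated sub-tileset grows in a region bounded on several sides by the outer path and its branches, and those boundary interactions (the very mismatches that create the dead-ends $\dead{i}$ and $\deadd{i}{j}$) must be re-examined at each level. The paper's inductive statement — ``no conflict between $\pathl^i$ and any producible path whose prefix is $\pathg\pathy{i'}\pathb{i'}_0$ for some $i'\le i$'' — is formulated precisely so that it strengthens from one level to the next and can be instantiated inside the translated frame (case five). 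To repair your proof you would need to strengthen your invariant along similar lines and then actually carry out the position bookkeeping that Facts~\ref{lem:prefix2}, \ref{lem:prefix3} and Lemmas~\ref{lem:position}, \ref{lem:prefixbound} make available.
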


\begin{proof}
Consider the following hypothesis of recurrence for $1\leq i \leq \para$: "Consider a path $\patha$ producible by $\tas$, if there exists  $1\leq i' \leq i$ such that  $\pathg\pathy{i'}\pathb{i'}_0$ is prefix a $\patha$ then there is no conflict between $\pathl^{i}$ and $\patha$". By fact \ref{lem:prefix2}, if $\pathg\pathy{1}\pathb{1}_0$ is a prefix of $\patha$ then $\patha$ is a prefix of $\paths{1}{1}=\pathl^{1}$ and the recurrence is true for $i=1$. Now suppose that the recurrence is true for $1\leq i < \para$. 
By definition, $\pathg\pathy{i}\pathb{i}_0$ is a prefix of~$\pathl^{i}$. Then by lemma \ref{lem:prefixbound}, for any position $(x,y) \in \mathbb{N}$ occupied by a tile of $\pathl^{i}+\vu{i+1}{i+1}$, we have $\series_\step \leq y \leq 2\series_{\step}-1$. Consider a path $\patha$ producible by $\tas$ such that $\pathg\pathy{i'}\pathb{i'}_0$ is prefix of $\patha$ for some $1\leq i' \leq i$. To prove the recurrence for $i+1$, we have to study five cases which are illustrated in Figure \ref{}. Firstly, if $i'<i$ then by lemma \ref{lem:prefixbound}, $\patha$ is south of $\pathl^{i}+\vu{i+1}{i+1}$. Also, $\patha$ is south of the suffix of $\paths{i+1}{i+1}$ obtained by removing the prefix $\pathg\pathy{i'}$ to $\paths{i+1}{i+1}$. Thus, there is no conflict between $\patha$ and $\pathl^{i+1}$ in this case. Secondly, suppose that $\patha$ is a prefix of $\dead{i+1}$ then $\paths{i+1}{i+1}$ is a prefix of both $\pathl^{i+1}$ and $\dead{i+1}$. Also, $\pathl^{i}+\vu{i+1}{i+1}$ is north of the suffix of $\dead{i+1}$ obtained by removing the prefix $\paths{i+1}{i+1}$ to $\dead{i+1}$. Thus, there is no conflict between $\patha$ and $\pathl^{i+1}$ in this case. 
Thirdly, if there exists $1\leq j < i+1$ such that $\paths{i}{j}(\pathg_0+\vu{i}{j})$ is a prefix of $\patha$. Then, $\paths{i+1}{j}$ is a prefix of both $\pathl^{i+1}$ and $\patha$. Moreover, $\patha$ can be written as $\paths{i+1}{j}\pathaa$ where $\pathaa-\vu{i}{j}$ is producible by $\tas$. Thus by lemma \ref{lem:prefixbound}, $\pathaa_0$ is the southernmost tile of $\pathaa$. Then for any position $(x,y) \in \mathbb{N}^2$ occupied by a tile of $\pathaa$, we have $y\geq \series_{i+1}-\series_{j}+\series_{j-1} \geq \series_{i+1}-\series_{i}+\series_{i-1}\geq 2\series_i$. Then $\pathaa$ is north of $\pathl^{i}+\vu{i+1}{i+1}$. Also, $\pathaa$ is north of the suffix of $\paths{i+1}{i+1}$ obtained by removing the prefix $\paths{i+1}{j}$ to $\paths{i+1}{i+1}$. Thus, there is no conflict between $\patha$ and $\pathl^{i+1}$ in this case.
Fourthly, suppose that $\patha$ is prefix of $\deadd{i+1}{i+1}$ then $\paths{i+1}{i+1}(\pathg\pathy{i})+\vu{i+1}{i+1}$ is a prefix of both $\deadd{i+1}{i+1}$ and $\pathl^{i+1}$. Also, $\pathl^{i}+\vu{i+1}{i+1}$ is south of the suffix of $\deadd{i+1}{i+1}$ obtained by removing the prefix $\paths{i+1}{i+1}(\pathg\pathy{i})+\vu{i+1}{i+1}$ to $\deadd{i+1}{i+1}$. Thus, there is no conflict between $\patha$ and $\pathl^{i+1}$ in this case. Fifthly, if $\patha$ does no match the previous cases, then by fact \ref{lem:prefix3}, there exists $1\leq i' <i+1$ such that $\patha$ can be written as $\paths{i+1}{i+1}\pathaa$ where $(\pathg\pathy{i'})+\vu{i+1}{i+1}$ is a prefix of $\pathaa$. Since $i<i+1$ then by recurrence, there is no conflict between $\pathaa-\vu{i+1}{i+1}$ and $\pathl^{i}$. Since $\paths{i+1}{i+1}$ is prefix of $\patha$ and $\pathl^{i+1}$ then there is no conflict between $\patha$ and $\pathl^{i+1}$ in this final case. Thus the recurrence is true and for any path $\patha$ producible by $\tas$, if there exists  $1\leq i \leq \para$ such that  $\pathg\pathy{i}\pathb{i}_0$ is prefix a $\patha$ then there is no conflict between $\pathl^{\para}$ and $\patha$. Since $\pathg\pathy{\para}$ is a prefix of $\pathl^{\para}$ then by fact \ref{lem:prefix1}, the lemma is true. 
\end{proof}

\begin{figure}
\centering

\begin{tikzpicture}[x=0.25cm,y=0.25cm]

%grid 
\path [draw, gray, very thin] (-4,-2) grid[step=0.25cm] (52,28);

%axes
 \draw[->] (-4,0.5) -- (52,0.5);
 \draw[->] (0.5,-2) -- (0.5,28);
 \draw[dotted,thick] (-4,26) -- (52,26);
 \draw[dotted,thick] (21,-2) -- (21,28);

 \draw[dotted,thick] (-4,10) -- (52,10);
 \draw[dotted,thick] (-4,20) -- (52,20);

\node[rectangle,draw,thick] (g) at (20.5,8.5)  {\small the suffix of $\dead{3}$ is to the south.};
\node[rectangle,draw,thick] (g) at (26.75,3.5)  {\small path growing from $\pathg\pathy{i}$ with $i\leq2$ are to the south.};
\node[rectangle,draw,thick] (g) at (28.25,23.5)  {\small path growing from $\paths{3}{j}$ with $j\leq 2$ are to the north.};
\node[rectangle,draw,thick] (g) at (27,21.25)  {\small the suffix of $\deadd{3}{3}$ is to the north.};
\node[rectangle,draw,thick,text width=7cm] (g) at (35.25,15.5)  {\small path growing from $\paths{3}{3}(\pathg\pathy{i'})+\vu{3}{3}$ with $i'\leq 2$ are dealt with the hypothesis of recurrence.};
%coordonnées clées
\draw (-1.8,3.5) node {$\series_{1}-1$};
 \draw(0.33,3.5) -- (0.66,3.5);
\draw (-1.8,9.5) node {$\series_{2}-1$};
 \draw(0.33,19.5) -- (0.66,19.5);
\draw (-2.2,19.5) node {$2\series_{2}-1$};
 \draw(0.33,9.5) -- (0.66,9.5);
\draw (-1.8,25.5) node {$\series_{3}-1$};
 \draw(0.33,51.5) -- (0.66,51.5);
\draw (10.5,-1) node {$\parb$};
\draw (19.5,-1) node {$2\parb-1$};
 \draw(19.5,0.33) -- (19.5,0.66);
\draw (29.5,-1) node {$3\parb-1$};
 \draw(29.5,0.33) -- (29.5,0.66);
\draw (39.5,-1) node {$4\parb-1$};
 \draw(39.5,0.33) -- (39.5,0.66);
\draw (49.5,-1) node {$5\parb-1$};
 \draw(49.5,0.33) -- (49.5,0.66);

%glue
 \draw[very thick,xgreen] (0.5,0.5) -| (10.5,1.5) -| (9.5,1.5);
 \draw[very thick,xorange] (9.5,1.5) |- (9.5,26);
 \draw[very thick,xblue] (9.5,3.5) |- (10,3.5);
 \draw[very thick,xblue] (9.5,9.5) |- (10,9.5);
 \draw[very thick,xblue] (9.5,25.5) -| (19.5,24.5) -| (10.5,24.5);
 \draw[very thick,xred] (10.5,24.5) -| (10.5,10);
 \draw[very thick,xgreen] (10.5,20.5) -| (11,20.5);
 
 %second iteration
 \draw[very thick,xgreen] (10.5,10.5) -| (20.5,11.5) -| (19.5,11.5);
 \draw[very thick,xorange] (19.5,11.5) |- (19.5,20);
 \draw[very thick,xblue] (19.5,13.5) |- (20,13.5);
 \draw[very thick,xblue] (19.5,19.5) |- (20,19.5);

%tiles
\draws{0}{0}
\drawg{1}{0}
\drawg{2}{0}
\drawg{3}{0}
\drawg{4}{0}
\drawg{5}{0}
\drawg{6}{0}
\drawg{7}{0}
\drawg{8}{0}
\drawg{9}{0}
\drawg{10}{0}
\drawg{10}{1}
\drawg{9}{1}

\drawo{9}{2}
\drawo{9}{3}
\drawo{9}{4}
\drawo{9}{5}
\drawo{9}{6}
\drawo{9}{7}
\drawo{9}{8}
\drawo{9}{9}
\drawo{9}{10}
\drawo{9}{11}
\drawo{9}{12}
\drawo{9}{13}
\drawo{9}{14}
\drawo{9}{15}
\drawo{9}{16}
\drawo{9}{17}
\drawo{9}{18}
\drawo{9}{19}
\drawo{9}{20}
\drawo{9}{21}
\drawo{9}{22}
\drawo{9}{23}
\drawo{9}{24}
\drawo{9}{25}

\drawb{10}{25}
\drawb{11}{25}
\drawb{12}{25}
\drawb{13}{25}
\drawb{14}{25}
\drawb{15}{25}
\drawb{16}{25}
\drawb{17}{25}
\drawb{18}{25}
\drawb{19}{25}
\drawb{19}{24}
\drawb{18}{24}
\drawb{17}{24}
\drawb{16}{24}
\drawb{15}{24}
\drawb{14}{24}
\drawb{13}{24}
\drawb{12}{24}
\drawb{11}{24}
\drawb{10}{24}

\drawr{10}{23}
\drawr{10}{22}
\drawr{10}{21}
\drawr{10}{20}
\drawr{10}{19}
\drawr{10}{18}
\drawr{10}{17}
\drawr{10}{16}
\drawr{10}{15}
\drawr{10}{14}
\drawr{10}{13}
\drawr{10}{12}
\drawr{10}{11}
\drawr{10}{10}

%second iteration

\drawg{11}{10}
\drawg{12}{10}
\drawg{13}{10}
\drawg{14}{10}
\drawg{15}{10}
\drawg{16}{10}
\drawg{17}{10}
\drawg{18}{10}
\drawg{19}{10}
\drawg{20}{10}
\drawg{20}{11}
\drawg{19}{11}

\drawo{19}{12}
\drawo{19}{13}
\drawo{19}{14}
\drawo{19}{15}
\drawo{19}{16}
\drawo{19}{17}
\drawo{19}{18}
\drawo{19}{19}

%tuiles avec colles libre
\draw[fill=black] (9.5,3.5) circle (0.12);
\draw[fill=black] (9.5,9.5) circle (0.12);
\draw[fill=black] (9.5,25.5) circle (0.12);
\draw[fill=black] (10.5,20.5) circle (0.12);
\draw[fill=black] (10.5,10.5) circle (0.12);

%second iteration
\draw[fill=black] (19.5,13.5) circle (0.12);
\draw[fill=black] (19.5,19.5) circle (0.12);

\end{tikzpicture}

\caption{Graphical representation of the step of recurrence of the proof of Lemma \ref{lem:alwayshere} for $\para=4$,$\parb=10$ and $i=3$. We represent here a prefix of $\pathl^{3}$, note that the remaining suffix of $\pathl^{3}$ is in the stripe defined by $\series_2\leq y \leq 2\series_2-1$. If a path producible by $\tas$ creates a conflit with $\pathl^{3}$ then it has to fork from this prefix by one of its free glue. We represent the five possible cases and give the main argument to deal with each case. }
\label{fig:proof:stable}
\end{figure}

\subsection{Conclusion of the proof}
\label{sec:conc}

Now, we obtain our main result by setting the parameters $\para$ and $\parb$ correctly and by combining the main results of the previous section (corollary \ref{cor:bounded} and lemmas \ref{lem:longpath} and \ref{lem:alwayshere}).

\begin{lemma}
\label{lem:pathlog}
For all $\para \in \mathbb{N}$, there exists a path $\patha$ producible by a tile assembly system $(\settiles,\sigma,1)$ such that:
\begin{itemize}
\item the size of $\settiles$ is $8\series_\para-\series_{\para-1}-1$;
\item the width of $\patha$ is $\Theta(|\settiles| \log_3 (|\settiles|))$;
\item for any terminal assembly $\alpha \in \termasm{\mathcal T}$, the width and height of $\alpha$ are bounded by $\Theta(|\settiles| \log_3 (|\settiles|))$ and $\asm{\patha}$ is a subassembly of $\alpha$.
\end{itemize} 
\end{lemma}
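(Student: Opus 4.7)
The plan is to choose the parameters $\para$ and $\parb$ of $\tas$ so that the tile count matches the claimed value $8\series_\para-\series_{\para-1}-1$, then read off the width of $\pathl^{\para}$ from Lemma~\ref{lem:longpath}, bound the terminal assemblies via Corollary~\ref{cor:bounded}, and conclude with Lemma~\ref{lem:alwayshere}. Summing the cardinalities given in Section~\ref{long:sec:def}, namely $|\gluesets|=1$, $|\gluesetg|=\parb+2$, $|\gluesety|=\series_\para-2$, $|\gluesetb|=2\parb$ and $|\gluesetp|=\series_\para-\series_{\para-1}-2$, gives $|\glueset|=3\parb-1+2\series_\para-\series_{\para-1}$. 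Setting this equal to $8\series_\para-\series_{\para-1}-1$ forces $\parb=2\series_\para$, so I would take $(\settiles,\sigma,1):=\tas$ with this value of $\parb$ and set $\patha:=\pathl^{\para}$.

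By Lemma~\ref{lem:longpath}, $\patha$ is then a path producible by $\tas$ of width $(\para+1)\parb-1=2(\para+1)\series_\para-1$. To compare this with $|\settiles|\log_3|\settiles|$, I would analyse the linear recurrence $\series_i=3\series_{i-1}-\series_{i-2}$: its characteristic polynomial $x^2-3x+1$ has roots $(3\pm\sqrt{5})/2$, of which the dominant is $\phi=(3+\sqrt{5})/2>1$ and the subdominant lies strictly between $0$ and $1$, so $\series_\para=\Theta(\phi^\para)$. Consequently $|\settiles|=\Theta(\series_\para)$ and $\log_3|\settiles|=\Theta(\para)$, which gives the width of $\patha$ as $\Theta(\para\series_\para)=\Theta(|\settiles|\log_3|\settiles|)$, the second bullet.

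For the third bullet, Corollary~\ref{cor:bounded} bounds the width of any $\alpha\in\termasm{\tas}$ by $(\para+1)\parb-1=\Theta(|\settiles|\log_3|\settiles|)$ and its height by $\Theta(\series_\para)=\Theta(|\settiles|)=O(|\settiles|\log_3|\settiles|)$, while Lemma~\ref{lem:alwayshere} supplies that $\asm{\patha}=\asm{\pathl^{\para}}$ is a subassembly of every such $\alpha$. Putting all three bullets together with the parameter choice $\parb=2\series_\para$ yields the claim.

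I do not expect any real obstacle: the only step that is not an immediate substitution into a previously established statement is the asymptotic $\series_\para=\Theta(\phi^\para)$, which is the textbook analysis of a second-order linear recurrence with constant coefficients. Everything else is direct algebra to verify that the two $\Theta$-quantities coincide and to discharge the three bullet points from the three previously proved results.
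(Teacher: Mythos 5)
Your proposal is correct and follows essentially the same route as the paper: choose $\parb=2\series_\para$ to make $|\glueset|=3\parb+2\series_\para-\series_{\para-1}-1$ equal to $8\series_\para-\series_{\para-1}-1$, take $\patha=\pathl^{\para}$, and combine Lemma~\ref{lem:longpath}, Corollary~\ref{cor:bounded} and Lemma~\ref{lem:alwayshere}. The only (harmless) difference is that you obtain $\series_\para=\Theta\bigl(((3+\sqrt{5})/2)^\para\bigr)$ from the characteristic polynomial, which cleanly gives both directions of $\para=\Theta(\log_3\series_\para)$, whereas the paper only invokes the cruder induction $\series_\para\leq 3^\para$.
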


\begin{proof}
Consider $\para,\parb \in \mathbb{N}$ and the tile assembly system $\tas$, its set of tile types is made of:
\begin{itemize}
\item $1$ tiles types for the seed;
\item $\parb+2$ green tiles types;
\item $\series_\para-2$ orange tile types;
\item $2\parb$ blue tile types;
\item $\series_\para-\series_{\para-1}-2$ red tile types.
\end{itemize}
Thus, we have $|\glueset|=3\parb+2\series_\para-\series_{\para-1}-1$. By setting $\parb=2\series_\para$, we obtain that $|\glueset| = 8\series_\para-\series_{\para-1}-1 \leq 4\parb$. Moreover, a simple induction shows that for any $\para \geq 3$, we have $\series_{\para}\leq 3^k$. Thus, if $\para\geq 3$ then $\para \geq \log (\parb)-1$. Now, this theorem is a corollary of corollary \ref{cor:bounded}, lemma \ref{lem:longpath} and lemma \ref{lem:alwayshere}.
\end{proof}

\section{Conclusion, perpectives and open questions.}

We have shown that there exists no $2D$ pumping lemma with a linear bound according the size of the tile set for tile assembly system at temperature $1$. This result hints that this model is more complex model than initially thought. In \cite{pumpability}, the authors show a $2D$ pumping lemma with a non linear bound. Finding the exact bound is an open question. Also, we conjecture that tile assembly systems are not able to do computation at temperature $1$. Remark that these two conjectures are different. Deducing that automaton are not able to simulate Turing machine is a direct corollary of the classical pumping lemma. This is not the case for tile assembly system at temperature $1$. Indeed, some tiles may be attached to a periodic infinite path and start doing some computation. In \cite{Doty-2011}, the authors have shown that the existence of a $2D$ pumping lemma implies that directed tile assembly system (such tile assembly system have only one terminal assembly) are not able to do computation at temperature $1$. This result requires several pages and no generalization of this result is currently known. An other improvement of our result would be to modify the tile set of this paper in order to get a directed tile assembly system. Also remark that combining \cite{pumpability} and \cite{Doty-2011} means that directed tile assembly system cannot do computation at temperature $1$.

\begin{figure}

\begin{tikzpicture}[x=0.25cm,y=0.25cm]
          \clip (-1, -1) rectangle (51.000000,69.000000);
\draw[ultra thick](0.5,0.5)--(1, 0.5);\draws{0}{0}

%grid 
\path [draw, gray, very thin] (-4,-2) grid[step=0.25cm] (53,71);

%% 4 10
\draw[ultra thick, xgreen]
             (1,0.5)--(10.500000,0.5)--(10.500000,1.5)--(9.500000,1.5)--(9.500000,1.5);
\draw[ultra thick, xorange](9.500000,1.5)--(9.500000,67.5);
\drawg{1}{0}
\drawg{2}{0}
\drawg{3}{0}
\drawg{4}{0}
\drawg{5}{0}
\drawg{6}{0}
\drawg{7}{0}
\drawg{8}{0}
\drawg{9}{0}
\drawg{10}{0}
\drawg{10}{1}\drawg{9}{1}
\draw[ultra thick, xblue](9.500000, 3.500000)--(10.000000, 3.500000);
\draw[ultra thick, xblue](9.500000, 9.500000)--(10.000000, 9.500000);
\draw[ultra thick, xblue](9.500000, 25.500000)--(10.000000, 25.500000);
\draw[ultra thick, xblue](9.500000, 67.500000)--(10.000000, 67.500000);
\drawo{9}{2}
\drawo{9}{3}
\drawo{9}{4}
\drawo{9}{5}
\drawo{9}{6}
\drawo{9}{7}
\drawo{9}{8}
\drawo{9}{9}
\drawo{9}{10}
\drawo{9}{11}
\drawo{9}{12}
\drawo{9}{13}
\drawo{9}{14}
\drawo{9}{15}
\drawo{9}{16}
\drawo{9}{17}
\drawo{9}{18}
\drawo{9}{19}
\drawo{9}{20}
\drawo{9}{21}
\drawo{9}{22}
\drawo{9}{23}
\drawo{9}{24}
\drawo{9}{25}
\drawo{9}{26}
\drawo{9}{27}
\drawo{9}{28}
\drawo{9}{29}
\drawo{9}{30}
\drawo{9}{31}
\drawo{9}{32}
\drawo{9}{33}
\drawo{9}{34}
\drawo{9}{35}
\drawo{9}{36}
\drawo{9}{37}
\drawo{9}{38}
\drawo{9}{39}
\drawo{9}{40}
\drawo{9}{41}
\drawo{9}{42}
\drawo{9}{43}
\drawo{9}{44}
\drawo{9}{45}
\drawo{9}{46}
\drawo{9}{47}
\drawo{9}{48}
\drawo{9}{49}
\drawo{9}{50}
\drawo{9}{51}
\drawo{9}{52}
\drawo{9}{53}
\drawo{9}{54}
\drawo{9}{55}
\drawo{9}{56}
\drawo{9}{57}
\drawo{9}{58}
\drawo{9}{59}
\drawo{9}{60}
\drawo{9}{61}
\drawo{9}{62}
\drawo{9}{63}
\drawo{9}{64}
\drawo{9}{65}
\drawo{9}{66}
\drawo{9}{67}
\draw[ultra thick, xblue]
          (10.000000,3.500000)--(19.500000,3.500000)--(19.500000,2.500000)--(10.500000,2.500000)--(10.500000,2.500000);
\draw[ultra thick, xred](10.500000,2.000000)--(10.500000,2.500000);
\drawb{10}{3}\drawb{10}{2}
\drawb{11}{3}\drawb{11}{2}
\drawb{12}{3}\drawb{12}{2}
\drawb{13}{3}\drawb{13}{2}
\drawb{14}{3}\drawb{14}{2}
\drawb{15}{3}\drawb{15}{2}
\drawb{16}{3}\drawb{16}{2}
\drawb{17}{3}\drawb{17}{2}
\drawb{18}{3}\drawb{18}{2}
\drawb{19}{3}\drawb{19}{2}
\draw[ultra thick, xblue]
          (10.000000,9.500000)--(19.500000,9.500000)--(19.500000,8.500000)--(10.500000,8.500000)--(10.500000,8.000000);
\draw[ultra thick, xred](10.500000,8.500000)--(10.500000,4.000000);
\drawb{10}{9}\drawb{10}{8}
\drawb{11}{9}\drawb{11}{8}
\drawb{12}{9}\drawb{12}{8}
\drawb{13}{9}\drawb{13}{8}
\drawb{14}{9}\drawb{14}{8}
\drawb{15}{9}\drawb{15}{8}
\drawb{16}{9}\drawb{16}{8}
\drawb{17}{9}\drawb{17}{8}
\drawb{18}{9}\drawb{18}{8}
\drawb{19}{9}\drawb{19}{8}
\draw[ultra thick, xgreen](10.500000, 4.500000)--(11.000000, 4.500000);
\drawr{10}{4}
\drawr{10}{5}
\drawr{10}{6}
\drawr{10}{7}
\begin{scope}[shift={(10,4)}]
%% 1 10
\draw[ultra thick, xgreen]
             (1,0.5)--(10.500000,0.5)--(10.500000,1.5)--(9.500000,1.5);
\draw[ultra thick, xorange](9.500000,1.5)--(9.500000,4.000000);
\drawg{1}{0}
\drawg{2}{0}
\drawg{3}{0}
\drawg{4}{0}
\drawg{5}{0}
\drawg{6}{0}
\drawg{7}{0}
\drawg{8}{0}
\drawg{9}{0}
\drawg{10}{0}
\drawg{10}{1}\drawg{9}{1}
\draw[ultra thick, xblue]
          (9.500000,3.500000)--(19.500000,3.500000)--(19.500000,2.500000)--(10.500000,2.500000)--(10.500000,2.500000);
\draw[ultra thick, xred]
          (10.500000,2.000000)--(10.500000,2.500000);
\drawo{9}{2}
\drawo{9}{3}
\drawb{10}{3}\drawb{10}{2}
\drawb{11}{3}\drawb{11}{2}
\drawb{12}{3}\drawb{12}{2}
\drawb{13}{3}\drawb{13}{2}
\drawb{14}{3}\drawb{14}{2}
\drawb{15}{3}\drawb{15}{2}
\drawb{16}{3}\drawb{16}{2}
\drawb{17}{3}\drawb{17}{2}
\drawb{18}{3}\drawb{18}{2}
\drawb{19}{3}\drawb{19}{2}
\end{scope}
\draw[ultra thick, xblue]
          (10.000000,25.500000)--(19.500000,25.500000)--(19.500000,24.500000)--(10.500000,24.500000);
\draw[ultra thick, xred](10.500000,24.500000)--(10.500000,10.000000);
\drawb{10}{25}\drawb{10}{24}
\drawb{11}{25}\drawb{11}{24}
\drawb{12}{25}\drawb{12}{24}
\drawb{13}{25}\drawb{13}{24}
\drawb{14}{25}\drawb{14}{24}
\drawb{15}{25}\drawb{15}{24}
\drawb{16}{25}\drawb{16}{24}
\drawb{17}{25}\drawb{17}{24}
\drawb{18}{25}\drawb{18}{24}
\drawb{19}{25}\drawb{19}{24}
\draw[ultra thick, xgreen](10.500000, 20.500000)--(11.000000, 20.500000);
\draw[ultra thick, xgreen](10.500000, 10.500000)--(11.000000, 10.500000);
\drawr{10}{10}
\drawr{10}{11}
\drawr{10}{12}
\drawr{10}{13}
\drawr{10}{14}
\drawr{10}{15}
\drawr{10}{16}
\drawr{10}{17}
\drawr{10}{18}
\drawr{10}{19}
\drawr{10}{20}
\drawr{10}{21}
\drawr{10}{22}
\drawr{10}{23}
\begin{scope}[shift={(10,20)}]
%% 1 10
\draw[ultra thick, xgreen]
             (1,0.5)--(10.500000,0.5)--(10.500000,1.5)--(9.500000,1.5)--(9.500000,1.5);
\draw[ultra thick, xorange](9.500000,1.5)--(9.500000,4.000000);
\draw[ultra thick, xblue](9.500000, 3.500000)--(10.000000, 3.500000);
\drawg{1}{0}
\drawg{2}{0}
\drawg{3}{0}
\drawg{4}{0}
\drawg{5}{0}
\drawg{6}{0}
\drawg{7}{0}
\drawg{8}{0}
\drawg{9}{0}
\drawg{10}{0}
\drawg{10}{1}\drawg{9}{1}
\drawo{9}{2}
\drawo{9}{3}
\draw[ultra thick, xblue]
          (10.000000,3.500000)--(19.500000,3.500000)--(19.500000,2.500000)--(10.500000,2.500000)--(10.500000,2.000000);
\draw[ultra thick, xred](10.500000,2.500000)--(10.500000,2.000000);
\drawb{10}{3}\drawb{10}{2}
\drawb{11}{3}\drawb{11}{2}
\drawb{12}{3}\drawb{12}{2}
\drawb{13}{3}\drawb{13}{2}
\drawb{14}{3}\drawb{14}{2}
\drawb{15}{3}\drawb{15}{2}
\drawb{16}{3}\drawb{16}{2}
\drawb{17}{3}\drawb{17}{2}
\drawb{18}{3}\drawb{18}{2}
\drawb{19}{3}\drawb{19}{2}
\end{scope}
\begin{scope}[shift={(10,10)}]
%% 2 10
\draw[ultra thick, xgreen]
             (1,0.5)--(10.500000,0.5)--(10.500000,1.5)--(9.500000,1.5)--(9.500000,1.5);
\draw[ultra thick, xorange](9.500000,1.5)--(9.500000,10.000000);
\drawg{1}{0}
\drawg{2}{0}
\drawg{3}{0}
\drawg{4}{0}
\drawg{5}{0}
\drawg{6}{0}
\drawg{7}{0}
\drawg{8}{0}
\drawg{9}{0}
\drawg{10}{0}
\drawg{10}{1}\drawg{9}{1}
\draw[ultra thick, xblue](9.500000, 3.500000)--(10.000000, 3.500000);
\draw[ultra thick, xblue](9.500000, 9.500000)--(10.000000, 9.500000);
\drawo{9}{2}
\drawo{9}{3}
\drawo{9}{4}
\drawo{9}{5}
\drawo{9}{6}
\drawo{9}{7}
\drawo{9}{8}
\drawo{9}{9}
\draw[ultra thick, xblue]
          (10.000000,3.500000)--(19.500000,3.500000)--(19.500000,2.500000)--(10.500000,2.500000);
\draw[ultra thick, xred](10.500000,2.500000)--(10.500000,2.000000);
\drawb{10}{3}\drawb{10}{2}
\drawb{11}{3}\drawb{11}{2}
\drawb{12}{3}\drawb{12}{2}
\drawb{13}{3}\drawb{13}{2}
\drawb{14}{3}\drawb{14}{2}
\drawb{15}{3}\drawb{15}{2}
\drawb{16}{3}\drawb{16}{2}
\drawb{17}{3}\drawb{17}{2}
\drawb{18}{3}\drawb{18}{2}
\drawb{19}{3}\drawb{19}{2}
\draw[ultra thick, xblue]
          (10.000000,9.500000)--(19.500000,9.500000)--(19.500000,8.500000)--(10.500000,8.500000)--(10.500000,8.500000);
\draw[ultra thick, xred](10.500000,8.500000)--(10.500000,4.000000);
\drawb{10}{9}\drawb{10}{8}
\drawb{11}{9}\drawb{11}{8}
\drawb{12}{9}\drawb{12}{8}
\drawb{13}{9}\drawb{13}{8}
\drawb{14}{9}\drawb{14}{8}
\drawb{15}{9}\drawb{15}{8}
\drawb{16}{9}\drawb{16}{8}
\drawb{17}{9}\drawb{17}{8}
\drawb{18}{9}\drawb{18}{8}
\drawb{19}{9}\drawb{19}{8}
\draw[ultra thick, xgreen](10.500000, 4.500000)--(11.000000, 4.500000);
\drawr{10}{4}
\drawr{10}{5}
\drawr{10}{6}
\drawr{10}{7}
\begin{scope}[shift={(10,4)}]
%% 1 10
\draw[ultra thick, xgreen]
             (1,0.5)--(10.500000,0.5)--(10.500000,1.5)--(9.500000,1.5);
\draw[ultra thick, xorange](9.500000,1.5)--(9.500000,4.000000);
\drawg{1}{0}
\drawg{2}{0}
\drawg{3}{0}
\drawg{4}{0}
\drawg{5}{0}
\drawg{6}{0}
\drawg{7}{0}
\drawg{8}{0}
\drawg{9}{0}
\drawg{10}{0}
\drawg{10}{1}\drawg{9}{1}
\draw[ultra thick, xblue](9.500000, 3.500000)--(10.000000, 3.500000);
\drawo{9}{2}
\drawo{9}{3}
\draw[ultra thick, xblue]
          (10.000000,3.500000)--(19.500000,3.500000)--(19.500000,2.500000)--(10.500000,2.500000)--(10.500000,2.500000);
\draw[ultra thick, xred](10.500000,2.000000)--(10.500000,2.500000);
\drawb{10}{3}\drawb{10}{2}
\drawb{11}{3}\drawb{11}{2}
\drawb{12}{3}\drawb{12}{2}
\drawb{13}{3}\drawb{13}{2}
\drawb{14}{3}\drawb{14}{2}
\drawb{15}{3}\drawb{15}{2}
\drawb{16}{3}\drawb{16}{2}
\drawb{17}{3}\drawb{17}{2}
\drawb{18}{3}\drawb{18}{2}
\drawb{19}{3}\drawb{19}{2}
\end{scope}
\end{scope}
\draw[ultra thick, xblue]
          (10.000000,67.500000)--(19.500000,67.500000)--(19.500000,66.500000)--(10.500000,66.500000)--(10.500000,66.500000);
\draw[ultra thick, xred](10.500000,66.500000)--(10.500000,26.500000);
\drawb{10}{67}\drawb{10}{66}
\drawb{11}{67}\drawb{11}{66}
\drawb{12}{67}\drawb{12}{66}
\drawb{13}{67}\drawb{13}{66}
\drawb{14}{67}\drawb{14}{66}
\drawb{15}{67}\drawb{15}{66}
\drawb{16}{67}\drawb{16}{66}
\drawb{17}{67}\drawb{17}{66}
\drawb{18}{67}\drawb{18}{66}
\drawb{19}{67}\drawb{19}{66}
\draw[ultra thick, xgreen](10.500000, 62.500000)--(11.000000, 62.500000);
\draw[ultra thick, xgreen](10.500000, 52.500000)--(11.000000, 52.500000);
\draw[ultra thick, xgreen](10.500000, 26.500000)--(11.000000, 26.500000);
\drawr{10}{26}
\drawr{10}{27}
\drawr{10}{28}
\drawr{10}{29}
\drawr{10}{30}
\drawr{10}{31}
\drawr{10}{32}
\drawr{10}{33}
\drawr{10}{34}
\drawr{10}{35}
\drawr{10}{36}
\drawr{10}{37}
\drawr{10}{38}
\drawr{10}{39}
\drawr{10}{40}
\drawr{10}{41}
\drawr{10}{42}
\drawr{10}{43}
\drawr{10}{44}
\drawr{10}{45}
\drawr{10}{46}
\drawr{10}{47}
\drawr{10}{48}
\drawr{10}{49}
\drawr{10}{50}
\drawr{10}{51}
\drawr{10}{52}
\drawr{10}{53}
\drawr{10}{54}
\drawr{10}{55}
\drawr{10}{56}
\drawr{10}{57}
\drawr{10}{58}
\drawr{10}{59}
\drawr{10}{60}
\drawr{10}{61}
\drawr{10}{62}
\drawr{10}{63}
\drawr{10}{64}
\drawr{10}{65}
\begin{scope}[shift={(10,62)}]
%% 1 10
\draw[ultra thick, xgreen]
             (1,0.5)--(10.500000,0.5)--(10.500000,1.5)--(9.500000,1.5)--(9.500000,1.5);
\draw[ultra thick, xorange](9.500000,1.5)--(9.500000,4.000000);
\drawg{1}{0}
\drawg{2}{0}
\drawg{3}{0}
\drawg{4}{0}
\drawg{5}{0}
\drawg{6}{0}
\drawg{7}{0}
\drawg{8}{0}
\drawg{9}{0}
\drawg{10}{0}
\drawg{10}{1}\drawg{9}{1}
\draw[ultra thick, xblue](9.500000, 3.500000)--(10.000000, 3.500000);
\drawo{9}{2}
\drawo{9}{3}
\draw[ultra thick, xblue]
          (10.000000,3.500000)--(19.500000,3.500000)--(19.500000,2.500000)--(10.500000,2.500000)--(10.500000,2.500000);
\draw[ultra thick, xred](10.500000,2.500000)--(10.500000,2.000000);
\drawb{10}{3}\drawb{10}{2}
\drawb{11}{3}\drawb{11}{2}
\drawb{12}{3}\drawb{12}{2}
\drawb{13}{3}\drawb{13}{2}
\drawb{14}{3}\drawb{14}{2}
\drawb{15}{3}\drawb{15}{2}
\drawb{16}{3}\drawb{16}{2}
\drawb{17}{3}\drawb{17}{2}
\drawb{18}{3}\drawb{18}{2}
\drawb{19}{3}\drawb{19}{2}
\end{scope}
\begin{scope}[shift={(10,52)}]
%% 2 10
\draw[ultra thick, xgreen]
             (1,0.5)--(10.500000,0.5)--(10.500000,1.5)--(9.500000,1.5)--(9.500000,1.5);
\draw[ultra thick, xorange](9.500000,1.5)--(9.500000,10.000000);
\drawg{1}{0}
\drawg{2}{0}
\drawg{3}{0}
\drawg{4}{0}
\drawg{5}{0}
\drawg{6}{0}
\drawg{7}{0}
\drawg{8}{0}
\drawg{9}{0}
\drawg{10}{0}
\drawg{10}{1}\drawg{9}{1}
\draw[ultra thick, xblue](9.500000, 3.500000)--(10.000000, 3.500000);
\draw[ultra thick, xblue](9.500000, 9.500000)--(10.000000, 9.500000);
\drawo{9}{2}
\drawo{9}{3}
\drawo{9}{4}
\drawo{9}{5}
\drawo{9}{6}
\drawo{9}{7}
\drawo{9}{8}
\drawo{9}{9}
\draw[ultra thick, xblue]
          (10.000000,3.500000)--(19.500000,3.500000)--(19.500000,2.500000)--(10.500000,2.500000);
\draw[ultra thick, xred](10.500000,2.500000)--(10.500000,2.000000);
\drawb{10}{3}\drawb{10}{2}
\drawb{11}{3}\drawb{11}{2}
\drawb{12}{3}\drawb{12}{2}
\drawb{13}{3}\drawb{13}{2}
\drawb{14}{3}\drawb{14}{2}
\drawb{15}{3}\drawb{15}{2}
\drawb{16}{3}\drawb{16}{2}
\drawb{17}{3}\drawb{17}{2}
\drawb{18}{3}\drawb{18}{2}
\drawb{19}{3}\drawb{19}{2}
\draw[ultra thick, xblue]
          (10.000000,9.500000)--(19.500000,9.500000)--(19.500000,8.500000)--(10.500000,8.500000)--(10.500000,8.500000);
\draw[ultra thick, xred](10.500000,8.500000)--(10.500000,4.000000);
\drawb{10}{9}\drawb{10}{8}
\drawb{11}{9}\drawb{11}{8}
\drawb{12}{9}\drawb{12}{8}
\drawb{13}{9}\drawb{13}{8}
\drawb{14}{9}\drawb{14}{8}
\drawb{15}{9}\drawb{15}{8}
\drawb{16}{9}\drawb{16}{8}
\drawb{17}{9}\drawb{17}{8}
\drawb{18}{9}\drawb{18}{8}
\drawb{19}{9}\drawb{19}{8}
\draw[ultra thick, xgreen](10.500000, 4.500000)--(11.000000, 4.500000);
\drawr{10}{4}
\drawr{10}{5}
\drawr{10}{6}
\drawr{10}{7}
\begin{scope}[shift={(10,4)}]
%% 1 10
\draw[ultra thick, xgreen]
             (1,0.5)--(10.500000,0.5)--(10.500000,1.5)--(9.500000,1.5)--(9.500000,1.5);
\draw[ultra thick, xorange](9.500000,1.5)--(9.500000,4.000000);
\drawg{1}{0}
\drawg{2}{0}
\drawg{3}{0}
\drawg{4}{0}
\drawg{5}{0}
\drawg{6}{0}
\drawg{7}{0}
\drawg{8}{0}
\drawg{9}{0}
\drawg{10}{0}
\drawg{10}{1}\drawg{9}{1}
\draw[ultra thick, xblue](9.500000, 3.500000)--(10.000000, 3.500000);
\drawo{9}{2}
\drawo{9}{3}
\draw[ultra thick, xblue]
          (10.000000,3.500000)--(19.500000,3.500000)--(19.500000,2.500000)--(10.500000,2.500000)--(10.500000,2.500000);
\draw[ultra thick, xred](10.500000,2.500000)--(10.500000,2.000000);
\drawb{10}{3}\drawb{10}{2}
\drawb{11}{3}\drawb{11}{2}
\drawb{12}{3}\drawb{12}{2}
\drawb{13}{3}\drawb{13}{2}
\drawb{14}{3}\drawb{14}{2}
\drawb{15}{3}\drawb{15}{2}
\drawb{16}{3}\drawb{16}{2}
\drawb{17}{3}\drawb{17}{2}
\drawb{18}{3}\drawb{18}{2}
\drawb{19}{3}\drawb{19}{2}
\end{scope}
\end{scope}
\begin{scope}[shift={(10,26)}]
%% 3 10
\draw[ultra thick, xgreen]
             (1,0.5)--(10.500000,0.5)--(10.500000,1.5)--(9.500000,1.5)--(9.500000,2);
\draw[ultra thick, xorange](9.500000,1.5)--(9.500000,26.000000);
\drawg{1}{0}
\drawg{2}{0}
\drawg{3}{0}
\drawg{4}{0}
\drawg{5}{0}
\drawg{6}{0}
\drawg{7}{0}
\drawg{8}{0}
\drawg{9}{0}
\drawg{10}{0}
\drawg{10}{1}\drawg{9}{1}
\draw[ultra thick, xblue](9.500000, 3.500000)--(10.000000, 3.500000);
\draw[ultra thick, xblue](9.500000, 9.500000)--(10.000000, 9.500000);
\draw[ultra thick, xblue](9.500000, 25.500000)--(10.000000, 25.500000);
\drawo{9}{2}
\drawo{9}{3}
\drawo{9}{4}
\drawo{9}{5}
\drawo{9}{6}
\drawo{9}{7}
\drawo{9}{8}
\drawo{9}{9}
\drawo{9}{10}
\drawo{9}{11}
\drawo{9}{12}
\drawo{9}{13}
\drawo{9}{14}
\drawo{9}{15}
\drawo{9}{16}
\drawo{9}{17}
\drawo{9}{18}
\drawo{9}{19}
\drawo{9}{20}
\drawo{9}{21}
\drawo{9}{22}
\drawo{9}{23}
\drawo{9}{24}
\drawo{9}{25}
\draw[ultra thick, xblue]
          (10.000000,3.500000)--(19.500000,3.500000)--(19.500000,2.500000)--(10.500000,2.500000)--(10.500000,2.500000);
\draw[ultra thick, xred](10.500000,2.000000)--(10.500000,2.500000);
\drawb{10}{3}\drawb{10}{2}
\drawb{11}{3}\drawb{11}{2}
\drawb{12}{3}\drawb{12}{2}
\drawb{13}{3}\drawb{13}{2}
\drawb{14}{3}\drawb{14}{2}
\drawb{15}{3}\drawb{15}{2}
\drawb{16}{3}\drawb{16}{2}
\drawb{17}{3}\drawb{17}{2}
\drawb{18}{3}\drawb{18}{2}
\drawb{19}{3}\drawb{19}{2}
\draw[ultra thick, xblue]
          (10.000000,9.500000)--(19.500000,9.500000)--(19.500000,8.500000)--(10.500000,8.500000)--(10.500000,8.500000);
\draw[ultra thick, xred](10.500000,8.500000)--(10.500000,4.000000);
\drawb{10}{9}\drawb{10}{8}
\drawb{11}{9}\drawb{11}{8}
\drawb{12}{9}\drawb{12}{8}
\drawb{13}{9}\drawb{13}{8}
\drawb{14}{9}\drawb{14}{8}
\drawb{15}{9}\drawb{15}{8}
\drawb{16}{9}\drawb{16}{8}
\drawb{17}{9}\drawb{17}{8}
\drawb{18}{9}\drawb{18}{8}
\drawb{19}{9}\drawb{19}{8}
\draw[ultra thick, xred](10.500000, 4.500000)--(11.000000, 4.500000);
\drawr{10}{4}
\drawr{10}{5}
\drawr{10}{6}
\drawr{10}{7}
\begin{scope}[shift={(10,4)}]
%% 1 10
\draw[ultra thick, xgreen]
             (1,0.5)--(10.500000,0.5)--(10.500000,1.5)--(9.500000,1.5)--(9.500000,2);
\drawg{1}{0}
\drawg{2}{0}
\drawg{3}{0}
\drawg{4}{0}
\drawg{5}{0}
\drawg{6}{0}
\drawg{7}{0}
\drawg{8}{0}
\drawg{9}{0}
\drawg{10}{0}
\drawg{10}{1}\drawg{9}{1}
\draw[ultra thick, xorange](9.500000,2)--(9.500000,4.000000);
\draw[ultra thick, xorange](9.500000, 3.500000)--(10.000000, 3.500000);
\drawo{9}{2}
\drawo{9}{3}
\draw[ultra thick, xblue]
          (10.000000,3.500000)--(19.500000,3.500000)--(19.500000,2.500000)--(10.500000,2.500000)--(10.500000,2.500000);
\draw[ultra thick, xred](10.500000,2.000000)--(10.500000,2.500000);
\drawb{10}{3}\drawb{10}{2}
\drawb{11}{3}\drawb{11}{2}
\drawb{12}{3}\drawb{12}{2}
\drawb{13}{3}\drawb{13}{2}
\drawb{14}{3}\drawb{14}{2}
\drawb{15}{3}\drawb{15}{2}
\drawb{16}{3}\drawb{16}{2}
\drawb{17}{3}\drawb{17}{2}
\drawb{18}{3}\drawb{18}{2}
\drawb{19}{3}\drawb{19}{2}
\end{scope}
\draw[ultra thick, xblue]
          (10.000000,25.500000)--(19.500000,25.500000)--(19.500000,24.500000)--(10.500000,24.500000)--(10.500000,24.000000);
\drawb{10}{25}\drawb{10}{24}
\drawb{11}{25}\drawb{11}{24}
\drawb{12}{25}\drawb{12}{24}
\drawb{13}{25}\drawb{13}{24}
\drawb{14}{25}\drawb{14}{24}
\drawb{15}{25}\drawb{15}{24}
\drawb{16}{25}\drawb{16}{24}
\drawb{17}{25}\drawb{17}{24}
\drawb{18}{25}\drawb{18}{24}
\drawb{19}{25}\drawb{19}{24}
\draw[ultra thick, xred](10.500000,24.000000)--(10.500000,10.000000);
\draw[ultra thick, xred](10.500000, 20.500000)--(11.000000, 20.500000);
\draw[ultra thick, xred](10.500000, 10.500000)--(11.000000, 10.500000);
\drawr{10}{10}
\drawr{10}{11}
\drawr{10}{12}
\drawr{10}{13}
\drawr{10}{14}
\drawr{10}{15}
\drawr{10}{16}
\drawr{10}{17}
\drawr{10}{18}
\drawr{10}{19}
\drawr{10}{20}
\drawr{10}{21}
\drawr{10}{22}
\drawr{10}{23}
\begin{scope}[shift={(10,20)}]
%% 1 10
\draw[ultra thick, xgreen]
             (1,0.5)--(10.500000,0.5)--(10.500000,1.5)--(9.500000,1.5)--(9.500000,1.5);
\draw[ultra thick, xorange](9.500000,1.5)--(9.500000,4.000000);
\drawg{1}{0}
\drawg{2}{0}
\drawg{3}{0}
\drawg{4}{0}
\drawg{5}{0}
\drawg{6}{0}
\drawg{7}{0}
\drawg{8}{0}
\drawg{9}{0}
\drawg{10}{0}
\drawg{10}{1}\drawg{9}{1}
\draw[ultra thick, xblue](9.500000, 3.500000)--(10.000000, 3.500000);
\drawo{9}{2}
\drawo{9}{3}
\draw[ultra thick, xblue]
          (10.000000,3.500000)--(19.500000,3.500000)--(19.500000,2.500000)--(10.500000,2.500000)--(10.500000,2.500000);
\draw[ultra thick, xred](10.500000,2.500000)--(10.500000,2.000000);
\drawb{10}{3}\drawb{10}{2}
\drawb{11}{3}\drawb{11}{2}
\drawb{12}{3}\drawb{12}{2}
\drawb{13}{3}\drawb{13}{2}
\drawb{14}{3}\drawb{14}{2}
\drawb{15}{3}\drawb{15}{2}
\drawb{16}{3}\drawb{16}{2}
\drawb{17}{3}\drawb{17}{2}
\drawb{18}{3}\drawb{18}{2}
\drawb{19}{3}\drawb{19}{2}
\end{scope}
\begin{scope}[shift={(10,10)}]
%% 2 10
\draw[ultra thick, xgreen]
             (1,0.5)--(10.500000,0.5)--(10.500000,1.5)--(9.500000,1.5)--(9.500000,1.5);
\draw[ultra thick, xorange](9.500000,1.5)--(9.500000,10.000000);
\drawg{1}{0}
\drawg{2}{0}
\drawg{3}{0}
\drawg{4}{0}
\drawg{5}{0}
\drawg{6}{0}
\drawg{7}{0}
\drawg{8}{0}
\drawg{9}{0}
\drawg{10}{0}
\drawg{10}{1}\drawg{9}{1}
\draw[ultra thick, xblue](9.500000, 3.500000)--(10.000000, 3.500000);
\draw[ultra thick, xblue](9.500000, 9.500000)--(10.000000, 9.500000);
\drawo{9}{2}
\drawo{9}{3}
\drawo{9}{4}
\drawo{9}{5}
\drawo{9}{6}
\drawo{9}{7}
\drawo{9}{8}
\drawo{9}{9}
\draw[ultra thick, xblue]
          (10.000000,3.500000)--(19.500000,3.500000)--(19.500000,2.500000)--(10.500000,2.500000);
\draw[ultra thick, xred](10.500000,2.500000)--(10.500000,2.000000);
\drawb{10}{3}\drawb{10}{2}
\drawb{11}{3}\drawb{11}{2}
\drawb{12}{3}\drawb{12}{2}
\drawb{13}{3}\drawb{13}{2}
\drawb{14}{3}\drawb{14}{2}
\drawb{15}{3}\drawb{15}{2}
\drawb{16}{3}\drawb{16}{2}
\drawb{17}{3}\drawb{17}{2}
\drawb{18}{3}\drawb{18}{2}
\drawb{19}{3}\drawb{19}{2}
\draw[ultra thick, xblue]
          (10.000000,9.500000)--(19.500000,9.500000)--(19.500000,8.500000)--(10.500000,8.500000)--(10.500000,8.500000);
\draw[ultra thick, xred](10.500000,8.500000)--(10.500000,4.000000);
\drawb{10}{9}\drawb{10}{8}
\drawb{11}{9}\drawb{11}{8}
\drawb{12}{9}\drawb{12}{8}
\drawb{13}{9}\drawb{13}{8}
\drawb{14}{9}\drawb{14}{8}
\drawb{15}{9}\drawb{15}{8}
\drawb{16}{9}\drawb{16}{8}
\drawb{17}{9}\drawb{17}{8}
\drawb{18}{9}\drawb{18}{8}
\drawb{19}{9}\drawb{19}{8}
\draw[ultra thick, xgreen](10.500000, 4.500000)--(11.000000, 4.500000);
\drawr{10}{4}
\drawr{10}{5}
\drawr{10}{6}
\drawr{10}{7}
\begin{scope}[shift={(10,4)}]
%% 1 10
\draw[ultra thick, xgreen]
             (1,0.5)--(10.500000,0.5)--(10.500000,1.5)--(9.500000,1.5)--(9.500000,1.5);
\draw[ultra thick, xorange](9.500000,1.5)--(9.500000,4.000000);
\drawg{1}{0}
\drawg{2}{0}
\drawg{3}{0}
\drawg{4}{0}
\drawg{5}{0}
\drawg{6}{0}
\drawg{7}{0}
\drawg{8}{0}
\drawg{9}{0}
\drawg{10}{0}
\drawg{10}{1}\drawg{9}{1}
\draw[ultra thick, xblue](9.500000, 3.500000)--(10.000000, 3.500000);
\drawo{9}{2}
\drawo{9}{3}
\draw[ultra thick, xblue]
          (10.000000,3.500000)--(19.500000,3.500000)--(19.500000,2.500000)--(10.500000,2.500000)--(10.500000,2.500000);
\draw[ultra thick, xred](10.500000,2.500000)--(10.500000,2.000000);
\drawb{10}{3}\drawb{10}{2}
\drawb{11}{3}\drawb{11}{2}
\drawb{12}{3}\drawb{12}{2}
\drawb{13}{3}\drawb{13}{2}
\drawb{14}{3}\drawb{14}{2}
\drawb{15}{3}\drawb{15}{2}
\drawb{16}{3}\drawb{16}{2}
\drawb{17}{3}\drawb{17}{2}
\drawb{18}{3}\drawb{18}{2}
\drawb{19}{3}\drawb{19}{2}
\end{scope}
\end{scope}
\end{scope}
\end{tikzpicture}

\caption{A terminal assembly of $\tas$ for $\para=4$ and $\parb=10$.}
\label{fig:terminal}
\end{figure}

\section*{Acknowledgments}

We would like to thank Damien Woods for invaluable discussions, comments, and suggestions.

\bibliographystyle{plain}
\bibliography{t1_tight_bound}

\begin{thebibliography}{10}

\bibitem{Versus}
Sarah Cannon, Erik~D. Demaine, Martin~L. Demaine, Sarah Eisenstat, Matthew~J.
  Patitz, Robert Schweller, Scott~M. Summers, and Andrew Winslow.
\newblock Two hands are better than one (up to constant factors).
\newblock In {\em STACS: Proceedings of the Thirtieth International Symposium
  on Theoretical Aspects of Computer Science}, pages 172--184. LIPIcs, 2013.
\newblock \href{http://arxiv.org/abs/1201.1650}{Arxiv preprint:
  \texttt{1201.1650}}.

\bibitem{Cook-2011}
Matthew Cook, Yunhui Fu, and Robert~T. Schweller.
\newblock Temperature 1 self-assembly: deterministic assembly in {3D} and
  probabilistic assembly in {2D}.
\newblock In {\em SODA: Proceedings of the 22nd Annual ACM-SIAM Symposium on
  Discrete Algorithms}, pages 570--589, 2011.
\newblock Arxiv preprint: \href{http://arxiv.org/abs/0912.0027}{\tt
  arXiv:0912.0027}.

\bibitem{OneTile}
Erik~D. Demaine, Martin~L. Demaine, S{\'a}ndor~P. Fekete, Matthew~J. Patitz,
  Robert~T. Schweller, Andrew Winslow, and Damien Woods.
\newblock One tile to rule them all: Simulating any tile assembly system with a
  single universal tile.
\newblock In {\em ICALP: Proceedings of the 41st International Colloquium on
  Automata, Languages, and Programming}, volume 8572 of {\em LNCS}, pages
  368--379. Springer, 2014.
\newblock Arxiv preprint: \href{http://arxiv.org/abs/1212.4756}{\tt
  arXiv:1212.4756}.

\bibitem{IUSA}
David Doty, Jack~H. Lutz, Matthew~J. Patitz, Robert~T. Schweller, Scott~M.
  Summers, and Damien Woods.
\newblock The tile assembly model is intrinsically universal.
\newblock In {\em FOCS: Proceedings of the 53rd Annual IEEE Symposium on
  Foundations of Computer Science}, pages 439--446. IEEE, October 2012.
\newblock Arxiv preprint: \href{http://arxiv.org/abs/1111.3097}{\tt
  arXiv:1111.3097}.

\bibitem{Doty-2011}
David Doty, Matthew~J. Patitz, and Scott~M. Summers.
\newblock Limitations of self-assembly at temperature 1.
\newblock {\em Theoretical Computer Science}, 412(1--2):145--158, 2011.
\newblock Arxiv preprint: \href{http://arxiv.org/abs/0906.3251}{\tt
  arXiv:0906.3251}.

\bibitem{Fekete2014}
S{\'a}ndor~P. Fekete, Jacob Hendricks, Matthew~J. Patitz, Trent~A. Rogers, and
  Robert~T. Schweller.
\newblock Universal computation with arbitrary polyomino tiles in
  non-cooperative self-assembly.
\newblock In {\em SODA: ACM-SIAM Symposium on Discrete Algorithms}, pages
  148--167. SIAM, 2015.

\bibitem{DBLP:conf/soda/FeketeHPRS15}
S{\'{a}}ndor~P. Fekete, Jacob Hendricks, Matthew~J. Patitz, Trent~A. Rogers,
  and Robert~T. Schweller.
\newblock Universal computation with arbitrary polyomino tiles in
  non-cooperative self-assembly.
\newblock In Piotr Indyk, editor, {\em Proceedings of the Twenty-Sixth Annual
  {ACM-SIAM} Symposium on Discrete Algorithms, {SODA} 2015, San Diego, CA, USA,
  January 4-6, 2015}, pages 148--167. {SIAM}, 2015.

\bibitem{gilbert2015continuous}
Oscar Gilbert, Jacob Hendricks, Matthew~J Patitz, and Trent~A Rogers.
\newblock Computing in continuous space with self-assembling polygonal tiles.
\newblock In {\em SODA: ACM-SIAM Symposium on Discrete Algorithms}, pages
  937--956. SIAM, 2016.
\newblock Arxiv preprint: \href{http://arxiv.org/abs/1503.00327}{\tt
  arXiv:1503.00327}.

\bibitem{DBLP:conf/soda/GilbertHPR16}
Oscar Gilbert, Jacob Hendricks, Matthew~J. Patitz, and Trent~A. Rogers.
\newblock Computing in continuous space with self-assembling polygonal tiles
  (extended abstract).
\newblock In Robert Krauthgamer, editor, {\em Proceedings of the Twenty-Seventh
  Annual {ACM-SIAM} Symposium on Discrete Algorithms, {SODA} 2016, Arlington,
  VA, USA, January 10-12, 2016}, pages 937--956. {SIAM}, 2016.

\bibitem{Hendricks-2014}
Jacob Hendricks, Matthew~J. Patitz, Trent~A. Rogers, and Scott~M. Summers.
\newblock The power of duples (in self-assembly): It's not so hip to be square.
\newblock In {\em COCOON: Proceedings of 20th International Computing and
  Combinatorics Conference}, pages 215--226, 2014.
\newblock Arxiv preprint: \href{http://arxiv.org/abs/1402.4515}{\tt
  arXiv:1402.4515}.

\bibitem{Manuch-2010}
J\'an Ma\v{n}uch, Ladislav Stacho, and Christine Stoll.
\newblock Two lower bounds for self-assemblies at temperature~1.
\newblock {\em Journal of Computational Biology}, 17(6):841--852, 2010.

\bibitem{Meunier-2014}
Pierre-{\'E}tienne Meunier, Matthew~J. Patitz, Scott~M. Summers, Guillaume
  Theyssier, Andrew Winslow, and Damien Woods.
\newblock Intrinsic universality in tile self-assembly requires cooperation.
\newblock In {\em SODA: Proceedings of the ACM-SIAM Symposium on Discrete
  Algorithms}, pages 752--771, 2014.
\newblock Arxiv preprint: \href{http://arxiv.org/abs/1304.1679}{\tt
  arXiv:1304.1679}.

\bibitem{pumpability}
Pierre-{\'E}tienne Meunier and Damien Regnault.
\newblock A pumping lemma for noncooperative self-assembly.
\newblock Arxiv preprint: \href{http://arxiv.org/abs/1312.6668}{\tt
  arXiv:1312.6668}.

\bibitem{STOC2017}
Pierre{-}{\'{E}}tienne Meunier and Damien Woods.
\newblock The non-cooperative tile assembly model is not intrinsically
  universal or capable of bounded {T}uring machine simulation.
\newblock In {\em STOC: Proceedings of the 49th Annual {ACM} {SIGACT} Symposium
  on Theory of Computing}, pages 328--341, 2017.

\bibitem{Signals}
Jennifer~E. Padilla, Matthew~J. Patitz, Robert~T. Schweller, Nadrian~C. Seeman,
  Scott~M. Summers, and Xingsi Zhong.
\newblock Asynchronous signal passing for tile self-assembly: Fuel efficient
  computation and efficient assembly of shapes.
\newblock {\em International Journal of Foundations of Computer Science},
  25(4):459--488, 2014.
\newblock \href{http://arxiv.org/abs/1202.5012}{Arxiv preprint:
  \texttt{arxiv:1202.5012}}.

\bibitem{Patitz-2011}
Matthew~J. Patitz, Robert~T. Schweller, and Scott~M. Summers.
\newblock Exact shapes and {T}uring universality at temperature 1 with a single
  negative glue.
\newblock In {\em DNA 17: Proceedings of the Seventeenth International
  Conference on DNA Computing and Molecular Programming}, LNCS, pages 175--189.
  Springer, September 2011.
\newblock Arxiv preprint: \href{http://arxiv.org/abs/1105.1215}{\tt
  arXiv:1105.1215}.

\bibitem{SingleNegative}
Matthew~J. Patitz, Robert~T. Schweller, and Scott~M. Summers.
\newblock Exact shapes and turing universality at temperature 1 with a single
  negative glue.
\newblock In {\em Proceedings of the 17th international conference on DNA
  computing and molecular programming}, DNA'11, pages 175--189, Berlin,
  Heidelberg, 2011. Springer-Verlag.

\bibitem{Roth01}
Paul W.~K. Rothemund.
\newblock {\em Theory and Experiments in Algorithmic Self-Assembly}.
\newblock PhD thesis, University of Southern California, December 2001.

\bibitem{RothOrigami}
Paul W.~K. Rothemund.
\newblock {Folding {DNA} to create nanoscale shapes and patterns}.
\newblock {\em Nature}, 440(7082):297--302, March 2006.

\bibitem{RotWin00}
Paul W.~K. Rothemund and Erik Winfree.
\newblock The program-size complexity of self-assembled squares (extended
  abstract).
\newblock In {\em STOC: Proceedings of the thirty-second annual ACM Symposium
  on Theory of Computing}, pages 459--468, Portland, Oregon, United States,
  2000. ACM.

\bibitem{SolWin07}
David Soloveichik and Erik Winfree.
\newblock Complexity of self-assembled shapes.
\newblock {\em SIAM Journal on Computing}, 36(6):1544--1569, 2007.

\bibitem{thubagere2017cargo}
Anupama~J Thubagere, Wei Li, Robert~F Johnson, Zibo Chen, Shayan Doroudi,
  Yae~Lim Lee, Gregory Izatt, Sarah Wittman, Niranjan Srinivas, Damien Woods,
  et~al.
\newblock A cargo-sorting dna robot.
\newblock {\em Science}, 357(6356):eaan6558, 2017.

\bibitem{Winf98}
Erik Winfree.
\newblock {\em Algorithmic Self-Assembly of {D}{N}{A}}.
\newblock PhD thesis, California Institute of Technology, June 1998.

\bibitem{Winfree98simulationsof}
Erik Winfree.
\newblock Simulations of computing by self-assembly.
\newblock Technical Report Caltech CS TR:1998.22, California Institute of
  Technology, 1998.

\bibitem{yurke2000dna}
Bernard Yurke, Andrew~J Turberfield, Allen~P Mills, Friedrich~C Simmel, and
  Jennifer~L Neumann.
\newblock A {DNA}-fuelled molecular machine made of {DNA}.
\newblock {\em Nature}, 406(6796):605--608, 2000.

\end{thebibliography}

%\clearpage
%\appendix
%\section{Make your own large paths}
%\label{app}
%A python script is available to generate a \LaTeX document with the same kind of terminal assembly as in Figure~\ref{fig:terminal1}. For any value of the two integers $\para \geq 1$ and $\parb \geq 1$ used in this proof, this script is meant to be called in the following way:
%
%\vspace{1em} {\tt \$ python positive.py {\it k n}}
%
%\vspace{1em} For example, the drawing in Figure~\ref{fig:terminal1} was produced with:
%
%\vspace{1em} {\tt \$ python positive.py 3 10}
%
%\vspace{1em} The script is available on \href{https://github.com/P-E-Meunier/largepaths}{\tt https://github.com/P-E-Meunier/largepaths}.
%
%%\inputminted[mathescape=true]{py}{positive.py}
%
%\bibliographystyle{plain}
%\bibliography{t1_tight_bound}
\end{document}